\documentclass[submission,copyright,creativecommons]{eptcs}
\usepackage{breakurl}             
\usepackage{underscore}           
\usepackage{amsmath}
\usepackage[dvipdfmx]{graphicx}
\usepackage[x11names]{xcolor}
\usepackage{wrapfig}
\usepackage{tikz-cd}
\usepackage{comment}
\usepackage{tikz}
\usepackage[labelformat=simple]{subcaption}

\usepackage{xspace}
\usepackage{cmll}
\usepackage{proof}
\usepackage{cleveref}
\usepackage{stmaryrd}
\usepackage{scalerel}
\usepackage[noadjust]{cite}
\usepackage{mathbbol}
\usepackage{amsmath}
\usepackage{amsthm} 
\usepackage{amssymb}

\DeclareSymbolFontAlphabet{\mathbbm}{bbold}
\usepackage{accsupp} 
\newcommand*{\llbrace}{%
  \BeginAccSupp{method=hex,unicode,ActualText=2983}%
    \textnormal{\usefont{OMS}{lmr}{m}{n}\char102}%
    \mathchoice{\mkern-4.05mu}{\mkern-4.05mu}{\mkern-4.3mu}{\mkern-4.8mu}%
    \textnormal{\usefont{OMS}{lmr}{m}{n}\char106}%
  \EndAccSupp{}%
}
\newcommand*{\rrbrace}{%
  \BeginAccSupp{method=hex,unicode,ActualText=2984}%
    \textnormal{\usefont{OMS}{lmr}{m}{n}\char106}%
    \mathchoice{\mkern-4.05mu}{\mkern-4.05mu}{\mkern-4.3mu}{\mkern-4.8mu}%
    \textnormal{\usefont{OMS}{lmr}{m}{n}\char103}%
  \EndAccSupp{}%
}

\newsavebox{\lXbrace}
\savebox{\lXbrace}{$\llbrace$}
\newsavebox{\rXbrace}
\savebox{\rXbrace}{$\rrbrace$}
\def\lxbrace{\scalerel*{\usebox{\lXbrace}}{\llbrace}}
\def\rxbrace{\scalerel*{\usebox{\rXbrace}}{\rrbrace}}

\usetikzlibrary{intersections, calc, arrows.meta}
\sloppy

\allowdisplaybreaks[4]

\newif\ifdraft\draftfalse

\newif\iffull\fullfalse 

\ifdraft
\newcommand\asd[1]{{\footnotesize \color[RGB]{105,10,130}[#1 -asd]}}
\newcommand\kzk[1]{{\footnotesize \color{blue}[#1 -kzk]}}
\newcommand\clovis[1]{{\footnotesize \color[RGB]{0,200,30}[#1 -clovis]}}
\newcommand\ichiro[1]{{\footnotesize \color[RGB]{105,10,10}[#1 -ichiro]}}

\newcommand\modi[4]{{\color{gray}[{\color{Brown4}#2}\ensuremath{\Rightarrow}{\color{red}#3}|{\color{green}#4{}
}-#1]}}
\newcommand\del[2]{\modi{#1}{#2}{}{}}
\else 
\newcommand\asd[1]{}
\newcommand\kzk[1]{}
\newcommand\clovis[1]{}
\newcommand\ichiro[1]{}

\newcommand\modi[4]{#3}
\newcommand\del[2]{}
\fi

\renewcommand{\cref}[1]{\Cref{#1}}
\creflabelformat{enumi}{#2(#1)#3}
\crefname{theorem}{Theorem}{Theorems}
\newcommand{\dt}[1][]{\textbf{(#1)}} 
\newcommand{\dtb}[1]{\textbf{(#1)}} 

\newcommand{\uval}{\star} 
\newcommand{\C}{\mathbb{C}}
\newcommand{\D}{\mathbb{D}}
\newcommand{\F}{\mathbb{F}} 
\newcommand{\U}{\mathbb{U}} 
\newcommand{\Ft}{\F_\textnormal{tr}} 
\newcommand{\Fs}{\F_\textnormal{smc}} 
\newcommand{\Usig}{\U_\textnormal{sig}} 
\newcommand{\Fsig}{\F_\textnormal{sig}} 
\newcommand{\sig}{\mathrm{sig}}

\newcommand{\set}[1]{\left\{{#1}\right\}}
\newcommand{\setcomp}[2]{\left\{{#1}\,\middle|\,{#2}\right\}}
\newcommand{\us}[1]{{|}#1{|}} 
\newcommand{\Nat}{{\mathbb N}}
\newcommand{\Nato}{\Nat_{\geq 1}}
\newcommand{\Natm}[1]{\Nat_{#1}}

\newcommand{\nset}[1]{[#1]}
\newcommand{\vect}[2]{\mathbf{#1}_{#2}}
\newcommand{\ol}[1]{\overline{#1}}
\newcommand{\defeq}{:=}

\newcommand{\op}{\mathit{op}}
\renewcommand{\parallel}{\oplus}
\newcommand{\streve}[1]{\mathrm{Str}_{\eve}(#1)}
\newcommand{\stradam}[1]{\mathrm{Str}_{\adam}(#1)}

\newcommand{\lplus}[2]{#2^{\downarrow #1}}
\newcommand{\lose}{\mathbf{lose}}
\newcommand{\g}[1]{\mathcal{#1}}
\newcommand{\Pe}{\mathrm{Play}_{\eve}}
\newcommand{\Pa}{\mathrm{Play}_{\adam}}
\newcommand{\play}[3]{\mathrm{play}^{#2,#3}_{#1}}

\newcommand{\dr}{\mathbbm{r}}
\newcommand{\dl}{\mathbbm{l}}
\newcommand{\dropg}{\mathbbm{r}}

\newcommand{\eve}{\exists}
\newcommand{\adam}{\forall}

\newcommand{\bfn}[2][]{\overline{#2}^{#1}}

\newcommand{\nfl}[2][]{#2^{#1}_{\dl}}
\newcommand{\nfr}[2][]{#2^{#1}_{\dr}}
\newcommand{\dual}[1]{#1^{\bot}}
\newcommand{\coset}[1]{[#1]_{\sim}}
\newcommand{\ID}{\mathcal{I}}
\newcommand{\initf}[1]{!^{\emptyset}_{#1}}
\newcommand{\leaves}[1]{\mathrm{leaves}(#1)}
\newcommand{\leavesf}[1]{\mathrm{leaves}^{\fix}(#1)}
\newcommand{\comp}[4]{\mathrm{SRT}^{\fix}(#1, #2, #3, #4)} 
\newcommand{\tcomp}[5]{\mathrm{SRT}(#1, #2, #3, #4, #5)}

\newcommand{\ccsp}{\Sigma_{\parity}}
\newcommand{\emptyword}{\varepsilon}
\newcommand{\nd}[3][]{\mathrm{n}^{#1}_{#2,#3}}
\newcommand{\drf}[1]{\overrightarrow{#1}}
\newcommand{\dlf}[1]{\overleftarrow{#1}}
\newcommand{\cplus}{\oplus} 

\newcommand{\hcomp}{*} 
\newcommand{\sym}{\sigma} 
\newcommand{\tp}[1]{#1^{\dagger}} 
\newcommand{\tr}{\mathrm{tr}}
\newcommand{\sm}{\mathrm{smc}}
\newcommand{\trace}[4]{\tr^{#4}_{#1;#2,#3}}
\newcommand{\tropg}[3]{\trace{#1}{#2}{#3}{}
}
\newcommand{\trcopg}[3]{\trace{#1}{#2}{#3}{\tcopg}}
\newcommand{\trsopg}[3]{\trace{#1}{#2}{#3}{\GM}}
\newcommand{\relto}{-\mspace{-13mu}+\mspace{-18mu}\to}
\newcommand{\id}{\mathrm{id}}
\newcommand{\Id}{\mathrm{Id}} 
\newcommand{\inj}{\mathrm{in}}

\newcommand{\Int}{\mathrm{Int}}
\newcommand{\Prop}{\mathbf{Prop}}
\newcommand{\Cprop}[1]{#1\text{-}\Prop}

\newcommand{\Csig}[1]{#1\text{-}\mathbf{Sig}}
\newcommand{\Cptcc}{\mathbf{CpCC}}
\newcommand{\CompCpt}[3][]{(#2 \circ #3)_{#1}}
\newcommand{\Comp}[2]{#1 \circ #2}
\newcommand{\csmt}[1]{\(#1\)-SMT\xspace}

\newcommand{\munit}{\mathbf{I}}
\newcommand{\assoc}{\mathbf{a}}
\newcommand{\lunit}{\mathbf{l}}
\newcommand{\runit}{\mathbf{r}}
\newcommand{\parity}{M}
\newcommand{\dunit}{d} 
\newcommand{\dcounit}{e} 

\newcommand{\Set}{\mathbf{Set}}
\newcommand{\Cat}{\mathbf{Cat}}
\newcommand{\FinPreord}{\mathbf{FinPreord}}
\newcommand{\scottl}{\mathbf{ScottL}}
\newcommand{\fscottl}{\mathbf{FinScottL}}
\newcommand{\colmm}{\square_M}
\newcommand{\GM}{\mathbf{GM}}
\newcommand\Pfin{P} 
\newcommand\SPfin{\mathcal{P}} 
\newcommand{\fix}{\mathrm{fix}}
\newcommand{\cpt}{\mathrm{cpt}} 
\newcommand{\iso}{g}
\newcommand{\adduc}[1]{#1^{\mathrm{de}}}
\newcommand{\adduce}{\adduc{E}_M}
\newcommand{\adducep}[1]{\adduc{E}_{#1}}
\newcommand{\adduct}[1]{\adduc{T}_{#1}}
\newcommand{\opgsym}{\mathrm{opg}}
\newcommand{\opgsig}{\Sigma^{\opgsym}_{M}}
\newcommand{\opgequ}{E^{\opgsym}_{M}}
\newcommand{\denotlr}[2]{%
\def\core{{\strut}#2{\strut}}%
\scalerel[#1]{\lxbrace}{\core}\scalerel*[#1]{\rxbrace}{\core}}
\newcommand{\deplay}[3][1.2ex]{\denotlr{#1}{#2}_{#3}}
\newcommand{\destr}[2][1.2ex]{\denotlr{#1}{#2}}
\newcommand{\subopg}[1]{#1}


\newcommand{\opg}{\mathbf{OPG}_M}
\newcommand{\topg}{\opg^{\dropg}}
\newcommand{\copg}{\F(\opgsig,\opgequ)}
\newcommand{\ints}{\mathrm{int}}
\newcommand{\inls}{\mathrm{inl}}
\newcommand{\intsig}[1]{#1^{\ints}}
\newcommand{\intsigp}[2]{#1^{\ints_{#2}}}
\newcommand{\tcopg}{\Ft(\intsig{\ccsp}) }
\newcommand{\tsopg}{\tsopgnoop^{\op}}
\newcommand{\tsopgnoop}{\fscottl_{!_M}}
\newcommand{\sopg}{\Int(\tsopg)}
\newcommand{\cto}{\mathcal{R}_M}
\newcommand{\tcto}{\mathcal{R}^{\dr}_M}
\newcommand{\tots}{\mathcal{W}^{\dr}_M}
\newcommand{\ots}{\mathcal{W}_M}
\newcommand{\cts}[1]{\left\llbracket{#1}\right\rrbracket_M}

\newcommand{\TrSMC}{\mathbf{TrSMC}}
\newcommand{\SMC}{\mathbf{SMC}}
\newcommand{\TrSMCg}{\TrSMC_g}
\newcommand{\CanFunc}[1]{\eta^{\Int}_{#1}}

\newlength{\hlength}
\newlength{\vlength}
\newlength{\nodesize}
\setlength{\hlength}{1.5cm}
\setlength{\vlength}{0.5cm}
\setlength{\nodesize}{0.5cm}
\newcommand{\leftpos}[1]{#1}
\newcommand{\rightpos}[1]{#1'}

\theoremstyle{plain}
\newtheorem{theorem}{Theorem}[section]
\newtheorem*{theorem*}{Theorem}
\newtheorem{proposition}[theorem]{Proposition}
\newtheorem{lemma}[theorem]{Lemma}
\newtheorem{corollary}[theorem]{Corollary}

\theoremstyle{definition}
\newtheorem{definition}[theorem]{Definition}
\newtheorem{example}[theorem]{Example}
\newtheorem{remark}[theorem]{Remark}

\title{A Compositional Approach to Parity Games\footnote{The full version of the present paper is available at \url{https://group-mmm.org/\~kazuki/}.}\ \footnote{The authors are supported by ERATO HASUO
Metamathematics for Systems Design Project (No.\ JPMJER1603), JST Moonshot R\&D (JPMJMS2033-04), and JSPS KAKENHI Grant No.\ 18K11156.}}
\author{Kazuki Watanabe
\institute{The Graduate University for Advanced Studies (SOKENDAI)\\
Hayama, Japan}
\institute{National Institute of Informatics\\
Tokyo, Japan}
\email{kazukiwatanabe@nii.ac.jp}
\and
Clovis Eberhart
\institute{National Institute of Informatics\\
Tokyo, Japan}
\institute{Japanese-French Laboratory of Informatics\\
IRL 3527, CNRS, Tokyo, Japan}
\email{eberhart@nii.ac.jp}
\and
Kazuyuki Asada
\institute{Tohoku University\\
Sendai, Japan}
\email{asada@riec.tohoku.ac.jp}
\and
Ichiro Hasuo
\institute{The Graduate University for Advanced Studies (SOKENDAI)\\
Hayama, Japan}
\institute{National Institute of Informatics\\
Tokyo, Japan}
\institute{Ritsumeikan University\\
Kusatsu, Japan}
\email{i.hasuo@acm.org}
}

\begin{document}
\maketitle

\begin{abstract}
  In this paper, we introduce open parity games, 
  which is a compositional approach to parity games.
  This is achieved by adding open ends
  to the usual notion of parity games.
  We introduce the category of open parity games, which is
  defined using standard definitions for graph games.
  We also define a graphical language for open parity games as a prop,
  which have recently been used in many applications as graphical
  languages.
  We introduce a suitable semantic category inspired by the work by
  Grellois and Melli{\`e}s on the semantics of higher-order model checking.
  Computing the set of winning positions in open parity games yields a
  functor to the semantic category.
  Finally, by interpreting the graphical language in the semantic
  category, we show that this computation can be carried out
  compositionally.
\end{abstract}

\section{Introduction}
\ichiro{The first paragraphs can be further extended. Paragraph 1: parity games. Paragraph 2: compositionality in model checking and parity games (earlier ones require finding an intermediate spec; Grellois \& Mellies is different, takes a categorical step). Paragraph 3: props. Paragraph 4: hinting what we do using the context that we discussed (in other words, why these existing results can give an answer to our question). Paragraph 5: What we do. }

\emph{Parity game} is a major tool in theoretical computer science.  Many formal verification problems such as model checking, satisfiability, etc.---can be reduced to solving parity games~\cite{wilke2001alternating}, where alternation of least and greatest fixed point operators in a specification is modeled by the parity winning condition. Efficient solutions of parity games, therefore, benefit many problems; recent algorithmic works include~\cite{CzerwinskiDFJLP19}.

In this paper, we are interested in \emph{compositionality} in formal verification in general, and in parity games in particular. It means that the property of a big system can be deduced from those of its constituent parts. One benefit is \emph{efficiency}: compositionality can yield an efficient divide-and-conquer algorithm. Another is \emph{maintainability}: compositional verification  explicates an assumption that each subsystem must satisfy for the safety of the whole system; a subsystem can then be replaced  freely as long as the local assumption is satisfied.

Compositional methods in model checking have been pursued in the literature, such as~\cite{ClarkeLM89,KwiatkowskaNPQ13}. Many of those methods require a user to provide interfaces between subsystems, either as systems~\cite{ClarkeLM89} or as specifications~\cite{KwiatkowskaNPQ13}. The role of compositionality is stressed in \emph{higher-order model checking (HOMC)}~\cite{grellois2015finitary,DBLP:conf/csl/TsukadaO14}, too, where intermediate results are combined along typing rules. 

In this paper, influenced by the semantical constructs from~\cite{grellois2015finitary}, we introduce a categorical framework in which parity games are both presented and  solved in a compositional manner. The presentation is by a \emph{prop}~\cite{maclane1965categorical} (products and permutations category), a categorical notion of ``monoidal'' algebraic structure. This categorical presentation enables us to formulate compositionality as the preservation of suitable structures of certain functors. It also enables us to exploit general categorical structures (traced, compact closed, etc.) and properties (such as freeness). 
 The use of props as graphical languages for various mathematical structures has been actively pursued recently (such as signal flow diagrams, matrices, and network games)~\cite{bonchi2017,bonchi2019graphical,lavore2021}; the current work adds a new item to the list, namely parity games.

\vspace*{.3em}
\noindent\textbf{Contribution.}\;
The outline of our paper is Fig.~\ref{fig:diagram}.
 We extend parity games with so-called \emph{open ends} so that we can compose them. The resulting notion \emph{(open parity game)} is organized in a compact closed category denoted by $\opg$. 
As a graphical language for open parity games, we use the prop $\copg$ freely generated by a suitable monoidal (algebraic) theory \((\opgsig,\opgequ)\). The other category $\sopg$ in Fig.~\ref{fig:diagram} originates from~\cite{grellois2015finitary}---it is our \emph{semantic category} that tells
which player is winning for (closed) parity games; for open parity games, it
provides intermediate results of a suitable granularity to
decide winners later.

 Our main theorem (Thm.~\ref{thm:triangleOfFunctors}) is the commutativity of Fig.~\ref{fig:diagram}; it says that the semantics of parity games $\ots$---defined as usual in terms of plays, strategies, and the parity acceptance condition---can be computed compositionally by a compact closed functor $\cts{-}$.
 The last compositional computation is illustrated in Ex.~\ref{ex:extExample}. 
After all, in the framework in Fig.~\ref{fig:diagram}, one writes down a parity game as a composition of smaller ones,  in the graphical language of the prop $\copg$; when it comes to solving games, the winning positions for 
larger games are computed from those of the smaller ones, using that $\cts{-}$ preserves composition.

\begin{figure}[t]
  \begin{center}
    \begin{minipage}[m]{0.3\hsize}
      \centering
    \begin{tikzpicture}[
      innode/.style={draw, rectangle, minimum size=\nodesize},
      label/.style={inner sep=0},
      interface/.style={inner sep=0}
      ]
      \node[innode] (ina) {$a$};
      \node[innode] (inb) at ($(ina)+(\hlength,0)$) {$b$};
      \node[interface,anchor=east] (lt) at ($(ina)+(-\hlength/2,\vlength)$) {$\leftpos{1}$}; 
      \node[interface,anchor=east] (lb) at ($(ina)+(-\hlength/2,-\vlength)$) {$\leftpos{2}$}; 
      \node[interface,anchor=west] (r)  at ($(inb)+(\hlength/2,0)$) {$\rightpos{1}$}; 
      \node[label,anchor=south] (inalab) at (ina.north) {$\eve, 1$};
      \node[label,anchor=south] (inblab) at (inb.north) {$\adam, 2$};
      \path[->] (lt)  edge (ina)
                (ina) edge (lb)
                      edge[bend right] (inb)
                (inb) edge[bend right] (ina)
                      edge (r);
    \end{tikzpicture}
    \subcaption{An example of an open parity game.}
    \label{subfig:exOfOpg}
  \end{minipage}
  \begin{minipage}[m]{0.65\hsize}
    \centering
  \scalebox{0.9}{
    $\left(
	  \begin{tikzpicture}[
      baseline=(base.base),
      innode/.style={draw, rectangle, minimum size=\nodesize},
      label/.style={inner sep=0},
      interface/.style={inner sep=0}
      ]
      \node[innode] (ina) {$a$}; 
      \node[interface,anchor=east] (tl) at ($(ina)+(-\hlength/2,\vlength)$) {$\leftpos{1}$}; 
      \node[interface,anchor=east] (ml) at ($(ina)+(-\hlength/2,0)$) {$\leftpos{2}$}; 
      \node[interface,anchor=east] (bl) at ($(ina)+(-\hlength/2,-\vlength)$) {$\leftpos{3}$}; 
      \node[interface,anchor=west] (r)  at ($(ina)+(\hlength/2,0)$) {$\rightpos{1}$}; 
      \node[label,anchor=south] (inalab) at (ina.north) {$\adam, 3$};
      \node[interface] (base) at ($(inalab.north)!0.5!(bl.south)$) {$\phantom{\bullet}$};
      \path[->] (tl)  edge (ina)
                (ml)  edge (ina)
                (bl)  edge (ina)
                (ina) edge (r);
    \end{tikzpicture}  
    \right)$
    ;
    $\left(
    \begin{tikzpicture}[
      baseline=(base.base),
      innode/.style={draw, rectangle,minimum size=\nodesize},
      label/.style={inner sep=0},
      interface/.style={inner sep=0}
      ]
      \node[innode] (inb) {$b$}; 
      \node[interface,anchor=east] (l)  at ($(inb)+(-\hlength/2,0)$) {$\leftpos{1}$}; 
      \node[interface,anchor=west] (tr) at ($(inb)+(\hlength/2,\vlength)$) {$\rightpos{1}$}; 
      \node[interface,anchor=west] (br) at ($(inb)+(\hlength/2,-\vlength)$) {$\rightpos{2}$}; 
      \node[label,anchor=south] (inblab) at (inb.north) {$\eve, 2$};
      \node[interface] (base) at ($(inblab.north)!0.5!(br.south)$) {$\phantom{\bullet}$};
      \path[->] (l)   edge (inb)
                (inb) edge (tr)
                      edge (br);
    \end{tikzpicture} 
    \right)$
    \quad=\quad
    \begin{tikzpicture}[
      baseline=(base.base),
      innode/.style={draw, rectangle,minimum size=\nodesize},
      label/.style={inner sep=0},
      interface/.style={inner sep=0}
      ]
      \node[innode] (ina) {$a$}; 
      \node[innode] (inb) at ($(ina)+(\hlength,0)$) {$b$}; 
      \node[interface,anchor=east] (tl) at ($(ina)+(-\hlength/2,\vlength)$) {$\leftpos{1}$}; 
      \node[interface,anchor=east] (ml) at ($(ina)+(-\hlength/2,0)$) {$\leftpos{2}$}; 
      \node[interface,anchor=east] (bl) at ($(ina)+(-\hlength/2,-\vlength)$) {$\leftpos{3}$}; 
      \node[interface,anchor=west] (tr) at ($(inb)+(\hlength/2,\vlength)$) {$\rightpos{1}$}; 
      \node[interface,anchor=west] (br) at ($(inb)+(\hlength/2,-\vlength)$) {$\rightpos{2}$}; 
      \node[label,anchor=south] (inalab) at (ina.north) {$\adam, 3$};
      \node[label,anchor=south] (inblab) at (inb.north) {$\eve, 2$};
      \node[interface] (base) at ($(inalab.north)!0.5!(bl.south)$) {$\phantom{\bullet}$};
      \path[->] (tl)  edge (ina)
                (ml)  edge (ina)
                (bl)  edge (ina)
                (ina) edge (inb)
                (inb) edge (tr)
                      edge (br);
    \end{tikzpicture} 
  }
\subcaption{An example of sequential composition.}
  \label{fig:compo}
\end{minipage}
  \end{center}
  \caption{Examples of open parity games.}
  \label{fig:exofOpgAndScomp}
\end{figure}
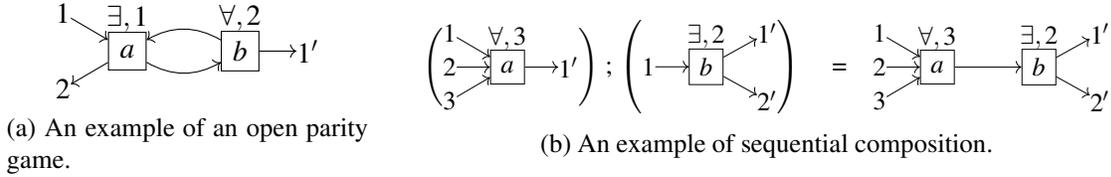

\begin{figure}
  \begin{center}
    \scalebox{1.0}{
    \begin{minipage}[m]{0.29\hsize}
      \centering
      \begin{tikzpicture}[
        innode/.style={draw, rectangle, minimum size=\nodesize},
        label/.style={inner sep=0},
        interface/.style={inner sep=0}
        ]
        \node[innode] (ina) {$a$};
        \node[innode] (inb) at ($(ina)+(\hlength,0)$) {$b$};
        \node[interface] (inc) at ($($(ina)+(0,3*\vlength/2)$)!($(ina)!0.5!(inb)$)!($(ina)+(\hlength,3*\vlength/2)$)$) {};
        \node[interface,anchor=east] (l)  at ($(ina)+(-\hlength/2,0)$) {$\leftpos{1}$}; 
        \node[interface,anchor=east] (bl) at ($(ina)+(-\hlength/2,-\vlength)$) {$\leftpos{2}$}; 
        \node[interface,anchor=west] (r)  at ($(inb)+(\hlength/2,0)$) {$\rightpos{1}$}; 
        \node[label,anchor=south] (inalab) at (ina.north) {$\adam, 3$};
        \node[label,anchor=south] (inblab) at (inb.north) {$\eve, 2$};
        \path[->] (l)   edge (ina)
                  (bl)  edge (ina)
                  (ina) edge (inb)
                  (inb) edge (r);
        \draw ($(inb.north east)!0.33!(inb.south east)$) arc [radius=3*\vlength/4-\nodesize/12,start angle=270,end angle=450];
        \draw[<-] ($(ina.north west)!0.33!(ina.south west)$) arc [radius=3*\vlength/4-\nodesize/12,start angle=270,end angle=90];
        \path ($(inb.north east)!(inc)!(inb.south east)$) edge ($(ina.north west)!(inc)!(ina.south west)$);
      \end{tikzpicture}
      \subcaption{}
      \label{subfig:exTraceCopg}
    \end{minipage}
    \begin{minipage}[m]{0.34\hsize}
      \centering
      \begin{tikzpicture}[
        innode/.style={draw, rectangle, minimum size=\nodesize},
        label/.style={inner sep=0},
        interface/.style={inner sep=0}
        ]
        \node[innode] (ina) {$a$};
        \node[innode] (inb) at ($(ina)+(\hlength,0)$) {$b$};
        \node[interface] (inc) at ($($(ina)+(0,2*\vlength)$)!($(ina)!0.5!(inb)$)!($(ina)+(\hlength,2*\vlength)$)$) {};
        \node[interface] (l)  at ($(ina)+(-\hlength/2,0)$) {};
        \node[interface] (bl) at ($(ina)+(-\hlength/2,-\vlength)$) {};
        \node[interface] (tl) at ($(ina)+(-\hlength/2,\vlength)$) {};
        \node[interface] (r)  at ($(inb)+(\hlength/2,0)$) {};
        \node[interface] (tr) at ($(inb)+(\hlength/2,\vlength)$) {};
        \node[interface] (cl) at ($($(ina.north)!(inc)!(ina.south)$)+(-\hlength/2,0)$) {};
        \node[interface] (cr) at ($($(inb.north)!(inc)!(inb.south)$)+(\hlength/2,0)$) {};
        \node[interface] (ll) at ($(l)+(-3*\hlength/4,0)$) {};
        \node[interface] (lr) at ($(l)+(-\hlength/4,0)$) {};
        \node[interface] (lbl) at ($(l)+(-3*\hlength/4,-\vlength)$) {};
        \node[interface] (lbr) at ($(l)+(-\hlength/4,-\vlength)$) {};
        \node[interface] (rl) at ($(r)+(\hlength/4,0)$) {};
        \node[interface] (rr) at ($(r)+(3*\hlength/4,0)$) {};
        \node[label,anchor=south] (inalab) at (ina.north) {$\adam, 3$};
        \node[label,anchor=south] (inblab) at (inb.north) {$\eve, 2$};
        \path[->] (l)   edge (ina)
                  (tl)  edge (ina)
                  (bl)  edge (ina)
                  (ina) edge (inb)
                  (inb) edge (r)
                        edge (tr)
                  (cr)  edge (cl)
                  (ll)  edge (lr)
                  (lbl) edge (lbr)
                  (rl)  edge (rr);
        \draw[->] ($(rl.south)!(tr)!(rl.north)$) arc [radius=\vlength/2,start angle=270,end angle=450];
        \draw[<-] ($(lr.south)!(tl)!(lr.north)$) arc [radius=\vlength/2,start angle=270,end angle=90];
        \node () at ($(inc.south)!($(inc.south)!0.5!(tl.north)$)!(inc.north)$) {$\parallel$};
        \node (lm) at ($(ll)!0.5!(lr)$) {};
        \node () at ($(lm.south)!($(l)!0.5!(tl)$)!(lm.north)$) {$\parallel$};
        \node (rm) at ($(rl)!0.5!(rr)$) {};
        \node () at ($(rm.south)!($(r)!0.5!(tr)$)!(rm.north)$) {$\parallel$};
        \node (ml) at ($(l)!0.5!(lr)$) {};
        \node () at ($(ml.north)!($(inc.north)!0.5!(bl.south)$)!(ml.south)$) {;};
        \node (mr) at ($(r)!0.5!(rl)$) {};
        \node () at ($(mr.north)!($(inc.north)!0.5!(bl.south)$)!(mr.south)$) {;};
      \end{tikzpicture}
      \subcaption{}
      \label{subfig:traceCopg}
    \end{minipage}
    \begin{minipage}[m]{0.32\hsize}
      \centering
      \begin{tikzpicture}[
        innode/.style={draw, rectangle, minimum size=\nodesize},
        label/.style={inner sep=0},
        interface/.style={inner sep=0}
        ]
        \node[innode] (ina) {$a$};
        \node[innode] (inb) at ($(ina)+(\hlength,0)$) {$b$};
        \node[interface] (inc) at ($($(ina)+(0,2*\vlength)$)!($(ina)!0.5!(inb)$)!($(ina)+(\hlength,2*\vlength)$)$) {};
        \node[interface] (l)  at ($(ina)+(-\hlength/2,0)$) {};
        \node[interface] (bl) at ($(ina)+(-\hlength/2,-\vlength)$) {};
        \node[interface] (tl) at ($(ina)+(-\hlength/2,\vlength)$) {};
        \node[interface] (r)  at ($(inb)+(\hlength/2,0)$) {};
        \node[interface] (tr) at ($(inb)+(\hlength/2,\vlength)$) {};
        \node[interface] (cl) at ($($(ina.north)!(inc)!(ina.south)$)+(-\hlength/2,0)$) {};
        \node[interface] (cr) at ($($(inb.north)!(inc)!(inb.south)$)+(\hlength/2,0)$) {};
        \node[interface] (ll) at ($(l)+(-\hlength/2,0)$) {};
        \node[interface] (lbl) at ($(l)+(-\hlength/2,-\vlength)$) {};
        \node[interface] (rr) at ($(r)+(\hlength/2,0)$) {};
        \node[label,anchor=south] (inalab) at (ina.north) {$\adam, 3$};
        \node[label,anchor=south] (inblab) at (inb.north) {$\eve, 2$};
        \path[->] (ll)  edge (ina)
                  (tl)  edge (ina)
                  (lbl) edge (ina)
                  (ina) edge (inb)
                  (inb) edge (rr)
                        edge (tr);
        \draw ($(rl.south)!(tr)!(rl.north)$) arc [radius=\vlength/2,start angle=270,end angle=450];
        \draw[<-] ($(lr.south)!(tl)!(lr.north)$) arc [radius=\vlength/2,start angle=270,end angle=90];
        \path ($(rl.south)!(inc)!(rl.north)$) edge ($(lr.south)!(inc)!(lr.north)$);
      \end{tikzpicture}
      \subcaption{}
      \label{subfig:traceTsmc}
    \end{minipage}
  }
  \end{center}
  \caption{An example of a cycle and its decomposition using the compact
  closed structure and the traced  monoidal structure.}
\end{figure}
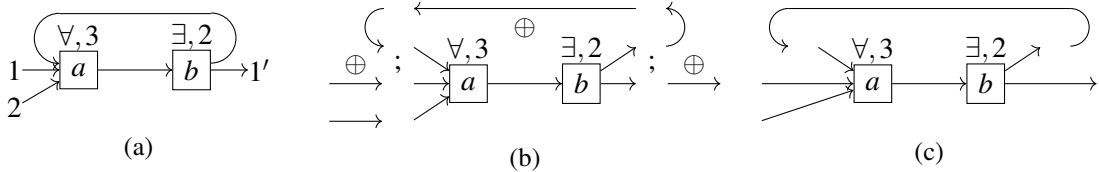

We illustrate our notion of \emph{open parity games} that populates the category $\opg$. Open parity games  are parity games that come additionally with
 interfaces called \emph{open ends},
along which they can be composed.
An example is in Fig.~\ref{subfig:exOfOpg}.
The domain interface consists of two \emph{open ends}, $1$ and
$2$, and the codomain interface simply of $1'$, while the
internal positions are $a$ and $b$, each equipped with a role and a
priority, as usual in parity games.
We give an example of sequential composition in Fig.~\ref{fig:compo}.
There, arrows are composed through the intermediate interfaces \(1', 1\) between the two
games.
We also have a parallel composition \(\parallel\),
and Fig.~\ref{subfig:traceCopg} gives an example of how a cycle (in
Fig.~\ref{subfig:exTraceCopg}) can be defined using sequential and
parallel composition.

\begin{figure}
  \centering
  \begin{tikzpicture}
    \node (sopg) at (0,0) {$\sopg$};
    \node (copg) at (-3,1) {$\copg$};
    \node[anchor=east,inner sep=0] (copg2) at (copg.west) {$\big(\Int(\tcopg)\simeq \big)$};
    \node (opg)  at (3,1) {$\opg$};
    \node[anchor=west,inner sep=0] (opg2) at (opg.east) {$\big(\simeq \Int(\topg)\big)$};
    \path[->] (copg) edge node[above] {\scriptsize $\cto$} (opg)
                     edge node[below left] {\scriptsize $\cts{-}$} (sopg)
              (opg)  edge node[below right] {\scriptsize $\ots$} (sopg);
  \end{tikzpicture}
  \caption{An outline. $\cto$ is the \emph{realization functor} that maps a string diagram to an open parity game; $\ots$ is the \emph{winning position functor} which extends the usual definition of winning positions in parity games; and $\cts{-}$ is the \emph{interpretation functor}.}
  \label{fig:diagram}
\end{figure}

The technical key in Fig.~\ref{fig:diagram} is the identification of
\emph{compact closed} structures.
All the three categories are compact closed;
moreover, we identify the prop $\copg$  to be a \emph{free} compact
closed category in a suitable sense. The functors $\cto$ and $\cts{-}$ arise by the freeness; the commutativity is proved by the freeness, too.

In this paper, we find a new application of props as graphical languages in parity games. It allows one to solve parity games in a compositional manner (Ex.~\ref{ex:extExample}), thanks also to the identification of the right semantical domain (namely $\sopg$) that retains the right level of information in intermediate results. Such compositional solution has multiple potential applications. Firstly,
the categorical structure we identify has a lot in common with those used for HOMC~\cite{grellois2015finitary,DBLP:conf/csl/TsukadaO14}. Therefore we expect we can streamline known HOMC algorithms and reveal their categorical essences. 
Secondly, we will pursue algorithmic applications, such as efficient divide-and-conquer algorithms and those which accommodate blackbox components as part of a game.

\vspace*{.3em}
\noindent\textbf{Organization.}\;
In \S{}\ref{sec:opg}, we introduce open parity
games.
In \S{}\ref{sec:coloredProp}, we define the graphical language
$\copg$ and the realization functor $\cto$.
In \S{}\ref{sec:SemanticCategory}, we define the semantic category
$\sopg$ and the interpretation functor $\cts{-}$.
In \S{}\ref{sec:DenotationalFunctor}, we
 define the winning
position functor $\ots$ and establish the triangle in Fig.~\ref{fig:diagram}.
We also exhibit an example of compositional solution of a parity game in Ex.~\ref{ex:extExample}.
We conclude in \S{}\ref{sec:conclusion}.

\vspace*{.3em}
\noindent\textbf{Related Work.}\;
We use $\copg$ as the graphical language for open parity games.
The use of monoidal categories as graphical languages dates back
to~\cite{penrose1971applications}.
There have been numerous such languages; see~\cite{selinger2010survey}
for a survey.
Languages that compositionally describe graph-like structures are
of particular interest to us:~\cite{fiore2013algebra} describes the
algebra of directed acyclic graphs but does not consider cyclic
structures;~\cite{baez2020open} describes open Petri nets, and
compositionality is achieved ``externally'' by the use of cospans.

In particular, props have been used extensively as graphical languages.
They define graphical languages as models for some
mathematical structures (%
signal flow diagrams~\cite{bonchi2017},
networks~\cite{baez2017prop},
Petri nets~\cite{bonchi2019},
automata~\cite{piedeleu2020string},
and the ZX-calculus~\cite{DBLP:conf/mfcs/CaretteHP19,carette2020}
respectively) and prove that the graphical language is
equivalent to the category that they are studying.
They can therefore transfer properties of the graphical language (for
example, decidability of equivalence of diagrams) to the category they
are studying.
In our work, however, we use the graphical language for
expressing open parity games compositionally, and we are not necessarily
interested in equivalence between $\copg$ and $\opg$ (see also Rem.~\ref{remark:notFaithful}).

Note that our work uses a 2-colored prop for modeling the two possible
directions for edges in an open parity game,
while~\cite{bonchi2019,bonchi2017,lavore2021} only have a single type
of edges, which are undirected.
In~\cite{piedeleu2020string}, the authors use a colored prop to
model different kinds of edges, and in particular, they use two colors to model directed edges.

Kissinger gives a general construction of
the free traced symmetric monoidal categories $\Ft(\Sigma)$,
which are also props~\cite{kissinger2014abstract}.
This is related to the free compact closed category \(\copg\) in the present paper,
as explained in \S{}\ref{sec:coloredProp}.
Free traced monoidal categories are also given in~\cite{katsumata2010categorical}
in the study of attribute grammars, where
many-to-one signatures are treated while Kissinger's paper treats many-to-many signatures.

Another related work is~\cite{ghani2018}, which introduces the concept
of composing games. 
Their approach is mainly applied to economic models, and they use a symmetric monoidal category for compositional game theories. 
However, their framework is different from ours in the sense that the
objects along which games are composed have different meanings: in our
framework, games are composed along graph edges, while in theirs,
games are composed along interfaces describing player choices, game
utility, etc.

\vspace*{.3em}
\noindent\textbf{Notation.}\;
We use the following notations. (i) $\nset{m} \defeq \{1,\dots, m\}$.
(ii) We write the unit and the multiplication of a free monoid \(C^{\ast}\) as \(\emptyword\) and \(\cdot\).
(iii) For \(w\in \{\dr, \dl\}^{\ast}\),
\(\drf{w}\) is the number of \(\dr\) in \(w\),
and \(\dlf{w}\) is the number of \(\dl\) in \(w\).
(iv) $\Natm{M} \defeq \{0,\dots, M\}$ where $M\in \Nat$.
(v) $\Nat_{\geq 1} \defeq \{ i\in \Nat\mid i \geq 1 \}$.
(vi) ${\dual{\dr}} \defeq {\dl}$, ${\dual{\dl}} \defeq {\dr}$, and $\dual{(w_1\cdots w_n)} \defeq \dual{w_n}\cdots \dual{w_1}$ where $w_1\cdots w_n\in \{\dr, \dl\}^{\ast}$.
(vii) $\initf{X}$ is the unique function from $\emptyset$ to \(X\).
(viii) For \(f: A \to C\) and \(g: B \to C\), we write \([f,g]: A+B \to C\)
for the copairing function, i.e., \([f,g](a)=f(a)\) and \([f,g](b)=g(b)\).
(ix) We often omit the injections \(\inj_i : A_i \to A_1 + A_2\), i.e., for example, we may write \(a \in A+B\) for
\(\inj_1(a) \in A+B\), if no confusion happens.


\section{Categories of Open Parity Games}\label{sec:opg}

We introduce the notion of open parity games.
It extends parity games by adding \emph{open ends} to the game,
which are used to define composition of parity games; specifically, we
obtain a compact closed category $\opg$ of open parity games
(Def.~\ref{def:OPG}).

In this paper, we often encounter situations where the structure of
interest can be organized both as a \emph{traced symmetric monoidal
category (TSMC)} or as a \emph{compact closed category (CpCC)}.
(Specifically, we have three such classes of structures, yielding three TSMCs and CpCCs.
See Fig.~\ref{fig:diagram}.)
While our applicational interests lie in the CpCC structures, we work
mostly with the TSMC structures for technical convenience, and use the
Int construction~\cite{joyal1996} to define the CpCC structures from
them ($\opg$ is defined as $\Int(\topg)$, and we show that $\copg$ is
equivalent to $\Int(\tcopg)$ for some signature $\intsig{\ccsp}$).

\subsection{Open Parity Games}

Recall that a \emph{parity game} is a tuple $\mathcal{A} = (Q, E, \rho, \omega)$ where
$(Q, E)$ is a finite directed graph of \emph{positions} and
\emph{edges}, $\rho \colon Q \to \set{\eve,\adam}$ is the \emph{role
function}, and $\omega : Q \rightarrow \Nat$ is the \emph{priority
function}.
An \emph{infinite play} on $\mathcal{A}$ is an infinite sequence $q_0q_1\dots\in Q^{\Nat}$ such that, for all $i\geq 0$, $(q_i, q_{i+1})\in E$. 
A \emph{finite play} is defined similarly.
Let $\Pe$ and $\Pa$ be the sets of finite plays $q_0 \ldots q_n$ such
that $\rho(q_n) = \eve$ or $\rho(q_n) = \adam$, respectively.
An infinite play $q_0q_1\dots$ is \emph{winning} for $\eve$ if the
maximum priority appearing infinitely often in $\omega(q_0)
\omega(q_1) \ldots$ is even.
A finite play $q_0 \dots q_n$ is winning for $\eve$ if $\rho(q_n) =
\adam$.
A \emph{strategy} of $\eve$ is a partial function $\sigma_{\eve}:\Pe
\rightharpoonup Q$
such that $(q_n,\sigma_{\eve}(q_0\dots q_n))\in E$ if $\sigma_{\eve}(q_0\dots q_n)$ is defined.
For any position $q$ and strategies $\sigma_{\eve}$ and
$\sigma_{\adam}$, we denote by $\mathrm{play}^{\sigma_{\eve},
\sigma_{\adam}}_q $ the unique play starting from $q$ and consistent
with both $\sigma_{\eve}$ and $\sigma_{\adam}$.
A strategy $\sigma_{\eve}$ is \emph{winning} for $\eve$ from $q\in Q$
if for all strategies $\sigma_{\adam}$, $\mathrm{play}^{\sigma_{\eve},
\sigma_{\adam}}_q $ is winning for $\eve$.
A position $q\in Q$ is \emph{winning} for $\eve$ if there is a
strategy $\sigma_{\eve}$ winning for $\eve$ from $q$.
We define \emph{open} parity games by extending parity games with open ends.
\begin{definition}[open parity game]
\label{def:openParityGame}
An \emph{open parity game} from \(\bfn{m}\) to \(\bfn{n}\) is a tuple
$(\bfn{m},\bfn{n},Q,E,\rho,M,\omega)$ such that the following conditions are satisfied:
\begin{enumerate}
    \item $\bfn{m} = (\nfr{m},\ \nfl{m})$ and $\bfn{n} = (\nfr{n},\ \nfl{n})$ are pairs of natural numbers,
      where $\bfn{m}$ represents the domain interface of the
      game and $\bfn{n}$ the codomain interface. 
    \item $Q$ is a finite set, whose elements are called \emph{internal positions}.
    \item \label{item:OPGedge}
\(E\) is a relation
    \clovis{give the relation's domain a name}
	  $E\subseteq (\nset{\nfr{m}+\nfl{n}}+Q)\times (\nset{\nfr{n}+\nfl{m}}+Q) $, whose element is called an \emph{edge}.
	  Moreover, 
	  for any $s \in \nset{\nfr{m}+\nfl{n}}$, there is a unique $s' \in \nset{\nfr{n}+\nfl{m}}+Q$ such that $(s,s')\in E$;
	  and similarly 
	  for any $t \in \nset{\nfr{n}+\nfl{m}}$, there is a unique $t' \in \nset{\nfr{m}+\nfl{n}}+Q$ such that $(t',t)\in E$.
    \item $\rho$ is a function $\rho:Q\rightarrow\{\eve, \adam\}$, which assigns a \emph{role} to each internal position.
    \item \(M \in \Nat\) is called the \emph{maximal rank} and $\omega:
      Q\rightarrow\Nat_{M}$ is called the \emph{priority function}.
\end{enumerate}
We call an element of \((\nset{\nfr{m}+\nfl{n}} + \nset{\nfr{n}+\nfl{m}})+Q\) a \emph{position},
one of \(\nset{\nfr{m}+\nfl{n}} + \nset{\nfr{n}+\nfl{m}}\) an \emph{open end},
one of \(\nset{\nfr{m}+\nfl{n}}\) an \emph{entry position}, and
one of \(\nset{\nfr{n}+\nfl{m}}\) an \emph{exit position}.
\end{definition}
We extend the priority function \(\omega\) to 
\(\omega: (\nset{\nfr{m}+\nfl{n}} + \nset{\nfr{n}+\nfl{m}}) + Q\rightarrow\Nat_{M}\)
by \(\omega(i) = 0\) for \(i \in \nset{\nfr{m}+\nfl{n}} + \nset{\nfr{n}+\nfl{m}}\),
i.e., we define the priority of each open end to be 0.

\begin{example}
 The open parity game in Fig.~\ref{subfig:exOfOpg} is the tuple \((\bfn{m}, \bfn{n}, Q, E, \rho, 
 M, \omega)\) where
 \begin{alignat*}{7}
 &\bfn{m} = (1,1), \qquad
 \bfn{n} = (1,0), \qquad
 Q = \{a, b\},
 &\rho(a)&=\exists, \qquad
 &\rho(b)&=\forall,
 \\
 &1 \,E\, a, \qquad
 a \,E\, 2, \qquad
 a \,E\, b, \qquad
 b \,E\, a, \qquad
 b \,E\, 1', \qquad
 &\omega(a)&=1, \qquad
 &\omega(b)&=2, \quad\text{and} \quad
 &&M=2.
 \end{alignat*}
 In Fig.~\ref{subfig:exOfOpg},
 the open end $1$ is the entry position in
 \(\nset{\nfr{m}+\nfl{n}}=\nset{1+0}\).
 The open ends $1'$ and $2$ are the exit positions in \(\nset{\nfr{n}+\nfl{m}}=\nset{1+1}\).
 The two boxes are internal positions in $Q$, with annotations on  roles $\rho$ and priorities $\omega$. 
 As usual, \(E\) is depicted by arrows.
\end{example}

Condition~(\ref{item:OPGedge}) of
Def.~\ref{def:openParityGame} requires that a unique outgoing/incoming edge from/to an entry/exit position, respectively.
This condition can be  enforced by adding some dummy positions. 

The following definition is a first step towards introducing a trace
operator.
\begin{definition}[rightward open parity game]
\label{def:rightDirected}
  An open parity game $\mathcal{A} = (\bfn{m}, \bfn{n}, Q, E, \rho, 
  M, \omega)$ is \emph{rightward} 
  if $\bfn{m} = (\nfr{m}, \nfl{0})$
  and $\bfn{n}= (\nfr{n}, \nfl{0})$
 for some \(\nfr{m}\) and \(\nfr{n}\).
\end{definition}
In the last definition, we require each
open end in \(\bfn{m}\) and \(\bfn{n}\) to be headed in the right.
Note that we do not impose the same requirement on (internal) edges in
$E$---a rightward open parity game may contain cycles.

\subsection{A Traced Symmetric Monoidal Category of Rightward Open Parity Games
}

We shall first define the traced symmetric monoidal category $\topg$ of rightward open parity games.
It yields the compact closed category $\opg$ of open parity games by the Int construction (see Fig.~\ref{fig:diagram}).

In fact, we do so restricting the priorities to be below a certain natural number $M$, talking about $\topg$ and $\opg$. The reason for doing so is discussed in Rem.~\ref{remark:fixMaxRank}.

In what follows, we assume that
a given rightward open parity game \(\mathcal{A}\) is of the form
$\mathcal{A} = \bigl((m^{\mathcal{A}},0),\ (n^{\mathcal{A}},0),\ Q^{\mathcal{A}}, 
\ E^{\mathcal{A}},\ \rho^{\mathcal{A}},\ M,\ \omega^{\mathcal{A}}\bigr)$.
The convention also applies to \(\mathcal{B}\).

We need an equivalence relation on the set of rightward open parity
games to define $\topg$.
For our purpose here, we define the equivalence in terms of
structure-preserving bijections. It is easy to define an equivalence relation on open parity games in the same way.

\begin{definition}[equivalence relation $\sim$ on rightward open parity games]
  \label{def:equivRel}
  We define an equivalence relation $\sim$ on the set of rightward open parity games as follows: 
  $\mathcal{A}\sim\mathcal{B}$ if $m^{\mathcal{A}}= m^{\mathcal{B}}$, $n^{\mathcal{A}}= n^{\mathcal{B}}$, and there is a bijection $\eta:Q^{\mathcal{A}}\rightarrow Q^{\mathcal{B}}$ such that the following conditions are satisfied:
  (i) for $(s,t) \in (\nset{m^{\mathcal{A}}}+Q^{\mathcal{A}})\times (\nset{n^{\mathcal{A}}}+Q^{\mathcal{A}})$, $(s,t)\in E^{\mathcal{A}} \Longleftrightarrow (\bar{\eta}(s),\bar{\eta}(t)) \in E^{\mathcal{B}}$,
(ii) for $s\in Q^{\mathcal{A}}$, $\rho^{\mathcal{A}}(s) = \rho^{\mathcal{B}}(\eta(s))$, and 
(iii) for $s\in Q^{\mathcal{A}}$, $\omega^{\mathcal{A}}(s) = \omega^{\mathcal{B}}(\eta(s))$.
Here we extend \(\eta\) to \(\bar{\eta}: (\Nat + Q^{\mathcal{A}}) \to (\Nat + Q^{\mathcal{B}})\)
by \(\bar{\eta}(n)=n\) for \(n \in \Nat\).
\end{definition}

We define the category $\topg$ as follows.
Objects are natural numbers, and 
a morphism from \(m\) to \(n\) is an equivalence class \(\coset{\mathcal{A}}\) of rightward open
parity games from \((m,0)\) to \((n,0)\).
The identity and composition of morphisms are given by
\(\id_{n} \defeq \coset{\ID_{n}}\) and
\(\coset{\mathcal{A}};\coset{\mathcal{B}} \defeq \coset{\mathcal{A} ; \mathcal{B}}\),
where \(\ID_n\) and \(\mathcal{A} ; \mathcal{B}\) are given in Def.~\ref{def:identityGame} and Def.~\ref{def:seqCompGame} below, respectively.

\begin{definition}[identity]
  \label{def:identityGame}
  For \(n \in \Nat\), we define the \emph{identity game} $\ID_{n}$
  as \(\bigl((n, 0),\ (n, 0),\ \emptyset,\ E,\ \initf{\{\eve, \adam\}},\ M,\ \initf{\Nat_M} \bigr)\)
  where $E = \{(a, a)\mid a\in [n]\}$.
\end{definition}

\begin{wrapfigure}[6]{r}{0pt}
  \begin{minipage}[b]{5.5em}
    \vspace{-15mm}
    \centering
    \begin{tikzpicture}[
            innode/.style={draw, rectangle, minimum size=0.5cm},
            interface/.style={inner sep=0}
            ]
            \node[interface] (rdo1) at (-2cm, 0cm) {$\leftpos{1}$}; 
            \node[interface] (rdo2) at (-2cm, -0.5cm) {$\leftpos{2}$}; 
            \node[interface] (rdo3) at (-2cm, -1cm) {$\leftpos{3}$}; 
            \node[interface] (rcd1) at (-1cm, 0cm) {$\rightpos{1}$}; 
            \node[interface] (rcd2) at (-1cm, -0.5cm) {$\rightpos{2}$}; 
            \node[interface] (rcd3) at (-1cm, -1cm) {$\rightpos{3}$}; 
            \draw[->] (rdo1) to (rcd1);
            \draw[->] (rdo2) to (rcd2);
            \draw[->] (rdo3) to (rcd3);
          \end{tikzpicture} 
      \caption{$\id_3$.\\ \ }
    \label{fig:exIdentity}
  \end{minipage}
  \;
  \begin{minipage}[b]{5.8em}
    \centering
    \vspace{-15mm}
    \begin{tikzpicture}[
      innode/.style={draw, rectangle, minimum size=0.5cm},
      interface/.style={inner sep=0}
      ]
      \node[interface] (rdo1) at (-2cm, 0cm) {$\leftpos{1}$}; 
    \node[interface] (rdo2) at (-2cm, -0.5cm) {$\leftpos{2}$}; 
    \node[interface] (rdo3) at (-2cm, -1cm) {$\leftpos{3}$}; 
    \node[interface] (rcd1) at (-1cm, 0cm) {$\rightpos{1}$}; 
    \node[interface] (rcd2) at (-1cm, -0.5cm) {$\rightpos{2}$}; 
    \node[interface] (rcd3) at (-1cm, -1cm) {$\rightpos{3}$}; 
    \draw[->] (-1.8cm, 0cm) to (-1.2cm, -0.5cm);
    \draw[->] (-1.8cm, -0.5cm) to (-1.2cm, -1cm);
    \draw[->] (-1.8cm, -1cm) to (-1.2cm, 0cm);
  \end{tikzpicture} 
  \caption{$\sigma_{2, 1}$.\\ \ }
  \label{fig:exSwap}
\end{minipage}
\;
\begin{minipage}[b]{12em}
  \centering
  \vspace{-6mm}
  \scalebox{0.7}{
    \begin{tikzpicture}[
      innode/.style={draw, rectangle, minimum size=\nodesize},
      label/.style={inner sep=0},
      interface/.style={inner sep=0}
      ]
        \node[innode] (tina) {$a$}; 
        \node[innode] (tinb) at ($(tina)+(\hlength,0)$) {$b$}; 
        \node[interface,anchor=east] (ttl) at ($(tina)+(-\hlength/2,\vlength)$) {$\leftpos{1}$}; 
        \node[interface,anchor=east] (tml) at ($(tina)+(-\hlength/2,0)$) {$\leftpos{2}$}; 
        \node[interface,anchor=east] (tbl) at ($(tina)+(-\hlength/2,-\vlength)$) {$\leftpos{3}$}; 
        \node[interface,anchor=west] (ttr) at ($(tinb)+(\hlength/2,\vlength)$) {$\rightpos{1}$}; 
        \node[interface,anchor=west] (tbr) at ($(tinb)+(\hlength/2,-\vlength)$) {$\rightpos{2}$}; 
        \node[label,anchor=south] (tinalab) at (tina.north) {$\adam, 3$};
        \node[label,anchor=south] (tinblab) at (tinb.north) {$\eve, 2$};
        \node[interface] (base) at ($(tinalab.north)!0.5!(tbl.south)$) {$\phantom{\bullet}$};
        \path[->] (ttl)  edge (tina)
                  (tml)  edge (tina)
                  (tbl)  edge (tina)
                  (tina) edge (tinb)
                  (tinb) edge (ttr)
                         edge (tbr);
      \node[innode] (bina) at ($(tina)+(0,-5*\vlength/2)$) {$a'$};
      \node[innode] (binb) at ($(bina)+(\hlength,0)$) {$b'$};
      \node[interface] (binc) at ($($(bina)+(0,3*\vlength/2)$)!($(bina)!0.5!(binb)$)!($(bina)+(\hlength,3*\vlength/2)$)$) {};
      \node[interface,anchor=east] (bl)  at ($(bina)+(-\hlength/2,0)$) {$\leftpos{4}$}; 
      \node[interface,anchor=east] (bbl) at ($(bina)+(-\hlength/2,-\vlength)$) {$\leftpos{5}$}; 
      \node[interface,anchor=west] (br)  at ($(binb)+(\hlength/2,0)$) {$\rightpos{3}$}; 
      \node[label,anchor=south] (binalab) at (bina.north) {$\adam, 3$};
      \node[label,anchor=south] (binblab) at (binb.north) {$\eve, 2$};
      \path[->] (bl)   edge (bina)
                (bbl)  edge (bina)
                (bina) edge (binb)
                (binb) edge (br);
      \draw ($(binb.north east)!0.33!(binb.south east)$) arc [radius=3*\vlength/4-\nodesize/12,start angle=270,end angle=450];
      \draw[<-] ($(bina.north west)!0.33!(bina.south west)$) arc [radius=3*\vlength/4-\nodesize/12,start angle=270,end angle=90];
      \path ($(binb.north east)!(binc)!(binb.south east)$) edge ($(bina.north west)!(binc)!(bina.south west)$);
    \end{tikzpicture}
  }
  \caption{Parallel composition of Fig.~\ref{fig:compo} \& Fig.~\ref{subfig:exTraceCopg}.}
  \label{fig:exParallelCompositon}
\end{minipage}
\end{wrapfigure}

Fig.~\ref{fig:exIdentity} shows the identity $\id_3$.

Next, we define the sequential composition \(\mathcal{A};\mathcal{B}\) of rightward open parity games.
The intuition is to connect each exit position of \(\g{A}\) with the corresponding entry position of
\(\g{B}\), and then to hide those interface open ends.
Fig.~\ref{fig:compo} in the introduction illustrates this construction.

\begin{definition}[sequential composition]
  \label{def:seqCompGame}
    Let $\mathcal{A}$ and $\mathcal{B}$ be rightward open parity games and $n^{\g{A}} = m^{\g{B}}$.
    We define the sequential composition $\mathcal{A};\mathcal{B}$ as follows:
  \begin{math}
   \mathcal{A};\mathcal{B} =
  \bigl((m^{\g{A}}, 0),\  (n^{\g{B}}, 0),\ Q^{\mathcal{A}}+Q^{\mathcal{B}},\ 
           E^{\mathcal{A};\mathcal{B}},\  [\rho^{\mathcal{A}}, \rho^{\mathcal{B}}],\  M,\ 
        [\omega^{\mathcal{A}}, \omega^{\mathcal{B}}]\bigr)
  \end{math}, where
  \begin{math}
  E^{\mathcal{A};\mathcal{B}} =
  E^{\mathcal{A}}\setminus \big((\nset{m^{\mathcal{A}}}+Q^{\mathcal{A}})\times\nset{n^{\mathcal{A}}}\big)
    \ +\
  E^{\mathcal{B}}\setminus \big(\nset{m^{\mathcal{B}}}\times (\nset{n^{\mathcal{B}}} + Q^{\mathcal{B}})\big) 
  + \bigl\{\,(s,s')\,\mid\, \text{there exists } a\in \nset{n^{\mathcal{A}}}=\nset{m^{\mathcal{B}}}
       \text{ such that } (s,a)\in E^{\mathcal{A}}\text{ and }(a, s')\in E^{\mathcal{B}}\,\bigr\}.
  \end{math}
\end{definition}

We can show associativity and unitality up to structure-preserving
bijection, which entails that $\topg$ is a category by
Def.~\ref{def:equivRel}.

We also define a parallel (or vertical) composition $\parallel$ of
rightward open parity games,
which gives a monoidal product structure of $\topg$
by \(\coset{\g{A}} \parallel \coset{\g{B}} = \coset{\g{A} \parallel \g{B}}\).
Fig.~\ref{fig:exParallelCompositon} gives an example, notice that the
open ends in the second game need to be shifted, for which we
need the following definition:
for \(l \in \Nat\) and \(s \in \nset{m}+Q\),
let \(\lplus{l}{s} \in \nset{l+m}+Q\) be defined by
\begin{math}
 \lplus{l}{s} = 
  l + s
\end{math}
if
\begin{math}
 s\in \nset{m}
\end{math}, and
\begin{math}
 \lplus{l}{s} = 
 s
\end{math}
if
\begin{math}
 s\in Q
\end{math}.

\begin{definition}[parallel composition]
  \label{def:paraComp}
    Let $\mathcal{A}$ and $\mathcal{B}$ be rightward open parity games.
  The parallel composition $\mathcal{A} \parallel \mathcal{B}$ is defined as follows:
  \begin{math}
      \mathcal{A} \parallel \mathcal{B} =
  \big(  (m^{\mathcal{A}} + m^{\mathcal{B}},0),\  (n^{\mathcal{A}} + n^{\mathcal{B}},0),\ 
  Q^{\mathcal{A}}+Q^{\mathcal{B}},\  E^{\mathcal{A}\oplus \mathcal{B}},\ 
  [\rho^{\mathcal{A}} , \rho^{\mathcal{B}}],\  M,\  
  [\omega^{\mathcal{A}} , \omega^{\mathcal{B}}]\big)
  \end{math}, where $E^{\mathcal{A}\parallel \mathcal{B}}$ is given by
  \begin{math}
  E^{\mathcal{A}\parallel \mathcal{B}}
  = E^{\mathcal{A}}
  \ +\ 
  \big\{
  (\lplus{m^{\g{A}}}{s}, \lplus{n^{\g{A}}}{t})
  \,\big|\,
  (s,t) \in E^{\g{B}}
  \big\}.
  \end{math}
    \clovis{last line is suggestion}
  \end{definition}

The following game swaps the order of entry positons and that of exit positions. 
This makes \(\topg\) a \emph{symmetric} monoidal category.
Fig.~\ref{fig:exSwap} shows the swap game $\sigma_{2,  1}$.
\begin{definition}\dt[swap]
For any $m, n\in \Nato$,
  we define the \emph{swap game} $\sigma_{m,  n}$ as follows:
  \begin{math}
    \sigma_{m, n} =
\big((m+n, 0),\, (n+m, 0),\,\emptyset,\, E^{\sigma_{m,n}},\, \initf{\{\eve, \adam\}},\, M,\, \initf{\Nat_M}\big)
  \end{math},
  where $E^{\sigma_{m,n}}=\{(a, n+a)\mid a\in \nset{m}\}
  \cup\{(m+a, a)\mid a\in\nset{n}\}$.
\end{definition}

Cycles are essential to parity games: without them there would not be any infinite play.
To introduce cycles in rightward open parity games,
we use a trace opearator on $\topg$, as illustrated in
Fig.~\ref{subfig:traceTsmc}.

\begin{definition}[trace operator of \(\topg\)]
  \label{def:traceintopg}
Let \(l\), \(m\), and \(n\) be objects in \(\topg\).
We define the trace operator 
$\tropg{l}{m}{n}: \topg(l+m, l+n)\rightarrow \topg(m, n)$ as follows.
Let \(\g{A} \in \topg(l+m, l+n)\), i.e., let \(m^{\g{A}} = l+m\) and \(n^{\g{A}} = l+n\).
Then, \(\tropg{l}{m}{n}(\coset{\g{A}}) \defeq \coset{\tropg{l}{m}{n}(\g{A})}\)
where
\begin{align*}
\begin{aligned}
 \tropg{l}{m}{n}(\g{A})
 \;=\;&
 \bigl((m, 0),\,(n, 0),\, Q^{\mathcal{A}},\, E^{\tropg{l}{m}{n}(\g{A})},\,\rho^{\mathcal{A}},\, M,\, \omega^{\mathcal{A}}\bigr), \text{ where}
 \\
 E^{\tropg{l}{m}{n}(\g{A})}
 \; = \; &
 \big\{ (s,s') \in (\nset{m}{+}Q^{\g{A}}) {\times} (\nset{n}{+}Q^{\g{A}}) \,\big|\,
 \lplus{l}{s} \,E^{\g{A}}\, a_1 \,E^{\g{A}}\, \cdots \,E^{\g{A}}\, a_k \,E^{\g{A}}\,
 \lplus{l}{s'}
 \text{ for some }
 k \in \Nat, (a_i)_i  \in \nset{l}^k
 \big\}.
\end{aligned}
\end{align*}
\end{definition}

Here is the main result of this section.
With the given definitions, the proof is (lengthy but) routine work.
\begin{theorem}[\(\topg\)]
The data $(\topg,\ \parallel,\ \emptyset,\ \sigma,\ \mathrm{tr})$ defined so far 
constitutes a strict traced symmetric monoidal category, where
  $\emptyset$ denotes the obvious empty game.
\qed
\end{theorem}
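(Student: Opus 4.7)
The plan is to verify each piece of the strict traced symmetric monoidal structure in turn, in each case reducing the problem to combinatorial identities on the underlying graphs (since roles, priorities and the maximal rank $M$ are simply carried along by coproduct/copairing and are preserved by every operation). First I would check well-definedness: all the operations $;$, $\parallel$, $\sigma$, and $\mathrm{tr}$ are defined on representatives, so I must show that each is invariant under the equivalence $\sim$ of Def.~\ref{def:equivRel}. This follows because a structure-preserving bijection $\eta : Q^{\g{A}} \to Q^{\g{A}'}$ extends canonically to a bijection on coproducts $Q^{\g{A}} + Q^{\g{B}} \to Q^{\g{A}'} + Q^{\g{B}}$ that preserves edges, roles, and priorities by construction.

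Next I would establish that $\topg$ is a category. Associativity $(\g{A};\g{B});\g{C} \sim \g{A};(\g{B};\g{C})$ is witnessed by the canonical bijection $(Q^{\g{A}}+Q^{\g{B}})+Q^{\g{C}} \cong Q^{\g{A}}+(Q^{\g{B}}+Q^{\g{C}})$, and the edge sets coincide because both describe the same paths $s \to a \to b \to s'$ of length $\le 3$ through at most one entry of $\nset{n^{\g{A}}}$ and at most one entry of $\nset{n^{\g{B}}}$. The unit laws $\g{A};\ID_n \sim \g{A}$ and $\ID_m;\g{A} \sim \g{A}$ hold because $\ID_n$ adds no internal positions and its edges just forward each $a \in \nset{n}$ to itself, so the edge set is unchanged after the hiding prescribed by Def.~\ref{def:seqCompGame}. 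For the strict monoidal structure, I would check bifunctoriality $(\g{A}_1 ; \g{B}_1) \parallel (\g{A}_2 ; \g{B}_2) = (\g{A}_1 \parallel \g{A}_2) ; (\g{B}_1 \parallel \g{B}_2)$ and the strict identities $\g{A} \parallel \emptyset = \g{A} = \emptyset \parallel \g{A}$, $(\g{A} \parallel \g{B}) \parallel \g{C} = \g{A} \parallel (\g{B} \parallel \g{C})$; these are immediate on internal positions (strict associativity/unitality of $+$) and reduce on open ends to the shift conventions $\lplus{l}{-}$, which respect strict associativity of natural-number addition. Symmetry axioms for $\sigma_{m,n}$ (involutivity $\sigma_{n,m};\sigma_{m,n}=\id$, hexagon $\sigma_{m+n,k}=(\sigma_{m,k}\parallel\id_n);(\id_m\parallel\sigma_{n,k})$, naturality) are also verified by a direct inspection of the edge sets.

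The main work is the trace axioms. Using the characterisation of $E^{\tropg{l}{m}{n}(\g{A})}$ as the set of pairs $(s,s')$ such that there is an $E^{\g{A}}$-path from $\lplus{l}{s}$ to $\lplus{l}{s'}$ whose internal vertices all lie in $\nset{l}$, the axioms become statements about concatenation and rerouting of such paths. Concretely I would verify: naturality in the non-traced inputs and outputs (pre/post-composition with $\g{B}$ commutes with $\mathrm{tr}$ because paths through $\nset{l}$ do not interact with $\g{B}$); dinaturality/sliding $\tropg{l}{m}{n}((\id_l\parallel\g{B});\g{A})=\tropg{l'}{m}{n}(\g{A};(\id_{l'}\parallel\g{B}))$ along a map $l\to l'$ (by bijecting the two kinds of threading paths); vanishing I ($\tropg{\emptyset}{m}{n}$ is the identity, trivially) and vanishing II ($\tropg{l+l'}{m}{n} = \tropg{l}{m}{n} \circ \tropg{l'}{l+m}{l+n}$, by splitting any path through $\nset{l+l'}$ into its $\nset{l'}$- and $\nset{l}$-segments); superposing ($\tropg{l}{m+m'}{n+n'}(\id_l\parallel\g{B})=\id\parallel\tropg{l}{m'}{n'}(\g{B})$, since the traced part is disjoint from $\g{B}$); and yanking $\tropg{l}{l}{l}(\sigma_{l,l})=\id_l$ (a single feedback edge of length two through one node of $\nset{l}$).

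The step I expect to be the main obstacle is vanishing II together with sliding, because both require carefully re-indexing paths through the nested traced wires $\nset{l}$ and $\nset{l'}$, and the definition of $E^{\tropg{l}{m}{n}(\g{A})}$ quantifies existentially over arbitrary path lengths $k$. The cleanest way to handle this is to observe that the set of $E^{\g{A}}$-paths starting at $\lplus{l}{s}$ and ending at $\lplus{l}{s'}$ with interior in $\nset{l}$ is in bijection with the corresponding set of paths in the alternative decomposition (by splitting at the first step leaving $\nset{l'}$, say), which gives the required edge-set equalities up to $\sim$.
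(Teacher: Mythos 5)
The paper offers no proof of this theorem at all---it declares the verification ``(lengthy but) routine work''---so your plan of checking each axiom directly, reducing everything to combinatorial identities on edge sets and on paths through the traced wires, is exactly the approach the authors intend, and your use of the path characterization of $E^{\tropg{l}{m}{n}(\g{A})}$ as the organizing tool for the trace axioms is the right key idea. One substantive point is missing from your well-definedness step, however: besides $\sim$-invariance you must also check that each operation returns a \emph{legitimate} rightward open parity game, i.e.\ that condition~(3) of \cref{def:openParityGame} (each entry position has a unique outgoing edge, each exit position a unique incoming edge) is preserved. For sequential and parallel composition this is immediate, but for the trace it is not: an entry position $s$ of $\tropg{l}{m}{n}(\g{A})$ could a priori end up with \emph{no} outgoing edge if the unique path starting at $\lplus{l}{s}$ falls into a cycle lying entirely inside the traced wires $\nset{l}$. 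You need the observation that, by uniqueness of incoming edges at exit positions, such a cycle admits no edge entering it from outside and hence is unreachable from any surviving entry position; combined with the determinism of path continuation through $\nset{l}$ (again by condition~(3)), this yields exactly one outgoing edge for each $s$, and dually for exit positions. Without this remark the trace operator is not even well defined as a map into $\topg(m,n)$.

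A lesser issue: several of your axiom statements are garbled as written. In superposing the auxiliary game $\g{B}$ must be tensored with the \emph{traced} morphism, $\tropg{l}{m}{n}(\g{A}) \parallel \g{B} = \tropg{l}{m+m'}{n+n'}(\g{A} \parallel \g{B})$ for $\g{A} : l+m \to l+n$ and $\g{B} : m' \to n'$, and in sliding the morphism $\g{B} : l' \to l$ must act on the traced component (so $\g{B} \parallel \id$, not $\id_l \parallel \g{B}$); as you wrote them the two sides do not type-check. These are slips of notation rather than of method---the parenthetical justifications you give show you have the right picture---but they must be repaired before the path-bijection arguments can be carried out.
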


\subsection{A Compact Closed Category of Open Parity Games}
To obtain the category $\opg$ of open parity games, we use the
\emph{Int construction}~\cite{joyal1996} (see also~\cite{hasegawa2010note} for some correction).
It is a \emph{free} construction from a traced symmetric monoidal
category $\C$ to a compact closed category $\Int(\C)$.
We briefly explain how it is defined here, but see
\iffull
Appendix~\ref{sec:Preliminaries}
\else
the full version
\fi
or~\cite{joyal1996} for more details.

Let \(\Cptcc\) be the 2-category of (locally small) CpCCs,
compact closed functors, and monoidal natural transformations.
Note that its 2-cells automatically respect compact closed structures
and are monoidal natural \emph{isomorphisms}~\cite[Proposition~7.1]{JOYAL199320}.
Also, let $\TrSMCg$ be the 2-category of (locally small) TSMCs,
traced symmetric strong monoidal functors, and monoidal natural \emph{isomorphisms}.

Then the Int construction is a left biadjoint to the embedding \(\TrSMCg \to \Cptcc\).
Specifically, given a traced symmetric monoidal cateogry 
\((\C, \otimes, \munit, \sigma, \tr)\),
the category \(\Int(\C)\) is defined as follows:
An object of $\Int(\C)$ is a pair \((X_{+}, X_{-})\) of objects of $\C$.
Then $\Int(\C)\big((X_{+}, X_{-}),\, (Y_{+}, Y_{-})\big) \defeq
\C(X_{+}\otimes Y_{-},\, Y_{+}\otimes X_{-} )$,
and \(\id_{(X_+, X_-)}\defeq \id_{X_+\otimes X_-}\).
Notably, for $f\in \Int(\C)\big((X_{+}, X_{-}),\, (Y_{+}, Y_{-})\big)$ and $g\in\Int(\C)\big((Y_{+}, Y_{-}),\, (Z_{+}, Z_{-})\big)$,
the composite of $f$ and $g$ is defined using the trace operator, namely by  $\trace{Y_{-}}{X_{+}\otimes Z_{-}}{Z_{+}\otimes X_{-}}{\C}\bigl((\sigma_{Z_{+}, Y_{-}}\otimes \id_{X_{-}})\circ (g\otimes \id_{X_{-}})
        \circ (\id_{Y_{+}}\otimes \sigma_{X_{-}, Z_{-}})\circ (f\otimes \id_{Z_{-}})\circ (\sigma_{Y_{-}, X_{+}}\otimes \id_{Z_{-}})\bigr)$. See~\cite{joyal1996} for details, including diagrammatic illustration.

\begin{definition}[$\opg$] \label{def:OPG}
  Let $\opg$ be the compact closed category $\Int(\topg)$ of open
  parity games.
\end{definition}
The following proposition is trivial from the definition.
\begin{proposition}
  A morphism of $\opg$ is an \(\sim\)-equivalence class of open parity
  games (in Def.~\ref{def:openParityGame}).
\qed
\end{proposition}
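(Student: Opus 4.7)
The plan is to unfold the definitions and exhibit a canonical bijection on data. By Def.~\ref{def:OPG}, a morphism of $\opg$ from $(m_+, m_-)$ to $(n_+, n_-)$ is, by the Int construction, a morphism of $\topg$ from $m_+ + n_-$ to $n_+ + m_-$. By the definition of $\topg$ in the preceding subsection, such a morphism is an $\sim$-equivalence class of rightward open parity games with domain $(m_+ + n_-,\, 0)$ and codomain $(n_+ + m_-,\, 0)$.

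Next, I will compare this data with that of an open parity game from $(m_+, m_-)$ to $(n_+, n_-)$ in the sense of Def.~\ref{def:openParityGame}. In the latter, the set of entry positions is $\nset{\nfr{m}+\nfl{n}} = \nset{m_+ + n_-}$ and the set of exit positions is $\nset{\nfr{n}+\nfl{m}} = \nset{n_+ + m_-}$, while in the former (rightward) case the entry positions are $\nset{(m_+ + n_-)+0}$ and the exit positions are $\nset{(n_+ + m_-)+0}$. All remaining components—internal positions $Q$, edge relation $E$, role function $\rho$, maximal rank $M$, and priority function $\omega$—are specified identically, with edges ranging over the same product of position sets. Thus the identity on underlying data gives a canonical bijection between rightward open parity games of type $(m_+ + n_-,\,0) \to (n_+ + m_-,\,0)$ and open parity games of type $(m_+, m_-) \to (n_+, n_-)$; the only ``work'' is to reinterpret an index in $\nset{m_+ + n_-}$ as either a rightward-directed entry or a leftward-directed codomain open end according to Def.~\ref{def:openParityGame}, which is a relabeling of roles, not of elements.

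Finally, I will check that the equivalence relation descends. The relation $\sim$ on rightward games (Def.~\ref{def:equivRel}) asks for a bijection $\eta$ on internal positions that preserves $E$, $\rho$, and $\omega$, without referring to the rightward convention; the analogous relation on open parity games alluded to just after Def.~\ref{def:equivRel} is defined by exactly the same clauses. Hence the bijection above preserves and reflects $\sim$, and so lifts to a bijection between $\sim$-equivalence classes. I do not anticipate any obstacle: the statement really is a matter of matching indices, and the only point worth being explicit about is the identification $\nset{m_+ + n_-} = \nset{\nfr{m}+\nfl{n}}$ (and similarly on the codomain side) that implements the passage from the rightward presentation inside $\Int(\topg)$ to the two-sided interface $(\bfn{m}, \bfn{n})$ of Def.~\ref{def:openParityGame}.
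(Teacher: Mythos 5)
Your proposal is correct and is exactly the argument the paper intends: the paper states the proposition without proof as ``trivial from the definition,'' and the identification you spell out, namely \(\opg((\nfr{m},\nfl{m}),(\nfr{n},\nfl{n})) = \topg(\nfr{m}+\nfl{n},\nfr{n}+\nfl{m})\) together with the matching of entry/exit index sets and the descent of \(\sim\), is precisely the unfolding the authors rely on (and restate explicitly at the start of \S{}5.1). Nothing is missing.
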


The compact closed structure is the basis of our compositional approach to parity games.
The structure serves as \emph{game constructors},
and the compact closed functor \(\ots : \opg \to \sopg\) (see Fig.~\ref{fig:diagram})
given in \S{}\ref{sec:DenotationalFunctor} computes  compositionally if an entry position in
a composed game wins.

\section{A Graphical Language of Open Parity Games}
\label{sec:coloredProp}

In this section, we introduce the category \(\copg\) as a graphical language for open parity games.
The category is a \emph{prop}, a symmetric monoidal category version
of the notion of Lawvere theory whose use has been actively
pursued recently~\cite{bonchi2017,baez2017prop,carette2020}.
It gives to open parity games introduced in \S{}\ref{sec:opg} a
language of string diagrams generated by certain generators and
equations.
Moreover, we find that the category \(\copg\) is free in two senses:
(i) as the prop induced by a theory $(\opgsig,\opgequ)$ for open parity games; and (ii) as a compact closed category
$\Int(\tcopg)$ (see Fig.~\ref{fig:diagram}).
The second freeness is exploited in the compositional definition of the interpretation functor $\cts{-}$ in Fig.~\ref{fig:diagram}. 

\subsection{The Graphical Language \(\copg\)}

We define \(\copg\) as a colored prop constructed
from a symmetric monoidal theory $(\opgsig,\opgequ)$.
For the detail of this prop construction, the reader can consult, e.g.,~\cite{carette2020}.

\begin{definition}[$C$-prop, morphism, \(\Cprop{C}\)]
Let $C$ be a set (of \emph{colors}).
A \emph{$C$-prop} is a small strict symmetric monoidal category where
the monoid of all the objects is the free monoid $C^{*}$ of $C$.
A \emph{$C$-prop morphism} between \(C\)-props is a strict symmetric monoidal functor
that is the identity on objects.
We write $\Cprop{C}$ for the category of $C$-props and $C$-prop morphisms.
\end{definition}
In this paper, we consider $\{\dr,\dl\}$-props.
The colors $\dr$ and $\dl$ represent ``rightward'' and ``leftward'', respectively.
They intuitively correspond to \(\nfr{m}\) and \(\nfl{m}\) of \(\bfn{m} = (\nfr{m},\ \nfl{m})\) in an open parity game.

We want to define $\copg$ as a \emph{free} $\set{\dr,\dl}$-prop.
A free \(C\)-prop is generated from a $C$-\emph{symmetric monoidal theory}
(\emph{\csmt{C}} for short), i.e., a pair of a \emph{$C$-signature} and a set of \emph{$C$-equations}.
Intuitively, given a \csmt{C}, morphisms of the corresponding free
$C$-prop are terms built freely from the signature (as well as
sequential and parallel composition) and quotiented by the equations.

\begin{definition}[$C$-signature, morphism, $\Csig{C}$]
A \emph{$C$-signature} is a functor $\Sigma: C^{\ast}\times C^{\ast} \rightarrow \Set$ 
where the free monoid $C^{\ast}$ is thought of as a discrete category.
A \emph{$C$-signature morphism} $f:\Sigma\rightarrow \Sigma'$
is a natural transformation from $\Sigma$ to $\Sigma'$.  
We write $\Csig{C}$ for the category of $C$-signatures and $C$-signature morphisms.
\end{definition}
Thus, \(\Csig{C} = \Set^{C^{\ast} \times C^{\ast}}\).
We define an $\{\dr,\dl\}$-signature $\opgsig$, which is used to
define the graphical language $\copg$.
In the signature $\opgsig$, for each domain $w\in \{\dr,
\dl\}^{\ast}$, codomain $u\in \{\dr, \dl\}^{\ast}$, role $r \in
\set{\adam,\eve}$, and priority $p \in \Nat_M$, there is a single
generator $\nd[w,u]{r}{p}$ that represents the type of nodes in open
parity games with these specific domain, codomain, role, and priority.

\begin{definition}[\(\{\dr,\dl\}\)-signature \(\opgsig\)]
  \label{def:sigForParity}
For \(w,u \in \{\dr,\dl\}^{\ast}\), we define \(\opgsig(w,u) \) as follows: Let $N_{w,u} = \setcomp{\nd[w,u]{r}{p}}{r \in \set{\adam,\eve},
  p \in \Nat_M}$. Then $\opgsig(\emptyword,\dr\cdot\dl) =
  N_{\emptyword,\dr\cdot\dl} \cup \set{d_\dr}$,
  $\opgsig(\dl\cdot\dr,\emptyword) = N_{\dl\cdot\dr,\emptyword} \cup
  \set{e_\dr}$, and $\opgsig(w,u) = N_{w,u}$ otherwise.
  \end{definition}
The generators $d_\dr$ and $e_\dr$ intuitively represent a unit and counit over $\dr$, respectively.
We now turn to equations, for which we first need to define terms of a
\csmt{C}.
They are given by the following free construction.
Let $\Usig^C:\Cprop{C}\rightarrow \Csig{C}$ be the obvious forgetful functor.

\begin{theorem}[\cite{Fregier2009,hackney2015category}]
  \label{thm:UsigHasLeftFsig}
  The forgetful functor $\Usig^C$ has a left adjoint $\Fsig^C:\Csig{C}\rightarrow \Cprop{C}$.
  \qed
\end{theorem}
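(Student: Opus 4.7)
The plan is to construct $\Fsig^C(\Sigma)$ explicitly as a term model quotiented by the axioms of a strict symmetric monoidal category, and then verify the universal property of the free construction.

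On objects, I take the free monoid $C^{\ast}$, as required by the definition of a $C$-prop. For morphisms, I first build raw terms inductively from: each $f \in \Sigma(w,u)$ as a primitive of type $w \to u$; an identity $\id_w$ of type $w \to w$ for each $w \in C^{\ast}$; symmetries $\sigma_{c,c'}$ of type $c\cdot c' \to c'\cdot c$; sequential composition $t ; t'$ when types match; and parallel composition $t \otimes t'$. I then quotient by the congruence generated by the axioms of a strict SMC: associativity and unit laws for $;$ and $\otimes$, the interchange law, naturality of the symmetries with respect to all morphisms, the hexagon equations, and involutivity $\sigma_{c',c} ; \sigma_{c,c'} = \id_{c \cdot c'}$. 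Routine verification shows the quotient is a strict symmetric monoidal category with object monoid $C^{\ast}$, hence a $C$-prop.

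For the universal property, I define the unit $\eta_{\Sigma} \colon \Sigma \to \Usig^C(\Fsig^C(\Sigma))$ by sending each generator to its own term. Given any $C$-prop $\mathbb{P}$ and a signature morphism $\phi \colon \Sigma \to \Usig^C(\mathbb{P})$, the extension $\hat{\phi} \colon \Fsig^C(\Sigma) \to \mathbb{P}$ is defined by structural recursion: generators are sent to the image of $\phi$, while identities, symmetries, and the two compositions are sent to the corresponding structure in $\mathbb{P}$. Because $\mathbb{P}$ is itself a strict symmetric monoidal category on its object monoid $C^{\ast}$, $\hat{\phi}$ respects all the imposed axioms and therefore descends to the quotient. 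Uniqueness is immediate: any $C$-prop morphism extending $\phi$ must agree with $\hat{\phi}$ on primitives and hence on all terms.

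The main obstacle is the bookkeeping involved in setting up the congruence and showing that $\hat{\phi}$ is well-defined on equivalence classes; this is tedious but routine in the style of universal algebra of monoidal theories. A more conceptual alternative, closer to the cited references, is to observe that $\Csig{C} = \Set^{C^{\ast}\times C^{\ast}}$ is locally finitely presentable, that $\Cprop{C}$ is also locally presentable via an essentially algebraic presentation, and that $\Usig^C$ preserves limits and filtered colimits; the adjoint functor theorem (or a monadicity argument) then produces $\Fsig^C$ without committing to a specific syntactic presentation.
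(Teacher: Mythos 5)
Your construction is correct, and it is essentially the standard argument: the paper itself gives no proof of this theorem, deferring entirely to the cited references (Fr\'egier--Zambon and Hackney--Robertson), which build \(\Fsig^C(\Sigma)\) exactly as you do, as a term model quotiented by the strict symmetric monoidal axioms, with the universal property verified by structural recursion. Your construction also agrees (up to isomorphism) with the Joyal--Street free symmetric monoidal category \(\Fs(\Sigma)\) that the paper invokes in its appendix and identifies with \(\Fsig(\Sigma)\) via the Yoneda lemma, so nothing in your argument conflicts with how the theorem is later used. The only point worth being careful about is that naturality of the symmetry must be imposed as an equation schema ranging over all terms (not just generators), which you do state; with that, well-definedness of the extension \(\hat{\phi}\) on equivalence classes is immediate.
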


For the unit \(\eta^{\sig} : \Id_{\Csig{C}} \to \Usig^C \circ \Fsig^C\), 
we identify \((\eta^{\sig}_{\Sigma})_{w,u}(f) \in \Usig^C(\Fsig^C (\Sigma))(w,u) = \Fsig^C (\Sigma)(w,u)\) 
with \(f \in \Sigma(w,u)\) for simplicity of presentation.
  
\begin{definition}\dt[\csmt{C}]
   A \emph{$C$-colored symmetric monoidal theory} (\csmt{C} for short)
   is a tuple $(\Sigma, E, l, r)$ where 
      $\Sigma$ and $E$ are $C$-signatures and
      $l, r : E \rightarrow \Usig^C (\Fsig^C(\Sigma))$ are $C$-signature morphisms.
\end{definition}

We often write simply \((\Sigma, E)\) for \((\Sigma, E, l, r)\).
We call \(\Fsig^C(\Sigma)\) the set of \emph{terms} generated
by \(\Sigma\), and $E$ the set of ($C$-)\emph{equations} in $\Sigma$,
where each $e \in E$ represents the equation $l(e) = r(e)$.

\begin{definition}[SMT \((\opgsig,\opgequ)\)]
  \label{def:SMTforParity}
  We complete the definition of the \(\{\dr,\dl\}\)-SMT
  \((\opgsig,\opgequ,l^{\opgsym}_M,r^{\opgsym}_M)\) by giving the equations:
  $(\dcounit_{\dr} \cplus \id_{\dl})\circ
  (\id_{\dl} \cplus \dunit_{\dr}) = \id_{\dl}$ and $(\id_{\dr} \cplus \dcounit_{\dr})\circ (\dunit_{\dr} \cplus
  \id_{\dr}) = \id_{\dr}$.
  \end{definition}

The \csmt{\{\dr, \dl\}} \((\opgsig,\opgequ)\) describes open
parity games, and the equations in Fig.~\ref{subfig:Equations} represent the coherence conditions of compact closed categories.

In general, a \csmt{C} induces a free $C$-prop $\F(\Sigma, E)$, whose arrows give a graphical language.
\begin{definition}[free prop $\F(T)$ \cite{baez2017prop,carette2020}]
  \label{def:freeByCoequalizer}
  Let  \(T=(\Sigma, E, l, r)\) be a \(C\)-SMT.
  We define a \(C\)-prop $\F(T)$ as the coequalizer of 
  \(\tp{l}, \tp{r}: \Fsig^C(E) \to \Fsig^C(\Sigma)\)
  in \(\Cprop{C}\)
  where
  \(\tp{l}\) and \(\tp{r}\)
  are, respectively, the transposition of \(l, r :  E \to \Usig^C (\Fsig^C(\Sigma))\) in \(\Csig{C}\)
  by \(\Fsig^C \dashv \Usig^C\).
  \end{definition}

\begin{definition}[graphical language \(\copg\)]
  We define \(\copg\) by Def.~\ref{def:SMTforParity} and Def.~\ref{def:freeByCoequalizer}.
\end{definition}
By definition, an object of \(\copg\) is an element of \(\{\dr,\dl\}^{\ast}\),
and a morphism is the class of terms generated by sequential and
parallel composition applied to constants in \(\opgsig\), and
quotiented by the congruence produced by the equations in \(\opgequ\)
(again under sequential and parallel composition).
The prop \(\copg\) is illustrated in Fig.~\ref{subfig:identityAndSwap}, Fig.~\ref{subfig:Signature}, and Fig.~\ref{subfig:Equations}.

\begin{figure}
\centering
\begin{minipage}[b]{0.3\hsize}
  \centering
  \scalebox{0.8}{  
    \begin{tikzpicture}
      \draw[->] (-2cm, 0.0cm)--(-1.5cm, 0.0cm) node[at={(-1.75cm,-0.2cm)}] {$\id_{\dr}$};
      \draw[<-] (-1.2cm, 0.0cm)--(-0.7cm, 0.0cm) node[at={(-0.95cm,-0.2cm)}] {$\id_{\dl}$};
      \draw[->] (-0.4cm, 0.15cm)--(0cm, -0.15cm) node[at={(-0.2cm,-0.4cm)}] {$\sigma_{\dr, \dr}$};
      \draw[->] (-0.4cm, -0.15cm)--(0cm, 0.15cm);
      \draw[->] (0.3cm, 0.15cm)--(0.7cm, -0.15cm) node[at={(0.5cm,-0.4cm)}] {$\sigma_{\dr, \dl}$};
      \draw[<-] (0.3cm, -0.15cm)--(0.7cm, 0.15cm);
      \draw[<-] (1cm, 0.15cm)--(1.4cm, -0.15cm) node[at={(1.2cm,-0.4cm)}] {$\sigma_{\dl, \dl}$};
      \draw[<-] (1cm, -0.15cm)--(1.4cm, 0.15cm);
      \draw[<-] (1.7cm, 0.15cm)--(2.1cm, -0.15cm) node[at={(1.9cm,-0.4cm)}] {$\sigma_{\dl, \dr}$};
      \draw[->] (1.7cm, -0.15cm)--(2.1cm, 0.15cm);
    \end{tikzpicture}
  }
  \subcaption{Identites and swaps}
  \label{subfig:identityAndSwap}
  \end{minipage}
\begin{minipage}[b]{0.18\hsize}
\centering
\scalebox{0.8}{  
  \begin{tikzpicture}[ 
    innode/.style={draw, rectangle, minimum size=0.5cm},
    interface/.style={inner sep=0}
  ]
    \node[interface] (do1) at (2.2cm, 0.2cm) {};
    \node[interface] (do2) at (2.2cm, 0cm) {};
    \node[interface] (do3) at (2.2cm, -0.2cm) {};
    \node[innode] (ina) at (2.7cm, 0cm) {}; 
    \node[anchor=south] (inalab) at (ina.north) {$r, p$};
    \node[anchor=south] (inalab) at (2.7, -1cm) {$\nd[w,u]{r}{p}$};
    \node[interface] (cd1) at (3.2cm, 0.2cm) {};
    \node[interface] (cd2) at (3.2cm, 0cm) {};
    \node[interface] (cd3) at (3.2cm, -0.2cm) {};
    \draw[-] (do1) to (2.45cm, 0.2cm);
    \draw[-] (2.95cm, 0.2cm) to (cd1);
    \draw[-] (do2) to (2.45cm, 0cm);
    \draw[-] (2.95cm, 0cm) to (cd2);
    \draw[-] (do3) to (2.45cm, -0.2cm);
    \draw[-] (2.95cm, -0.2cm) to (cd3);
    \draw[-,dotted] (2.325cm, 0.2cm) to (2.325cm, 0cm);
    \draw[-,dotted] (2.325cm, 0cm) to (2.325cm, -0.2cm);
    \draw[-,dotted] (3.075cm, 0.2cm) to (3.075cm, 0cm);
    \draw[-,dotted] (3.075cm, 0cm) to (3.075cm, -0.2cm);
    \draw[<-] (1cm, 0.2cm) arc (90:270:0.25cm) node[at={(0.9cm,-0.7cm)}] {$\dunit_{\dr}$};
    \draw[<-] (1.4cm, 0.2cm) arc (90:-90:0.25cm) node[at={(1.5cm,-0.7cm)}] {$\dcounit_{\dr}$};
  \end{tikzpicture}
}
\subcaption{Signature \(\opgsig\).}
\label{subfig:Signature}
\end{minipage}
\hspace{10pt}
\begin{minipage}[b]{0.3\hsize}
\centering
\scalebox{0.8}{  
  \begin{tikzpicture}[baseline=-0.5ex,
    innode/.style={draw, rectangle, minimum size=0.5cm},
    interface/.style={inner sep=0}]
    \draw[->] (0cm, 0.5cm) to (0.5cm, 0.5cm);
    \draw[<-] (0cm, 0.5cm) arc (90:270:0.25cm);
    \draw[->] (0cm, -0.5cm) arc (270:450:0.25cm);
    \draw[->] (-0.5cm, -0.5cm) to (0cm, -0.5cm);
  \end{tikzpicture} 
$=$
  \begin{tikzpicture}[baseline=-0.65ex,
    innode/.style={draw, rectangle, minimum size=0.5cm},
    interface/.style={inner sep=0}]
    \draw[->] (0cm, 0cm) to (0.5cm, 0cm);
  \end{tikzpicture} 
}
\quad
\scalebox{0.8}{  
  \begin{tikzpicture}[baseline=-0.5ex,
    innode/.style={draw, rectangle, minimum size=0.5cm},
    interface/.style={inner sep=0}]
    \draw[<-] (0cm, -0.5cm) to (0.5cm, -0.5cm);
    \draw[<-] (0cm, 0cm) arc (90:270:0.25cm);
    \draw[->] (0cm, 0cm) arc (270:450:0.25cm);
    \draw[<-] (-0.5cm, 0.5cm) to (0cm, 0.5cm);
  \end{tikzpicture} 
$=$
  \begin{tikzpicture}[baseline=-0.65ex,
    innode/.style={draw, rectangle, minimum size=0.5cm},
    interface/.style={inner sep=0}]
    \draw[<-] (0cm, 0cm) to (0.5cm, 0cm);
  \end{tikzpicture} 
}
\subcaption{Equations \(\opgequ\).}
\label{subfig:Equations}
\end{minipage}
\begin{minipage}[b]{0.17\hsize}
\centering
\scalebox{0.8}{  
  \begin{tikzpicture}[
    innode/.style={draw, rectangle, minimum size=0.5cm},
    interface/.style={inner sep=0}]
    \node[innode] (ina) at (0.5cm, -0.5cm) {}; 
    \node[anchor=south] (inalab) at (ina.north) {$\eve, 2$};
    \draw[->] (0.05cm, -0.7cm) to (0.25cm, -0.7cm);
    \draw[->] (-0.55cm, -0.7cm) to (0.05cm, -0.7cm);
    \draw[->] (0.05cm, -0.3cm) to (0.25cm, -0.3cm); 
    \draw[<-] (0.05cm, -0.3cm) arc (-90:-270:0.3cm);
    \draw[<-] (0.05cm, 0.3cm) to (1.05cm, 0.3cm);
    \draw[<-] (0.9cm, -0.3cm) arc (-270:-450:0.2cm);
    \draw[<-] (0.75cm, -0.3cm) to (0.95cm, -0.3cm);
    \draw[->] (0.75cm, -0.7cm) to (0.95cm, -0.7cm);
  \end{tikzpicture}
}
  \clovis{make it look nicer}
\subcaption{An example.}
\label{subfig:ExampleInCopg}
\end{minipage}
\caption{Illustration of \((\opgsig,\opgequ)\).}
\label{fig:GraphicalRepresentations}
\end{figure}

\begin{example}
  Fig.~\ref{subfig:ExampleInCopg} 
 is the morphism $(\id_{\dl}\parallel\dcounit_{\dr})\circ (\id_{\dl}\parallel \nd[\dr\cdot\dr,\dl\cdot\dr]{\eve}{2})\circ (\dunit_{\dl}\parallel \id_{\dr})$
 from \(\dr\) to \(\dl\)
 in $\copg$.
 \asd{todo: change the figure and explanation here.}
\end{example}

\subsection{Free Compact Closedness of \(\copg\) and the Full Functor \(\cto\)}
\label{subsec:FreeCCandFull}

We show that the graphical language \(\copg\) is 
a free compact closed category, 
so that we can freely define a compact closed functor from \(\copg\)
to any compact closed category $\C$ with additional structure.
This way we obtain the realization functor \(\cto : \copg \to \opg\) (see Fig.~\ref{fig:diagram}); we show that $\cto$ is full, meaning that every open parity game has a presentation in $\copg$.
Due to the space limitation, we put the detailed information that is written here in the full version.
%

We need some definitions for proving that \(\copg\) is a free compact closed category.
%
We call an object of the category \(\Set^{\{\dr,\dl\}^{\ast} \times \{\dr,\dl\}^{\ast}}\)
a \emph{compact closed signature} (\emph{CCS}, for short).
%
Recall that the signature \(\opgsig\) consists of nodes \(\nd[w,u]{r}{p}\) of open parity games
and the unit \(\dunit_{\dr}\) and counit \(\dcounit_{\dr}\) of compact closed structure.
We define a CCS $\ccsp$ that is a signature of open parity games without the compact closed structure
\(\dunit_{\dr}\) or \(\dcounit_{\dr}\).

\begin{definition}[CCS \(\ccsp\)]
\label{def:TrSigforParity}
We define a CCS \(\ccsp\) by
\(\ccsp(w,u) \defeq
\big\{ \nd[w,u]{r}{p} \,\big|\, r\in\{\eve, \adam\} \text{ and } p\in \Nat_{M}\big\}\).
\end{definition}

To state the free compact closedness of \(\copg\), we define a \emph{valuation}, which defines a way to interpret elements of a signature
into a compact closed category.

\begin{definition}[valuation]
  \label{def:valuationObjOnly}
    For a CCS \(\Sigma\) and a compact closed category \(\C\),
    a \emph{valuation} of \(\Sigma\) into \(\C\) is
    a pair $\big(V_{\dr}, (V_{w,u})_{w, u}\big)$ such that (i) \(V_{\dr} \in ob(\C)\)
    and (ii)
    \(V_{w, u} :\Sigma(w, u)\rightarrow \C(V^{\ast}_{w}, V^{\ast}_{u})\)
    for \(w, u\in \{\dr,\dl\}^{\ast}\)
    where \(V^{\ast}_{w}\) (\(w \in \{\dr,\dl\}^{\ast}\)) is defined as follows:
    $V^{\ast}_{d_1 \ldots d_n} \defeq V_{d_1} \otimes \ldots \otimes V_{d_n}$
   where \(V_{\dl} \defeq \dual{V_{\dr}}\).
\end{definition}

\begin{definition}[action on valuations]
  \label{def:actionObjOnly}
  Given a compact closed functor \(F : \C \to \D\) and
  a valuation \(V\) of \(\Sigma\) into \(\C\),
  the \emph{action} \(\CompCpt{F}{V}\) on \(V\) by \(F\) is defined by
  (i) $\CompCpt{F}{V}{}_{\dr} \defeq F(V_{\dr})$ and
  (ii) $\CompCpt{F}{V}{}_{w, u}(f) \defeq (\phi^{F}_{u})^{-1}\circ F(V_{w, u}(f))\circ \phi^{F}_{w}$,
  where \(\phi^{F}_{w}\) (and \(\phi^{F}_{u}\)) are defined as follows:
  for any \(w = \dr^{i_1}\cdots \dr^{i_n} \in \{\dr,\dl\}^{\ast}\) where
  each \(\dr^{i_j}\) is either \(\dr\) or \(\dual{\dr}\),
  the morphism $\phi^{F}_{w}: 
  F(V_{\dr})^{i_1}\otimes\cdots\otimes F(V_{\dr})^{i_n} \rightarrow F(V_{\dr}^{i_1}\otimes \cdots\otimes V_{\dr}^{i_n})$
  is the isomorphism given by the fact that \(F\) respects the compact closed structures.
\end{definition}

Finally, we prove the free compact closedness of \(\copg\) by using the above definitions.

\begin{theorem}[free compact closedness of \(\copg\)]
\label{thm:freeCptCC}
The prop \(\copg\) is a strict compact closed category.
Furthermore, \(\copg\) is a \emph{free} compact closed category,
i.e.,
there exists a valuation \(\eta_{\ccsp}\) of \(\ccsp\) into \(\copg\)
such that,
for any compact closed category \(\C\)
and any valuation \(V\) of \(\ccsp\) into \(\C\),
there exists a unique (up to iso) compact closed functor 
\(F : \copg \to \C\) such that 
\(\CompCpt{F}{\eta_{\ccsp}} = V\).
\qed
\end{theorem}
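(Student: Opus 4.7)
The plan is to first establish the strict compact closed structure of $\copg$, then construct the functor $F$ from $V$ using the universal property of the free prop, and finally verify uniqueness. For strict compact closedness: $\copg$ is a prop, hence already a strict symmetric monoidal category whose object monoid is $\{\dr,\dl\}^{\ast}$. The generators $\dunit_{\dr}$ and $\dcounit_{\dr}$ together with the two snake equations in $\opgequ$ make $\dl$ a right dual of $\dr$. Dually, $\dunit_{\dl}$ and $\dcounit_{\dl}$ are obtained by conjugating with appropriate symmetries, and their snake equations follow from those of $\dr$ by naturality of $\sigma$. For a general word $w \in \{\dr,\dl\}^{\ast}$, one defines the dual object $\dual{w}$ inductively and builds the corresponding unit and counit by tensoring and threading symmetries, exactly as in any presentation of a free compact closed category generated by one object.

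For the universal property, define $\eta_{\ccsp}$ by $\eta_{\ccsp,\dr} \defeq \dr$ and $\eta_{\ccsp,w,u}(\nd[w,u]{r}{p}) \defeq \nd[w,u]{r}{p}$, viewed as an arrow in $\copg(w,u)$. Given an arbitrary valuation $V$ of $\ccsp$ into a compact closed category $\C$, extend $V$ to a valuation $V'$ of $\opgsig$ into $\C$ by sending $\dunit_{\dr}$ and $\dcounit_{\dr}$ to the unit and counit of $V_{\dr}$ in $\C$. Form a $\{\dr,\dl\}$-prop $P^V_{\C}$ whose hom-sets are $P^V_{\C}(w,u) \defeq \C(V^{\ast}_{w}, V^{\ast}_{u})$, with composition and tensor transported from $\C$ via the coherence isomorphisms $\phi$; then $V'$ becomes a $\{\dr,\dl\}$-signature morphism $\opgsig \to \Usig^{\{\dr,\dl\}}(P^V_{\C})$. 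By Thm.~\ref{thm:UsigHasLeftFsig}, this transposes to a prop morphism $\tp{V'} : \Fsig^{\{\dr,\dl\}}(\opgsig) \to P^V_{\C}$. The equations in $\opgequ$ are sent to identities because the triangle identities hold in any compact closed category, so by the coequalizer definition of $\copg = \F(\opgsig, \opgequ)$ in Def.~\ref{def:freeByCoequalizer}, we obtain a unique induced prop morphism $\copg \to P^V_{\C}$, which unpacks into a strong monoidal functor $F : \copg \to \C$ preserving the compact closed structure.

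By construction $\CompCpt{F}{\eta_{\ccsp}} = V$. Uniqueness up to monoidal natural isomorphism follows because any compact closed functor $F'$ satisfying $\CompCpt{F'}{\eta_{\ccsp}} = V$ must agree with $F$ on every generator of $\opgsig$: the node generators are pinned down by $V$, while $F'(\dunit_{\dr})$ and $F'(\dcounit_{\dr})$ are determined up to unique isomorphism by the universal property of the compact closed structure in $\C$. An induction on the term structure in $\copg$, modulo $\opgequ$, then yields a monoidal natural isomorphism $F \cong F'$. The main obstacle will be the careful bookkeeping of the coherence isomorphisms $\phi^{F}_{w}$ when $\C$ is not strict: the snake equations hold on the nose in $\copg$ but only modulo the $\phi$'s in $\C$, so one must verify that $P^V_{\C}$ is genuinely a $\{\dr,\dl\}$-prop (a strict monoidal category over $\{\dr,\dl\}^{\ast}$) and that the transposed prop morphism respects the equations once the $\phi$'s are accounted for.
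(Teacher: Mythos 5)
Your proposal is correct and its skeleton coincides with the paper's: both reduce the statement to the freeness of the underlying prop $\Fsig(\opgsig)$ on the extended signature $\opgsig = \adduc{\ccsp}$, observe that the two snake equations in $\opgequ$ are satisfied in any compact closed category so that the coequalizer of Def.~\ref{def:freeByCoequalizer} can be factored through, and recover uniqueness from the canonical uniqueness of duals. The differences are in the technical realization. Where you build the auxiliary strict prop $P^V_{\C}$ with hom-sets $\C(V^{\ast}_w,V^{\ast}_u)$ and do the strictification bookkeeping by hand, the paper instead invokes the Joyal--Street freeness theorem for $\Fs(\Sigma)$, i.e.\ the equivalence $\Comp{-}{\eta^{\sm}_{\Sigma}}:\SMC(\Fs(\Sigma),\C)\to[\Sigma,\C]^{\sm}$, together with an identification $\Fsig(\Sigma)\cong\Fs(\Sigma)$, so that the passage from a valuation to a strong monoidal functor out of the free prop (including all coherence isomorphisms for non-strict $\C$) is delegated to a citation; your $P^V_{\C}$ is essentially an inlined proof of that step, and the ``main obstacle'' you flag is precisely what that theorem absorbs. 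The second difference is the strength of the conclusion: the paper proves the stronger, two-dimensional statement that $\Comp{-}{\eta^{\cpt}_{\ccsp}}:\Cptcc(\copg,\C)\to[\ccsp,\C]^{\cpt}_{\iso}$ is an equivalence of categories (with valuation \emph{morphisms} as the 2-cells, and with explicit natural isomorphisms $\mathrm{Id}\Rightarrow R(-)\circ\eta^{\cpt}$ and $R(\Comp{-}{\eta^{\cpt}})\Rightarrow\mathrm{id}$), from which uniqueness up to isomorphism falls out of full faithfulness; this extra generality is then actually used, e.g.\ in establishing $\copg\simeq\Int(\tcopg)$ via a chain of equivalences. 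Your generator-by-generator induction gives the one-dimensional uniqueness claimed in the theorem statement and is sound for that purpose --- the key point you correctly isolate is that $F'(\dunit_{\dr})$ and $F'(\dcounit_{\dr})$ are not constrained by $\Comp{F'}{\eta_{\ccsp}}=V$ directly but are pinned down up to canonical isomorphism because any symmetric strong monoidal functor preserves duality data --- but it would not by itself yield the functoriality in valuation morphisms that the paper's later corollaries rely on.
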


 By the general result (Thm.~\ref{thm:freeCptCC}),
we define the realization functor $\cto:\copg\rightarrow\opg$ (see Fig.~\ref{fig:diagram}):
\begin{definition}\dt[realization functor \(\cto\)]
We let the \emph{realization functor} $\cto:\copg\rightarrow\opg$
 be the
 functor determined by Thm.~\ref{thm:freeCptCC} with
\(\cto(\dr)=(1,0)\) and
\(\cto(\nd[w,u]{r}{p})
=
\big((\drf{w},\dlf{w}),\, (\drf{u},\dlf{u}),\,\{\uval\},\, E,\, \allowbreak\{\uval \mapsto r\},\,\allowbreak
M,\, \{\uval \mapsto p\} \big)\)
where
\(E = \big\{(a, \uval) \mid a\in  \nset{\drf{w} + \dlf{u}}\big\}
\cup \big\{(\uval, a) \mid a\in \nset{\drf{u} + \dlf{w}}\big\}\).
\end{definition}

The following theorem says that 
every open parity game in $\opg$ can be represented as a graphical one, i.e., a morphism in \(\copg\).
\begin{theorem}[fullness]
\label{thm:fullness}
The functor $\cto:\copg\rightarrow\opg$ is full.
\qed
\end{theorem}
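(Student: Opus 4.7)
The plan is to show that any open parity game $\mathcal{A}$ from $\bfn{m}$ to $\bfn{n}$ is the image under $\cto$ of an explicit morphism in $\copg$, by decomposing $\mathcal{A}$ into its internal positions (realized as generators of $\opgsig$) and reconstructing its edges using the wiring primitives (identities, symmetries, units, and counits) of the compact closed structure.

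Concretely, for each internal position $q \in Q^{\mathcal{A}}$, let $\text{in}(q)$ and $\text{out}(q)$ denote its in- and out-degree with respect to $E^{\mathcal{A}}$, and consider the generator $n_q := \nd[\dr^{\text{in}(q)}, \dr^{\text{out}(q)}]{\rho^{\mathcal{A}}(q)}{\omega^{\mathcal{A}}(q)}$ of $\opgsig$. By the definition of $\cto$, each $\cto(n_q)$ is an open parity game with a single internal position carrying the prescribed role and priority and with purely rightward open ends. The parallel composition $N := n_{q_1} \parallel \cdots \parallel n_{q_k}$ therefore yields a morphism in $\copg$ from $\dr^{I}$ to $\dr^{O}$ with $I := \sum_q \text{in}(q)$ and $O := \sum_q \text{out}(q)$, and $\cto(N)$ is, up to $\sim$, the disjoint union of the single-node games $\cto(n_q)$.

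It remains to wire $N$ so that its image has the edge relation and external interface of $\mathcal{A}$. For this, I use the compact closed structure of $\copg$: symmetries permute wires arbitrarily, while the units $\dunit_{\dr}$ and counits $\dcounit_{\dr}$ bend rightward wires into leftward ones, as needed both to match the leftward open ends of $\bfn{m}$ and $\bfn{n}$ and to realize edges in $\mathcal{A}$ that run backwards in the left-to-right order chosen for the nodes. By condition~(\ref{item:OPGedge}) of Def.~\ref{def:openParityGame}, the edges in $E^{\mathcal{A}}$ induce a bijection between the set of source ports (the $\nfr{m}+\nfl{n}$ entry open ends together with the $O$ outputs of $N$) and the set of target ports (the $\nfr{n}+\nfl{m}$ exit open ends together with the $I$ inputs of $N$). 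Pre- and post-composing $N$ with morphisms of $\copg$ built from symmetries, units, and counits that realize this bijection yields a term $t$ with $\cto(t) \sim \mathcal{A}$.

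The main obstacle is this final wiring step: one must show that every bijective connection pattern between ports admits a presentation in $\copg$. I would discharge this via the equivalence $\copg \simeq \Int(\tcopg)$ mentioned after Thm.~\ref{thm:freeCptCC}, where $\tcopg = \Ft(\intsig{\ccsp})$ is a free traced symmetric monoidal category. Every morphism of such a free TSMC admits a normal form consisting of a parallel composition of generators inside a trace over a permutation, and the rightward open parity game corresponding (via the Int construction) to $\mathcal{A}$ admits exactly such a presentation by the same node-plus-wiring decomposition carried out in $\topg$. Transporting this presentation through the equivalence back into $\copg$ yields the desired preimage $t$, completing the fullness argument.
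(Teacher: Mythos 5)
Your proposal is correct and, in its decisive final step, coincides with the paper's own argument: the paper also reduces fullness of $\cto$ to fullness of $\tcto:\tcopg\rightarrow\topg$ (using that $\Int$ preserves fullness, i.e.\ the equivalence $\copg\simeq\Int(\tcopg)$), and constructs the preimage of a rightward game exactly as you describe --- one generator $\nd[w,u]{r}{p}$ per internal position and identities for open ends, placed in parallel, permuted, and closed by a trace over $k=|E|$ wires, one per edge. Your first two paragraphs (realizing the wiring directly with units and counits in $\copg$) are thus subsumed by your third, and your only imprecision is attributing the source/target-port bijection entirely to condition~(3) of Def.~\ref{def:openParityGame}, whereas for internal positions it holds by your choice of arities $\mathrm{in}(q)$, $\mathrm{out}(q)$; this does not affect correctness.
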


\begin{remark}
  \label{remark:notFaithful}
  The functor \(\cto\) is not faithful (a one-node example can be easily given).
  In  recent works on props such as~\cite{bonchi2017,piedeleu2020string}, the main interest is in the faithfulness of a semantics functor whose codomain is a well-known semantic category (that of linear relations~\cite{bonchi2017}, automata~\cite{piedeleu2020string}, etc.).
  In this case, faithfulness amounts to the \emph{completeness} of equational axioms.
  We do not share this interest: in Fig.~\ref{fig:diagram}, the codomain of  $\cto:\copg\rightarrow\opg$ is \emph{not} a well-known category, and the value of a corresponding complete equational axiomatization is not clear.
  Faithfulness of the interpretation functor $\cts{-}:\copg\rightarrow \sopg$ (introduced in~\S{}\ref{sec:SemanticCategory}) seems more interesting, since it amounts to an equational characterization of the equivalence of parity games in terms of who is winning.
  The problem seems challenging, however, given the complexity of solving parity games, and we leave it as future work.
  We note that, for our purpose of compositional solution of parity games (see e.g.~Ex.~\ref{ex:extExample}), faithfulness of $\cto$ or $\cts{-}$ is not needed. 
\end{remark}

Kissinger gives a construction for free traced symmetric monoidal
categories \(\Ft(\Sigma)\)~\cite{kissinger2014abstract}.
We show an equivalence \(\copg \simeq \Int(\tcopg)\) in \(\Cptcc\) for some \(1\)-signature \(\intsig{\ccsp}\)
\iffull
(see~\cref{subsec:app:freeTrace}).
\else
(see the full version).
\fi

\section{The Semantic Category of Open Parity Games}
\label{sec:SemanticCategory}
In this section, we define a semantic category of open pairty games $\sopg$.
Grellois and Melli{\`e}s restricted \(\scottl\)
to the full subcategory \(\fscottl\)~\cite{grellois2016semantics,grellois2015finitary} of \emph{finite} preordered sets,
in order to introduce a fixpoint operator on \(\fscottl_{!_M}\)
for some suitable comonad \(!_M\) so that it forms a model of higher-order model checking.
We use this fixpoint operator for the Int construction of \(\sopg\).
We then define the \emph{interpretation functor} $\cts{-} \colon \copg
\to \sopg$.
(The reader may look at Ex.~\ref{ex:extExample} for an
example of computation of $\cts{\g{A}}$ on a concrete open parity game
$\g{A}$.)

\begin{definition}[$\fscottl$~\cite{grellois2016semantics,grellois2015finitary}]
  The category \emph{$\fscottl$} has as objects finite preordered
  sets $A = (|A|, \leq_A)$ and as morphisms \(R: A \relto B\)
  downward-closed binary relations \(R\) between \(A^{\op}\) and
  \(B\): i.e., a binary relation $R\subseteq |A|\times |B|$ such that
  if $a'\geq_A a,\ a \,R\, b,\ b\geq_B b'$, then $a' R\, b'$.
  Composition is defined as usual:
  \(a \,(S \circ R)\, c\) iff \(a \,R\, b\) and \(b \,S\, c\) for some \(b \in |B|\).
  The identity on $A$ is $\id_A = \{(a', a)\mid a\leq_A a'\}$.
\end{definition}

Note that $\fscottl$ has a symmetric monoidal structure induced by the
existence of finite cartesian products $1 = (\{\ast\}, = )$ and $(|A|, \leq_A)\times (|B|, \leq_B) = (|A| + |B|,\, {\leq_A} + {\leq_B})$.

In open parity games, we have two roles: $\eve$ and $\adam$.
For player $\eve$, a choice by $\adam$ is not predictable.
This nondeterminism is represented by the finite powerset comonad $\SPfin$.

\begin{definition}[finite powerset comonad] 
  The \emph{finite powerset comonad} $(\SPfin, \epsilon^{\SPfin},
  \delta^{\SPfin})$ on $\fscottl$ is defined by
  \(\SPfin((|A|, \leq_A)) \defeq (\Pfin(|A|), \leq)\), where
  \(X\leq Y\) iff for any \(x\in X\) there exists \(y\in Y\) such that
  \(x\leq_A y\).
  For \(R: A \relto B\),\, $\SPfin(R) \defeq \{(X, Y)\in\Pfin(|A|)
  \times \Pfin(|B|)\mid \forall y\in Y,\allowbreak\ \exists x\in X,
  (x, y)\in R \}$.
  Then $\epsilon^{\SPfin}_A \defeq \{ (X, a) \in \Pfin(|A|) \times |A|
  \mid \exists x\in X, a\leq_A x\}$ and $\delta^{\SPfin}_A \defeq
  \{ (X,\{Y_1,\dots, Y_n\}) \in \Pfin(|A|) \times \Pfin(\Pfin(|A|))
  \mid Y_1\cup \dots \cup Y_n\leq_{\SPfin(A)} X\}$.
\end{definition}

Priorities that are not greater than $M$ are represented by the coloring comonad $\colmm$.
\begin{definition}[coloring comonad \cite{grellois2016semantics,grellois2015finitary}]
  The \emph{coloring comonad} $(\colmm, \epsilon^{\colmm},
  \delta^{\colmm})$ on $\fscottl$ is defined as follows:
  \(\colmm(|A|, \leq_A) = 
 (\Natm{M} \times |A|,\ \leq_{\colmm \!A})\)
  where \((p, a) \leq_{\colmm \!A} (q, b)\) iff \(p = q\) and
  \(a\leq_A b\).
For \(R: A \relto B\), \ $\colmm(R)\defeq 
\big\{\bigl((p, a), (p, b)\bigr) \in (\Natm{M} {\times} |A|) \times (\Natm{M} {\times} |B|)
\ \big|\ p\in \Natm{M}\ \text{and}\ (a, b)\in R\big \}$.
Then $\epsilon^{\colmm}_A \defeq
 \big\{\bigl((0, a), a'\bigr) \in (\Natm{M} {\times} |A|) \times |A|
 \ \big|\ a'\leq_A a\big\}$ and 
  $\delta^{\colmm}_A \defeq
 \big\{\bigl((\max(p, q), a), (p, (q, a'))\bigr)
 \in (\Natm{M} {\times} |A|) \times (\Natm{M} {\times} (\Natm{M} {\times} |A|))
\ \big|\ a'\leq_A a\big\}$.
\end{definition}

\begin{remark}
  \label{remark:fixMaxRank}
  We expect that this comonad can be extended to a \emph{graded comonad}~\cite{fujii2016towards,katsumata2014parametric}
  so that its Kleisli category interprets all open parity games, 
  without fixing the parity bound \(M\).
  However, we do not take this approach because it is reasonably harmless
  to fix the maximal parity, while
  the use of the complex notion of graded comonad makes it hard to see the essential idea.
\end{remark}

Combining the above notions, we define the comonad \(!_{M}\).
\begin{definition}[comonad \(!_{M}\)~\cite{grellois2016semantics,grellois2015finitary}]
We define a distributive law
$\lambda: \SPfin \circ \colmm \Rightarrow \colmm \circ \SPfin$
 on $\fscottl$ by
\(
  \lambda_{(A, \leq_A)} \defeq \big\{\, (X, (p, Y))
  \,\in\, P(\Natm{M} \times |A|) \times (\Natm{M} \times P(|A|))
  \ \big | \ 
  \forall y\in Y, \exists \ a \in A, (p, a)\in X \text{ and } y\leq_A a 
\,\big\}\),
and we define a comonad \(!_M=(!_M,\epsilon^{!_M},\delta^{!_M})\) on $\fscottl$ by:
(i) \(!_M \defeq \SPfin \circ \colmm\),
(ii) \(\epsilon^{!_M} \defeq \epsilon^{\SPfin} \circ (\SPfin \hcomp \epsilon^{\colmm})\), and
(iii) \(\delta^{!_M} \defeq (\SPfin \hcomp \lambda \hcomp \colmm) \circ (\delta^{\SPfin} \hcomp \delta^{\colmm})\)
where \(\hcomp\) is the horizontal composition of natural transformations.
\end{definition}

In general, the Kleisli category of a comonad inherits the cartesian product from the original category,
and so the Kleisli category $\fscottl_{!_M}$ has a cartesian products given by
\((|A| + |B|, \leq_A + \leq_B)\).
(Furthermore, $\fscottl_{!_M}$ is cartesian
closed~\cite{grellois2016semantics,grellois2015finitary}, though we do
not use this fact.)

In order to give a model of higher-order model checking,
Grellois and Melli{\`e}s introduced
a fixpoint operator \(\fix^{\GM}\) on \(\fscottl_{!_M}\)
to deal with infinite
plays~\cite{grellois2016semantics,grellois2015finitary}.
Its definition is based on the notion of \emph{semantic run-tree}. 
From \(\fix^{\GM}\), we get a trace operator on \(\fscottl_{!_M}\),
because having a fixpoint operator is equivalent to having a trace
operator for a cartesian category~\cite{hasegawa1997recursion}.
We then get a trace operator
\(\tr^{\GM}\) on \(\fscottl_{!_M}^{\op}\), since if \(\C\) is a
traced symmetric monoidal category, then so is \(\C^{\op}\) canonically.

Now we give the definition of \(\tr^{\GM}\)
(the definition of \(\fix^{\GM}\) can be found in 
\iffull
Appendix~\ref{sec:FixGM}).
\else
the full version).
\fi
In order to do this, we first adapt the notion of semantic run-tree by
Grellois and Melliès (which we also call \emph{semantic run-tree})
through the correspondance above between fixpoint operators and trace
operators.

\begin{definition}[semantic run-tree for \(\tr^{\GM}\)]
  \label{def:srt}
    Let $R\in\tsopg(D+A, D+B)$ and $a\in |A|$;
    then especially, \(R \subseteq P(\Natm{M} \times (|D|+|B|)) \times (|D|+|A|) \).
    A \emph{semantic run-tree} \(\psi\) for $R$ and $a$ (for the trace operator) is
    a possibly infinite ($\Natm{M}\times (|D|+|A|+|B|)$)-labeled tree \(\psi\) that satisfies the following conditions:
      \begin{enumerate}
        \item \label{def:srt:root} The label of the root of \(\psi\) is $(0, a)\in \Natm{M}\times |A|$.
        \item \label{def:srt:internal} Any node of \(\psi\) that is neither a leaf nor the root has its label in $\Natm{M}\times |D|$.
        \item \label{def:srt:children} For any non-leaf node (possibly being the root that is not a leaf) of \(\psi\) with label $(p, x)\in \Natm{M}\times (|D|+|A|)$,
         let
       $X 
         \subseteq \Natm{M}\times (|D|+|A|+|B|)$
    be the set of the labels of all the children of the node. Then $(X,x)\in R$.
  \item \label{def:srt:leaf} For any leaf node (possibly being the root that is a leaf) of \(\psi\) such that its label belongs to $\Natm{M}\times (|D|+|A|)$
    (rather than \(\Natm{M}\times |B|\)) and is \((p, x)\), we have $(\emptyset, x)\in R$.
      \end{enumerate}
    We write $\tcomp{A}{B}{D}{R}{a}$ for the set of semantic run-trees with respect to \(A,B,D,R\) and \(a\).
    For a semantic run-tree $\psi\in \tcomp{A}{B}{D}{R}{a}$, we define
    $\leaves{\psi} \in |\SPfin(\colmm(B))| = P(\Natm{M} \times |B|)$ as the set of elements 
      $(p, b) \in \Natm{M} \times |B|$ such that
    there exists a leaf \(\ell\) of \(\psi\) such that:
    (i) the label of leaf \(\ell\) is $(p', b)$ for some \(p' \in \Natm{M}\) and
    (ii) $p$ is the maximal priority encountered on the path from the leaf \(\ell\) to the root of $\psi$.
\end{definition}
A semantic run-tree is similar to a (usual) run for a parity game,
except that (i) its branching models $\adam$'s choices, and (ii) it is
induced by a suitable semantic construct $R$ instead of a
graph-theoretic notion of game.
In our use of the notion (\S{}\ref{sec:DenotationalFunctor}), $R$ will
be a ``summary'' of an open parity game, which retains the necessary
data to decide who is winning yet is much smaller than the original
open parity game.

\begin{definition}[trace operator \(\tr^{\GM}\)~\cite{grellois2016semantics,grellois2015finitary}]
  For every \(A,B,D \in \tsopg\), we define a trace operator 
  $\trsopg{D}{A}{B}:\tsopg(D\otimes A, D\otimes B)\rightarrow\tsopg(A,B)$ as follows:
  \[
      \trsopg{D}{A}{B}(R) \defeq 
  \{(\leaves{\psi}, a)\mid \psi \in\tcomp{A}{B}{D}{R}{a} \text{ that meets the
     parity condition}\}
  \]
  where a semantic run-tree meets the \emph{parity condition} if for every infinite path $((p_i,
  x_i))_{i\in \Nat}$,
the maximum priority met infinitely along the path is even
  (i.e., \(\max \{q \mid \#\{i \mid p_i = q\} = \infty\}\) is even).
\end{definition}

Thus \(\fscottl_{!_M}^{\op}\) is a traced symmetric monoidal category.
The trace operator above is used in the (sequential) composition of \(\sopg\) given below.

Now we define the semantic category \(\sopg\) for open parity games.
In \S{}\ref{sec:DenotationalFunctor},
we explain how \(\tsopg\) and \(\sopg\) serve as
the semantic categories in the traced and compact closed structures, respectively,
by giving a suitable \emph{winning-position} functor \(\tots\) from \(\topg\) to \(\tsopg\)
and then by inducing \(\ots\) (see Fig.~\ref{fig:diagram}).

\begin{definition}[semantic category $\sopg$]
By applying the Int construction to the traced symmetric monoidal category \(\fscottl_{!_M}^{\op}\),
we obtain the compact closed category $\Int(\tsopg)$.
\end{definition}

\begin{remark}
  We have \(\tsopg \cong \FinPreord_{T}\), where
  \(\FinPreord\) is the category of finite preordered sets and monotonic functions, and
  (a Kleisli morphism of) the monad \(T\) is of the following form:
  \(
  \tsopg\big(A, B\big)
  \ \cong\ 
  \FinPreord
  \Big(A,
  \big(
  \Pfin^{\uparrow}(\SPfin
  (\mathbf{N}_{M} \times \us{B}
  , \ \leq_{\colmm\!B})
  ), \
  \supseteq
  \big)
  \Big)
  \),
  where \(\Pfin^{\uparrow}\) is the upward-closed powerset.
  This description is closed to the double-powerset style semantics for 2-player games, e.g.,
   in~\cite{hasuo2015generic}.
  \end{remark}

  We want to define an interpretation functor $\cts{-} \colon \copg \to
  \sopg$ that reflects the winning condition on open parity games.
  The idea is that, if $((j_k,p_k)_{k \in n},i) \in \cts{G}$, then player $\eve$ can force any play that
  starts from the entry position corresponding to $i$ in $G$ to end in one of
  the exit positions corresponding to the $j_k$'s while encountering a maximum
  priority of $p_k$.
  By Thm.~\ref{thm:freeCptCC}, we obtain this functor as:
\begin{definition}[interpretation functor $\cts{-}$]
  We define the \emph{interpretation functor} $\cts{-}:\copg\rightarrow \sopg$
  to be the compact closed functor whose action on objects is
  generated by \(\cts{\dr} = \big(([1], =),(\emptyset, =)\big) \in \sopg\) and whose action on morphisms is generated by:
  \begin{align*}
  \cts{\nd[a,b]{\eve}{p}}
  &= \begin{cases}
    \emptyset & (\drf{b} + \dlf{a} = 0)\\
    \big\{\,(T, i)  \ \big|\ 
   (j, p) \in T \text{ for some } j\in [\drf{b} + \dlf{a}] \,\big\} & (\drf{b} + \dlf{a} \neq 0)
  \end{cases}
  \\
  \cts{\nd[a,b]{\adam}{p}} &= \begin{cases}
    P(\emptyset) \times \nset{\drf{a}+\dlf{b}} & (\drf{b} + \dlf{a} = 0)\\
    \big\{\,(T, i) \ \big|\ 
   \{(j, p)\mid j\in [\drf{b} + \dlf{a}] \}\subseteq T\,\big\} & (\drf{b} + \dlf{a} \neq 0).
  \end{cases}
  \end{align*}
  Both morphisms above are from
  \(\cts{a} = (\nset{\drf{a}},\nset{\dlf{a}})\)
  to \(\cts{b} = (\nset{\drf{b}},\nset{\dlf{b}})\)
  in \(\sopg\), i.e.,
  \(P(\Natm{M} \times \nset{\drf{b}+\dlf{a}})
  \relto \nset{\drf{a}+\dlf{b}}\) in \(\fscottl\),
  where powerset is ordered by inclusion.
\end{definition}

\section{Strategies and the Winning Position Functor}
\label{sec:DenotationalFunctor}

In \S{}\ref{subsec:strategies},
we define notions of play and strategy for open parity games
(in the traditional style of graph game),
as well as winning, losing, and \emph{pending} strategies and positions.
We use these definitions in
\S{}\ref{subsec:WinningPositionFunctor} to define the
\emph{winning-position} functor $\tots:\topg\rightarrow\tsopg$, which
gives information that allows compositional computation of winning
positions.
We show that the diagram in Figure~\ref{fig:diagram} commutes,
which gives a justification of our compositional approach to parity games.
In this section, we assume that
a given rightward open parity game \(\mathcal{A}\) is of the form
$\mathcal{A} = (m, n, Q, E, \rho, M, \omega)$.

\subsection{Winning Strategies and Winning Positions for Open Parity Games}
\label{subsec:strategies}

Here we give the notions of strategy and play.
We also define the \emph{denotation} of a strategy/position, which is
how they win, lose, or are pending.
These definitions are given only for \emph{rightward} open parity games,
but we can readily extend them to general open parity games,
because any open parity game is a rightward open parity game
by definition: \(\opg((\nfr{m},\nfl{m}),\,(\nfr{n},\nfl{n}))
=\topg(\nfr{m}+\nfl{n},\nfr{n}+\nfl{m})\).

First we define the notion of strategy.
For an open parity game $\g{A} \colon \nset{m} \to \nset{n}$, a family
\((s_i)_{i \in I}\) of positions in $\nset{m}+\nset{n}+Q$ is called a \emph{position sequence} if
\(I = \Nat_{\ge 1}\) or \(I = \set{1,\ldots,k}\) for some \(k \in \Nat_{\ge 1}\),
(in that case, we also write \((s_i)_{i \in I}\) as \(s_1 \cdots s_k\)).
\begin{definition}[$\eve$-strategy and $\adam$-strategy]
  Let $\mathcal{A}$ be a rightward open parity game from \(m\) to \(n\).
  We define \(\Pe(\g{A}) \defeq \big\{\,s_1\cdots s_k \ \big|\ k \ge 1,\, s_i \in Q\ (i \in \nset{k}),\ (s_i,
   s_{i+1})\in E\ (i\in \nset{k-1}),\, \rho(s_k) = \eve \,\big\}\).
   We often just write $\Pe$ if there is no confusion.
  Then, an \emph{$\eve$-strategy} on $\mathcal{A}$ is a partial function
   $\tau : \Pe \rightharpoonup \nset{n}+Q$ 
   where for any \(s_1\cdots s_k \in \Pe\), (i) if $\tau(s_1\cdots s_k) = s$, then $(s_k, s) \in E$, and (ii) if $\tau(s_1\cdots s_k)$ is undefined, then for all $s\in \nset{n}+Q$, $(s_k,s)\notin E$.
   A $\adam$-strategy on an open parity game $\mathcal{A}$ is defined in the same way,
   by replacing the occurrence of \(\eve\) with \(\adam\) in the above definition.
   The sets of
   $\eve$-strategies and $\adam$-strategies on $\mathcal{A}$ are $\streve{\mathcal{A}}$ and $\stradam{\mathcal{A}}$, respectively.
\end{definition}

A pair of an \(\eve\)-strategy and a \(\adam\)-strategy resolves the
non-determinism in a game to induce a unique play:
\begin{definition}\dt[induced play \(\play{a}{\tau_{\eve}}{\tau_{\adam}}\)]
Let $\mathcal{A}$ be a rightward open parity game from \(m\) to \(n\).
The \emph{induced play} \(\play{a}{\tau_{\eve}}{\tau_{\adam}}\) from an entry position $a\in \nset{m}$ by an $\eve$-strategy
 $\tau_{\eve}$ and a $\forall$-strategy $\tau_{\adam}$
is the (necessarily unique) maximal position-sequence $(s_i)_{i\in I}$ (for the prefix order) such that:
(i) \(a \,E\, s_1\),
(ii) for any $i\in I$, if $\rho(s_i) = \eve$ and $\tau_{\eve}(s_1\cdots s_i)$ is defined,
 then $\tau_{\eve}(s_1\cdots s_i) = s_{i+1}$, and similarly
(iii) for any $i\in I$, if $\rho(s_i) = \adam$ and $\tau_{\adam}(s_1\cdots s_i)$ is defined,
 then $\tau_{\adam}(s_1\cdots s_i) = s_{i+1}$.
\end{definition}

The following notion for a play corresponds to the \emph{winning condition} in (traditional) game theory,
where the condition is two-valued, ``win'' or ``lose''.
Below \(\eve\) and \(\adam\) correspond to ``win'' and ``lose'',
but we have other intermediate results \((m, s_{|I|})\) due to the openness, which we call \emph{pending
states}.
In this paper, we call the following many-valued winning/losing/pending condition
\(\deplay{-}{\mathcal{A}}\) on plays simply \emph{winning condition}.
An infinite position sequence $(s_i)_{i\in\Nat}$ satisfies the \emph{parity condition} if 
the maximum of priorities that occur infinitely in the play is even.
We apply the following notion \(\deplay{-}{\mathcal{A}}\) only to induced plays.
\begin{definition}[winning condition \(\deplay{-}{\mathcal{A}}\) on plays]
\label{def:denotationOfPlay}
    Let $\mathcal{A}$ be a rightward open parity game.
The \emph{denotation} $\deplay{(s_i)_{i\in I}}{\mathcal{A}}$ of a position sequence $(s_i)_{i\in I}$ is defined as
    \begin{alignat*}{2}
        & (m, s_{|I|}) &\text{ if }&I \text{ is finite, }m = \max\{\omega(s_i): i\in I\},\text{ and }
s_{|I|} \text{ is an open end,}\\
        & \eve &\text{ if }&(I \text{ is finite and }\rho(s_{|I|}) = \adam) \text{ or }(I \text{ is infinite and }(s_i)_{i\in I}\text{ satisfies the parity condition}),\\
        & \adam &\text{ if }&(I \text{ is finite and }\rho(s_{|I|}) = \eve) \text{ or }(I \text{ is infinite and }(s_i)_{i\in I}\text{ does not satisfy the parity condition}).
    \end{alignat*}
We call the function \(\deplay{-}{\mathcal{A}}\) the \emph{winning condition} of \(\g{A}\).
\end{definition}

Next, we define the denotation of an \(\eve\)-strategy;
note that an \(\eve\)-strategy
is a strategy for the ``player'' while \(\adam\)-strategies are those
for the ``opponent''.
The denotation is ``lose'' if there is a losing play,
and otherwise is the collection of all the pending states;
if the collection is the empty set, then the denotation is ``win''.
\begin{definition}[denotation of positions and \(\eve\)-strategies]
\label{def:denotationOfStrategy}
    Let $\mathcal{A}$ be a rightward open parity game from \(m\) to \(n\).
    The \emph{denotation} $\destr{(a, \tau_{\eve})}$ of an entry position $a \in \nset{m}$ 
and an $\eve$-strategy $\tau_{\eve}$
is defined by
    \begin{align*}
      & \destr{(a, \tau_{\eve})} \defeq
\left\{
\begin{aligned}
         &\lose 
&\text{if there is } \tau_{\adam} \text{ such that }
         \deplay{\play{a}{\tau_{\eve}}{\tau_{\adam}}}{\mathcal{A}} = \adam\text{,}&
\\
         &\big\{\, \deplay{\play{a}{\tau_{\eve}}{\tau_{\adam}}}{\mathcal{A}} \in \Natm{M} \times \nset{n}
        \ \big|\ \tau_{\adam}\in \stradam{\mathcal{A}}\text{ and } 
      \deplay{\play{a}{\tau_{\eve}}{\tau_{\adam}}}{\mathcal{A}}  \neq \eve \,\big\}
\mspace{-150mu}
      & \text{otherwise}&.
\end{aligned}
\right.
    \end{align*}
\end{definition}

\subsection{The Winning Position Functor \(\ots\)}
\label{subsec:WinningPositionFunctor}

Now we give the central notion of this section, the \emph{winning position} functor \(\ots\),
which is a compact closed functor constructed by the $\Int$-construction of a traced symmetric strong monoidal functor $\tots$.
In the definition of $\tots$,
if we fix an entry position \(a\), then
\(\tots(\g{A})\) (or precisely \(\{T \mid (T,a) \in \tots(\g{A})\}\)) is the upward-closed set generated by
the denotations \(\destr{(a, \tau_{\eve})}\) of \(a\) and
all \(\eve\)-strategies \(\tau_{\eve}\) that does not lose from \(a\):
\clovis{better explanation?}
\begin{definition}[the functor \(\tots\)]
  \label{def:solutionFunctor}
      We define a functor $\tots:\topg\rightarrow \tsopg$.
      The mapping on objects is given by \(\tots(m) \defeq (\nset{m}, = )\), and
      for a morphism
       $\mathcal{A} \in \topg(m, n)$,
       \begin{align*}
           &\tots(\mathcal{A})
  \defeq
        \big\{\, (T,a) \in P(\Natm{M} \times \nset{n}) \times \nset{m} \;\big|\;
       \destr{(a, \tau_{\eve})} \neq \lose \text{ and } \destr{(a, \tau_{\eve})} \subseteq T
   \text{ for some \(\eve\)-strategy } \tau_{\eve}
   \;\big\}.
       \end{align*}
  \end{definition}
  The functor \(\tots\) determines whether an entry position wins, 
  but the precise perspective is as follows.
  As mentioned in the introduction,
  in the traditional notion of (non-open) parity games,
  a position is just either winning or losing, two-valued.
  With the new notion of open ends, however,
  we have the intermediate result of pending states.
  The following definition reflects this idea.

\begin{definition}[winning/losing/pending positions]
    \label{def:winLosePendingPositions}
    Let \(\g{A}\) and \(a\) be a rightward open parity game and an entry position, respectively.
    (i) \(a\) is \emph{winning} if \((\emptyset,a) \in \tots(\g{A})\),
    (ii) \(a\) is \emph{losing} if \((T,a) \notin \tots(\g{A})\) for any \(T\), and
    (iii) \(a\) is \emph{pending} otherwise (i.e., if  \((\emptyset,a) \notin \tots(\g{A})\)
    and \((T,a) \in \tots(\g{A})\) for some \(T \neq \emptyset\)).
\end{definition}

There is an obvious transformation that maps
a traditional parity game \(G\) and position \(x\)
into an open parity game \(\g{A}^{G}_{x} : 1 \to 0 \in \topg\),
where \(1 \in \nset{1}\) points to the internal position \(x\).
The notion of winning/losing defined above agrees with the traditional
one in the following sense (n.b.~there is no pending case):
\begin{proposition}
\label{prop:classical-parity-games}
Given a (traditional) parity game $G$ and a position $x$ in $G$,
\(x\) is winning (resp. losing) in \(G\) iff \(x\) is winning (resp. losing) in \(\g{A}^{G}_{x}\).
\qed
\end{proposition}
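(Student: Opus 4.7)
The plan is to exploit that $\g{A}^{G}_{x}$ has codomain $0$: since $\nset{0} = \emptyset$, the set $\Natm{M}\times \nset{0}$ is empty, so by Def.~\ref{def:denotationOfStrategy} every $\destr{(1,\tau_{\eve})}$ is either $\lose$ or the empty set. In particular, $\tots(\g{A}^{G}_{x})$ is either $\emptyset$ or $\{(\emptyset,1)\}$, which shows that the \emph{pending} case of Def.~\ref{def:winLosePendingPositions} cannot occur. Thus $1$ is winning (resp.\ losing) in $\g{A}^{G}_{x}$ iff there exists (resp.\ there does not exist) an $\eve$-strategy $\tau_{\eve}$ with $\destr{(1,\tau_{\eve})} \neq \lose$. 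Since winning and losing are mutually exclusive and together exhaustive in both the traditional and the open setting for this specific $\g{A}^{G}_{x}$, it suffices to prove the winning equivalence.

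Next I would set up the obvious bijective correspondence between the data of $G$ and of $\g{A}^{G}_{x}$. By construction, $Q^{\g{A}^{G}_{x}} = Q^{G}$ with matching role and priority functions, and the internal edge relation of $\g{A}^{G}_{x}$ agrees with $E^{G}$; the only extra edge is $(1,x)$. Using these identifications, an $\eve$-strategy $\sigma_{\eve}$ in $G$ corresponds to a $\tau_{\eve} \in \streve{\g{A}^{G}_{x}}$, and similarly for $\adam$, because $\Pe(\g{A}^{G}_{x})$ is by definition a set of sequences of \emph{internal} positions satisfying exactly the conditions that define $\Pe$ in $G$. For corresponding strategy pairs, the induced play $\play{1}{\tau_{\eve}}{\tau_{\adam}}$ has $s_{1}=x$ (because the unique outgoing edge of $1$ goes to $x$), and the remainder $(s_{1},s_{2},\ldots)$ coincides with the traditional $\mathrm{play}^{\sigma_{\eve},\sigma_{\adam}}_{x}$.

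Then I would match winning conditions. A finite play $\play{1}{\tau_{\eve}}{\tau_{\adam}} = (s_{1},\ldots,s_{k})$ in $\g{A}^{G}_{x}$ necessarily ends at an \emph{internal} position: the only open end of $\g{A}^{G}_{x}$ is the entry $1$, which has an outgoing edge, and there are no exit positions. Hence, by Def.~\ref{def:denotationOfPlay}, the denotation is $\eve$ iff $\rho(s_{k})=\adam$, which is exactly when the corresponding finite play in $G$ is winning for $\eve$. For infinite plays, both frameworks use the same parity condition on the shared priority function (the open end $1$ contributes $\omega(1)=0$ only at the first step, which does not affect the $\limsup$). Therefore $\destr{(1,\tau_{\eve})} \neq \lose$ iff for every $\tau_{\adam}$ the induced play has denotation $\eve$ iff the corresponding $\sigma_{\eve}$ is winning from $x$ in $G$, which is the desired equivalence.

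I do not foresee a conceptual obstacle; the bulk of the work is careful bookkeeping of the indexing offset caused by the extra entry step $1 \,E\, x$, and a clean identification of partial functions $\sigma_{\eve}:\Pe(G)\rightharpoonup Q^{G}$ with $\tau_{\eve}:\Pe(\g{A}^{G}_{x})\rightharpoonup \nset{0}+Q^{G} = Q^{G}$. The equivalence for the losing case can then be obtained by contraposition, using that the pending case is vacuous here.
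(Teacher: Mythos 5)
Your proposal is correct: the paper states this proposition without proof (treating it as immediate from the definitions), and your argument is exactly the routine verification it leaves implicit — observing that the codomain $0$ rules out pending outcomes, identifying plays and strategies of $G$ with those of $\g{A}^{G}_{x}$ up to the initial edge $1\,E\,x$, and matching the finite and infinite winning conditions. The only wrinkle worth noting is that the identification of strategy sets is not literally a bijection, since an $\eve$-strategy in the open setting must be defined wherever an outgoing edge exists while a traditional one need not be, but this is harmless because a traditional strategy that is undefined at a reachable $\eve$-position with successors is losing anyway, so the existence of a winning strategy is preserved in both directions.
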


The main technical result of this section is stated below, and allows
us to define a compact closed functor \(\ots : \opg \to \sopg\).
\begin{theorem}
  \label{thm:winning-position-functor}
The functor $\tots:\topg\rightarrow \tsopg$ is a traced symmetric strict monoidal functor.
\qed
\end{theorem}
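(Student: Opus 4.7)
The plan is to verify the four required properties of $\tots$: (i) preservation of identities and composition (functoriality), (ii) preservation of the monoidal product on objects, morphisms, and unit, (iii) preservation of symmetries, and (iv) preservation of the trace. The first three are essentially bookkeeping, so I would dispatch them first; the content of the theorem lies in (iv), which I expect to be the main obstacle.

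For functoriality, the identity game $\ID_n$ has no internal positions, so the only $\eve$-strategy is the nowhere-defined one, and the induced play from entry $a \in \nset{n}$ is the length-two sequence ending at exit $a$ (with open-end priority $0$). Hence $\destr{(a,\tau_{\eve})} = \{(0,a)\}$, matching $\id_{\tots(n)}$ in $\tsopg$. For composition $\g{A};\g{B}$, an $\eve$-strategy on the composite decomposes uniquely into a pair of $\eve$-strategies on $\g{A}$ and on $\g{B}$ because $Q^{\g{A};\g{B}} = Q^{\g{A}} + Q^{\g{B}}$ and the role function is the copairing $[\rho^{\g{A}},\rho^{\g{B}}]$. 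A finite or infinite play on $\g{A};\g{B}$ factors at the hidden interface into alternating segments on each piece, and the denotation is computed segment-by-segment, with priorities accumulating as pointwise maxima. This matches the composition in $\tsopg$, which by the Int/trace formula glues $\tots(\g{A})$ and $\tots(\g{B})$ along $\tots(\nset{n^{\g{A}}})$ and records the priorities along the resulting paths.

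For the monoidal structure, an edge in $\g{A} \parallel \g{B}$ never crosses between $Q^{\g{A}}$ and $Q^{\g{B}}$, so any play stays in one component and any $\eve$-strategy restricts independently to each side. Thus $\tots(\g{A} \parallel \g{B})$ factors as the cartesian product of $\tots(\g{A})$ and $\tots(\g{B})$ in $\fscottl_{!_M}^{\op}$, and the empty game maps to the unit $(\emptyset,=)$. The swap game $\sigma_{m,n}$ has no internal positions and no priorities, so $\tots(\sigma_{m,n})$ is exactly the symmetry on $(\nset{m},=) \otimes (\nset{n},=)$ in $\tsopg$.

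The main step is preservation of the trace. Given $\g{A} \in \topg(l+m, l+n)$ and $a \in \nset{m}$, I would establish a correspondence between pairs $(\tau_{\eve}, \tau_{\adam})$ of strategies on $\tropg{l}{m}{n}(\g{A})$ (with $\tau_{\eve}$ not losing) and semantic run-trees $\psi \in \tcomp{\nset{m}}{\nset{n}}{\nset{l}}{\tots(\g{A})}{a}$ meeting the parity condition. The idea is that in the traced game, a play from $a$ bounces through the traced open ends in $\nset{l}$; each such bounce corresponds to descending one level in a run-tree. The $\adam$-branching of the run-tree encodes $\adam$'s possible choices, its internal nodes in $\Natm{M} \times \nset{l}$ record the maximum priority accumulated since the last visit to a traced end, and its leaves in $\Natm{M} \times \nset{n}$ record pending exits into the non-traced codomain. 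Under this correspondence, $\leaves{\psi}$ matches $\destr{(a,\tau_{\eve})}$, and an infinite branch of $\psi$ corresponds to an infinite play that visits $\nset{l}$ infinitely; the parity condition transfers because the accumulated maxima on run-tree edges and the priorities on the full plays have the same eventual supremum. I expect the delicate point to be this last claim: one must carefully check that priorities incurred at internal positions \emph{between} two traced visits are not lost, which follows because $\tots(\g{A})$ already records, for each segment between open ends, the maximum priority along it, and because $\tau_{\eve}$'s denotation is the upward closure of these segmentwise data. Once this bijection is established, the formula for $\tr^{\GM}$ matches $\tots \circ \tropg{l}{m}{n}$ on the nose.
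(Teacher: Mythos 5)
Your overall strategy---direct verification, with strategies decomposed and recombined across composition and a strategy/run-tree correspondence for the trace---is the same as the paper's, and your treatment of identities, the monoidal product, and the symmetries is fine. But you have misplaced the difficulty: the paper's proof spends almost all of its effort on sequential composition, and the two claims you wave through there as ``bookkeeping'' are the ones that fail as stated. First, an $\eve$-strategy on $\g{A};\g{B}$ does \emph{not} decompose into a pair of strategies: since strategies are history-dependent partial functions on $\Pe$, the restriction to the $\g{B}$-part is a whole family of $\g{B}$-strategies indexed by the history inside $\g{A}$, whereas a witness of $\tots(\g{B})\circ\tots(\g{A})$ supplies only one $\g{B}$-strategy per pending outcome $(p,j)$ at the interface. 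Recombining these therefore yields a strategy whose denotation is only \emph{contained in} that of the original composite strategy, and the equality $\tots(\g{A};\g{B})=\tots(\g{B})\circ\tots(\g{A})$ goes through only because $\tots$ is by definition upward closed in $T$; you invoke upward closure for the trace but not here, where it is equally indispensable. Second, recombination has a well-definedness problem you do not address: the interface positions are erased in $\g{A};\g{B}$, so a play in the composite cannot distinguish two exit positions of $\g{A}$ with the same successor (``essentially same'' entry positions of $\g{B}$ in the paper's terminology), and the $\g{B}$-strategies assigned to them must coincide---the paper arranges this by replacing each with the stronger of the two.

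For the trace, the correspondence should be between non-losing $\eve$-strategies and semantic run-trees; a pair $(\tau_{\eve},\tau_{\adam})$ determines a single play, i.e.\ one branch, not a tree (the rest of your paragraph indicates you mean the former, but the opening sentence says otherwise). The same history-versus-interface information loss and the same well-definedness issue recur there, and there is one further point the paper must check that you omit: $\tropg{l}{m}{n}$ erases the traced open ends, and one must verify that no reachable connection is lost in doing so---this is exactly what the uniqueness requirement in Condition~(\ref{item:OPGedge}) of Def.~\ref{def:openParityGame} guarantees. Your intuition that the accumulated maxima on run-tree edges and the priorities along infinite plays have the same eventual behaviour, so that the parity condition transfers, is correct and matches the paper.
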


\begin{definition}[winning position functor $\ots$]
  We define the \emph{winning position functor} $\ots$ by $\mathrm{Int}(\tots)$.
\end{definition}

Summarizing all the main results in this paper, we obtain the following theorem:
\ichiro{The following sentence is vague. Perhaps it's better not said. the point you want to convey is
extensively explained in Section~\ref{extExample}, anyway}
\begin{theorem}
\label{thm:triangleOfFunctors}
The triangle in Fig.~\ref{fig:diagram} commutes: $\cts{-} \ \simeq \ \ots \circ \cto$.
\qed
\end{theorem}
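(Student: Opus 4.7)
The plan is to leverage the free compact closedness of \(\copg\) established in Thm.~\ref{thm:freeCptCC}. Both \(\cts{-}\) and \(\ots \circ \cto\) are compact closed functors from \(\copg\) to \(\sopg\): the former by its very definition via Thm.~\ref{thm:freeCptCC}, and the latter as a composite of two compact closed functors (with \(\ots = \Int(\tots)\) being compact closed by the general property of the Int construction applied to Thm.~\ref{thm:winning-position-functor}). By the uniqueness up to monoidal natural isomorphism asserted in Thm.~\ref{thm:freeCptCC}, it suffices to verify that the actions \(\CompCpt{\cts{-}}{\eta_{\ccsp}}\) and \(\CompCpt{(\ots \circ \cto)}{\eta_{\ccsp}}\) on the generating valuation \(\eta_{\ccsp}\) coincide as valuations of \(\ccsp\) into \(\sopg\).

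For the sole object generator \(\dr\), agreement is immediate: by definition of \(\cto\), \(\cto(\dr) = (1,0)\), so \(\ots(\cto(\dr)) = \bigl((\nset{1},=),(\emptyset,=)\bigr) = \cts{\dr}\). For each morphism generator \(\nd[w,u]{r}{p}\), the realization \(\cto(\nd[w,u]{r}{p})\) is the single-node open parity game whose unique internal position \(\uval\) has role \(r\) and priority \(p\), with every entry position pointing to \(\uval\) and \(\uval\) pointing to every exit position. What remains is to compute \(\tots\) of this game from Def.~\ref{def:solutionFunctor} and compare it with \(\cts{\nd[w,u]{r}{p}}\). Modulo the coherence isomorphisms \(\phi^{F}_{w}\) from Def.~\ref{def:actionObjOnly} (which identify the tensor decomposition \(\cts{\dr^{i_1}\cdots \dr^{i_n}}\) with the flat object \(\ots(\cto(w))\)), this yields the required equality of valuations.

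The expected main obstacle is the case analysis on \(r\) and on whether \(\drf{u} + \dlf{w} = 0\), which must be handled in accordance with Defs.~\ref{def:denotationOfPlay}--\ref{def:denotationOfStrategy}. When \(r = \eve\) and \(\drf{u} + \dlf{w} = 0\), the node \(\uval\) has no outgoing edge, so every induced play terminates at \(\uval\) with \(\rho(\uval) = \eve\); hence \(\destr{(a,\tau_{\eve})} = \lose\) for all \(a\) and \(\tau_{\eve}\), giving \(\tots(\cto(\nd[w,u]{\eve}{p})) = \emptyset\), which matches \(\cts{\nd[w,u]{\eve}{p}}\). When \(r = \eve\) and \(\drf{u} + \dlf{w} \neq 0\), the only non-lose choice of \(\tau_{\eve}\) is to set \(\tau_{\eve}(\uval) = j\) for some \(j \in \nset{\drf{u} + \dlf{w}}\), producing the two-step play \(\uval,j\) with denotation \((p,j)\) (the open end \(j\) has priority \(0 \leq p\)); hence \(\destr{(a,\tau_{\eve})} = \{(p,j)\}\), and the upward closure of these denotations over all choices of \(j\) and \(a\) is precisely \(\cts{\nd[w,u]{\eve}{p}}\). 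The \(r = \adam\) subcases are dual, with the roles of player and opponent swapped: the only \(\eve\)-strategy is nowhere-defined, each \(\adam\)-strategy picks an exit, and the resulting pending-state set \(\{(p,j) \mid j \in \nset{\drf{u} + \dlf{w}}\}\) (or \(\emptyset\) when there are no exits) matches \(\cts{\nd[w,u]{\adam}{p}}\) after taking upward closure. This verification, while routine, requires care regarding the convention that open ends carry priority \(0\) and regarding the bookkeeping of the Int construction.
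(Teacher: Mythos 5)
Your proposal is correct and follows exactly the route the paper intends: the paper gives no written-out proof of Thm.~\ref{thm:triangleOfFunctors} (it is stated with an immediate \textit{qed}), but its announced strategy is that ``the commutativity is proved by the freeness,'' i.e., both $\cts{-}$ and $\ots \circ \cto$ are compact closed functors out of the free compact closed category $\copg$ and agree on the generating valuation of $\ccsp$, which is precisely your argument. Your case-by-case computation of $\tots$ on the one-node realizations $\cto(\nd[w,u]{r}{p})$ (including the priority-$0$ convention on open ends and the degenerate no-exit cases) correctly reproduces the defining clauses of $\cts{-}$, so nothing is missing.
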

We remark that we can obtain a similar result to the above in the TSMC setting by the freeness of \(\tcopg\).

Given any open parity game, 
which can be represented also by a morphism in \(\copg\) 
by the fullness of \(\cto\) (Thm.~\ref{thm:fullness}),
the above Thm.~\ref{thm:triangleOfFunctors} says that
we can calculate whether an entry position is winning, losing, or pending, either 
 (i) by calculating strategies (i.e., by \(\ots\)), or \emph{equivalently}
(ii) by induction (i.e., by \(\cts{-}\)) without calculating strategies.
An elaborated example on how we can compute the denotation of an entry position of an open parity game by the
induction \(\cts{-}\) can be found
below.

Finally, note that 
the notion of winning/losing/pending \emph{position} is defined for \(\sopg\),
and hence is defined also for \(\copg\) and \(\opg\), by using \(\cts{-}\) and \(\ots\), respectively.
On the other hand, the notion of winning/losing/pending \emph{strategy} is defined for morphisms of
\(\opg\) (and hence of \(\copg\), too) but not of \(\sopg\).
In particular, we can conclude that we have given an \emph{abstract} (or extensional)
semantics for open parity games, by eliminating the information of strategies.

\begin{figure}
  \newcommand{\lsp}{10pt}
  \begin{center}
    \begin{minipage}[b]{0.17\hsize}
      \centering
      \begin{tikzpicture}[
          innode/.style={draw, rectangle, minimum size=0.5cm},
          interface/.style={inner sep=0}
          ]
          \node[interface] (rdo1) at (-1.5cm, -2cm) {$\leftpos{1}$}; 
          \node[innode] (in3) at (-0.5cm, -1cm) {$a$};
          \node[anchor=south] (in3lab) at (in3.north) {$\eve, 1$};
          \node[innode] (in4) at (1cm, -1cm) {$b$};
          \node[anchor=south] (in4lab) at (in4.north) {$\adam, 1$};
          \node[innode] (in5) at (1cm, -2cm) {$c$};
          \node[anchor=south] (in4lab) at (in5.north) {$\eve, 2$};
          \draw[->] (rdo1) to (in3);
          \draw[->] (in3) to (in4);
          \draw[->] (in5) to (in3);
          \draw[<-] (1.3cm, -2cm) arc [radius=0.5, start angle = 270, end angle=450];
          \draw[->] (-0.2cm, -0.8cm) to (0.2cm, -0.8cm);
          \draw[->] (0.1cm, -0.8cm) arc [radius=0.3, start angle = 270, end angle=450];
          \draw[<-] (-1.2cm, -0.2cm) to (0.2cm, -0.2cm);
          \draw[->] (-1.2cm, -0.2cm) arc [radius=0.3, start angle = 90, end angle=270];
          \draw[->] (-1.2cm, -0.8cm) to (-0.8cm, -0.8cm);
      \end{tikzpicture}
      \subcaption{Parity game $\mathcal{A}$.}
      \label{fig:exOpenParityGame}
    \end{minipage}
    \hspace{\lsp}
    \begin{minipage}[b]{0.25\hsize}
      \centering
      \begin{tikzpicture}[
        innode/.style={draw, rectangle, minimum size=0.5cm},
        interface/.style={inner sep=0}
        ]
        \node[interface] (rdo1) at (-1.5cm, -2cm) {$\leftpos{1}$}; 
        \node[interface] (lcdo1) at (-0.5cm, -0.2cm) {$\rightpos{1}$}; 
        \node[interface] (rcdo2) at (-0.5cm, -0.8cm) {$\rightpos{2}$}; 
        \node[interface] (rcdo3) at (-0.5cm, -2cm) {$\rightpos{3}$}; 
        \draw[->] (rdo1) to (rcdo3);
        \draw[->] (-0.8cm, -0.2cm) arc [radius=0.3, start angle = 90, end angle=270];
      \end{tikzpicture}
      \subcaption{Open parity game $\subopg{\g{A}}_1$.}
      \label{fig:exSubOpenParityGame1}
    \end{minipage}
    \hspace{\lsp}
    \begin{minipage}[b]{0.25\hsize}
      \centering
      \begin{tikzpicture}[
        innode/.style={draw, rectangle, minimum size=0.5cm},
        interface/.style={inner sep=0}
        ]
            \node[interface] (ldo1) at (-1.5cm, -0.2cm) {$\leftpos{1}$}; 
            \node[interface] (rdo2) at (-1.5cm, -0.8cm) {$\leftpos{2}$}; 
            \node[interface] (rdo3) at (-1.5cm, -2cm) {$\leftpos{3}$}; 
            \node[innode] (in3) at (-0.5cm, -1cm) {$a$};
            \node[anchor=south] (in3lab) at (in3.north) {$\eve, 1$};
            \node[interface] (rcdo1) at (1cm, -1cm) {$\rightpos{1}$}; 
            \node[interface] (lcdo2) at (1cm, -2cm) {$\rightpos{2}$}; 
            \draw[->] (rdo3) to (in3);
            \draw[->] (in3) to (rcdo1);
            \draw[->] (lcdo2) to (in3);
            \draw[->] (-0.2cm, -0.8cm) to (0.2cm, -0.8cm);
            \draw[->] (0.1cm, -0.8cm) arc [radius=0.3, start angle = 270, end angle=450];
            \draw[<-] (ldo1) to (0.2cm, -0.2cm);
            \draw[->] (rdo2) to (-0.8cm, -0.8cm);
      \end{tikzpicture}
      \subcaption{Open parity game $\subopg{\g{A}}_2$.}
      \label{fig:exSubOpenParityGame2}
    \end{minipage}
    \hspace{\lsp}
    \begin{minipage}[b]{0.23\hsize}
      \centering
      \begin{tikzpicture}[
        innode/.style={draw, rectangle, minimum size=0.5cm},
        interface/.style={inner sep=0}
        ]
        \node[interface] (rdo1) at (-1.5cm, 0cm) {$\leftpos{1}$}; 
        \node[interface] (ldo2) at (-1.5cm, -1.5cm) {$\leftpos{2}$}; 
        \node[innode] (in3) at (-0.5cm, 0cm) {$b$};
        \node[anchor=south] (in3lab) at (in3.north) {$\adam, 1$};
        \node[innode] (in4) at (-0.5cm, -1.5cm) {$c$};
        \node[anchor=south] (in4lab) at (in4.north) {$\eve, 2$};
        \draw[->] (rdo1) to (in3);
        \draw[->] (in4) to (ldo2);
        \draw[->] (in3) to (0.3cm, 0cm);
        \draw[->] (0.3cm, -1.5cm) to (in4);
        \draw[<-] (0.3cm, -1.5cm) arc [radius=0.75, start angle = 270, end angle=450];
      \end{tikzpicture}
      \subcaption{Open parity game $\subopg{\g{A}}_3$.}
      \label{fig:exSubOpenParityGame3}
    \end{minipage}
  \end{center}
  \caption{An extended example.}
  \label{fig:extendedExample}
\end{figure}
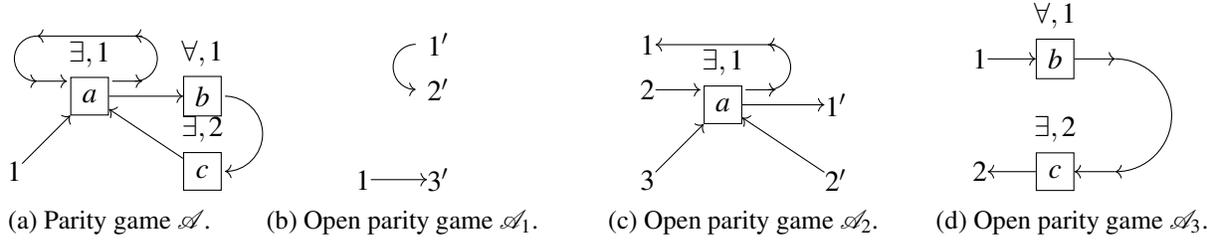

\begin{example}
  \label{ex:extExample}
  Let $\mathcal{A}$ be the open parity game in
  Fig.~\ref{fig:exOpenParityGame}.
  We want to check whether the position $\leftpos{1}$ is winning
  by composing the interpretations of
  $\mathcal{A}$'s subgames.

  Concretely, $\mathcal{A}$ is divided as $\subopg{\mathcal{A}}_{3}
  \circ \subopg{\mathcal{A}}_{2}\circ \subopg{\mathcal{A}}_{1}$ with
  $\subopg{\mathcal{A}}_{1}$, $\subopg{\mathcal{A}}_{2}$, and
  $\subopg{\mathcal{A}}_{3}$ shown in
  Fig.~\ref{fig:exSubOpenParityGame1},~\ref{fig:exSubOpenParityGame2}, 
  and~\ref{fig:exSubOpenParityGame3}, 
  respectively (note that open ends are labelled
  using prop-style ordering, and while the ordering in $\sopg$ is
  different, we keep the same notations for readability).
  It follows directly by unfolding definitions and by compact
  closedness of $\cts{-}$ that
  $\cts{\subopg{\mathcal{A}}_{1}} = \setcomp{(T,\leftpos{1})}{(0,
  \rightpos{3}) \in T} \cup \setcomp{(T, \rightpos{1})}{(0,
  \rightpos{2}) \in T}$,
  $\cts{\subopg{\mathcal{A}}_{2}} = \setcomp{(T,i)}{i \in
  \set{\leftpos{2},\leftpos{3},\rightpos{2}}, \exists j \in
  \set{\leftpos{1},\rightpos{1}}, (1,j) \in T}$,
  and $\cts{\subopg{\mathcal{A}}_{3}} = \setcomp{(T,\leftpos{1})}{(2,
  \leftpos{2}) \in T}$, which are indeed the expected resuts.
  For example, to compute $\cts{\g{A}_1}$, we can decompose $\g{A}_1$
  as $\dunit_\dl \parallel \id_\dr$, so $\cts{\g{A}_1} =
  \dunit_{\cts{\dl}} \parallel \id_{\cts{\dr}}$, which can easily be
  computed from the definition of the identity in $\tsopg$.

  In order to compute the composition of two interpretations in
  $\sopg$, we need to compute a trace, and therefore semantic
  run-trees (we can avoid it in $\subopg{\g{A}}_{2}$ and
  $\subopg{\g{A}}_3$ above because they can be reorganized so that
  composition involves a trivial trace).
  In semantic run-trees corresponding to $\cts{\subopg{\g{A}}_2} \circ
  \cts{\subopg{\g{A}}_1}$, there must be no infinite path (corresponding
  to the intuition that the only infinite path is losing), and the only
  possible leaf is $\rightpos{1}$ (the only exit position), while 
conditions in Def.~\ref{def:srt}
(involving $\cts{\subopg{\g{A}}_2}$ and
  $\cts{\subopg{\g{A}}_1}$ above) ensure that $(1,\rightpos{1})$ must be one
  of the leaves, which gives $\cts{\subopg{\g{A}}_2 \circ
  \subopg{\g{A}}_1} = \setcomp{(T,i)}{i \in \set{\leftpos{1},
  \rightpos{2}}, (1,\rightpos{1}) \in T}$.

  Similarly, to compose $\cts{\subopg{\g{A}}_3}$ with \(\cts{\subopg{\g{A}}_2 \circ
  \subopg{\g{A}}_1}\), we also have to compute the corresponding
  semantic run-trees.
  Here, there can be no leaves (no exit positions), and the run-tree
  corresponding to taking the loop infinitely meets the parity
  condition (because all its nodes are $(2,\rightpos{1})$, except for the
  root), so $(\emptyset,1)$ is in the interpretation, whence
  $\cts{\g{A}} = \setcomp{(T,1)}{\text{true}} = \{(\emptyset,1)\}$. 
  Therefore, $\leftpos{1}$ is a winning poistion in $G$.
\end{example}

\section{Conclusions and Future Work}
\label{sec:conclusion}

We have given a compositional approach to parity games by exhibiting
their underlying compact closed structure.
Parity games can be composed by considering open ends, and we
defined a prop that gives a graphical language to describe such open
parity games.
At the semantic level, we have given a notion of winning/losing
positions that takes open ends into account.
It retains enough information to be compositional, but is still
extensional, as it can be computed without referring to starategies.

\asd{todo: write future work and/or concluding remark}

The current semantic category is a \emph{strategy-insensitive} model,
in that it only keeps track of the (non)existence of a winning strategy,
while \emph{strategy-sensitive} models should keep track of all
strategies (perhaps up to some suitable equivalence).
Strategy-sensitive models can easily be obtained when restricting to
\emph{history-free} strategies.
One future work is to find some history-dependent strategy-sensitive model.

It could be fruitful to deepen the link between the existing body of
work on props and our use of props in this work.
For example, by showing equivalence between the graphical language and
the category of open parity games, we could get decidability results
on parity games from the syntax, as in~\cite{bonchi2017}.
\asd{Analyzing the complexity of the inductive computation of win/lose
and of the fullness proof seems interesting.}

Another possible future work is the coalgebraic treatment of open parity games.
There seems a bijective correspondence between open parity games
up to some notion of bisimilarity and
pairs of functions
\(\nset{m} \to Q\) and
\(Q \to \Pfin(Q + \nset{n}) \times \{\eve,\adam\} \times \Nat_M\)
up to the bisimilarity of coalgebras.
Then there might exist a compact closed structure in the category of coalgebraic open parity
games up to the bisimilarity.
However, it seems subtle to give the same categorical structure in the category of,
say, the above form of coalgebras up to \emph{isomorphism}.
Also, it might be interesting to give a functor (similar to \(\cto\)) from some graphical category to
the category of coalgebras up to bisimilarity by using bialgebraic methods~\cite{klin2011bialgebras,turi1997towards}.
We did not take the bisimilarity approach in this paper
because in game theory we basically consider the level up to isomorphism, say, for complexity.

\clovis{parity games + $\sopg$ from Grellois--Melliès $\leadsto$ maybe
link between our approach and Tsukada--Ong's HOMC paper.}

\bibliographystyle{eptcs}
\bibliography{generic}

\iffull
\appendix

\section{The Int Construction}
\label{sec:Preliminaries}

In this section, we recall the notions of
traced symmetric monoidal category,
compact closed category,
and the Int construction.
We assume that readers are familiar with \emph{symmetric monoidal categories}~\cite{MacLane2}.
A symmetric monoidal category $(\C, \otimes, \munit, \assoc, \lunit, \runit, \sigma)$ is a \emph{strict symmetric monoidal category}
if the associator $\assoc$, 
the left unitor $\lunit$, and the right unitor $\runit$ are the identity natural transformations.
For readability, we deal with symmetric monoidal categories as if they were strict symmetric monoidal categories.

We begin with the definition of a \emph{traced symmetric monoidal category}~\cite{joyal1996}.
For another equivalent axiomatization, see \cite{hasegawa2009traced}.
\begin{definition}[traced symmetric monoidal category~\cite{joyal1996}]
    \label{def:TSMC}
    A \emph{traced symmetric monoidal category} is a symmetric monoidal category $(\C, \otimes, \munit, \sigma)$ 
    equipped with a \emph{trace operator} $\trace{Z}{X}{Y}{\C}:\C(Z\otimes X, Z\otimes Y)\rightarrow \C(X, Y)$ that satisfies the following conditions:
    \begin{align*}
       \trace{\munit}{X}{Y}{\C}(f) &= f  && \text{($f:X\rightarrow Y$)} & \text{(Vanishing$1$)}\\
       \trace{U\otimes W}{X}{Y}{\C}(f) &= \trace{W}{X}{Y}{\C}(\trace{U}{W\otimes X}{W\otimes Y}{\C}(f)) && \text{($f:U\otimes W \otimes X\rightarrow U\otimes W\otimes Y$)} & \text{(Vanishing$2$)}\\
       \trace{U}{X}{Y}{\C}(f)\otimes g &= \trace{U}{X\otimes W}{Y\otimes Z}{\C}(f\otimes g) && \text{($f:U\otimes X\rightarrow U\otimes Y$\text{ and } $g:W\rightarrow Z$)} &\text{(Superposing)} \\
       \trace{X}{X}{X}{\C}(\sigma_{X, X}) &= \id_X &&  &\text{(Yanking)} \\
       \trace{U}{X'}{Y}{\C}(f\circ (\id_U\otimes g)) &= \trace{U}{X}{Y}{\C}(f)\circ g&& \text{($f:U\otimes X\rightarrow U\otimes Y$\text{ and } $g:X'\rightarrow X$)} &\text{(Naturality in $X$)} \\
       \trace{U}{X}{Y'}{\C}((\id_U\otimes g)\circ f) &= g\circ \trace{U}{X}{Y}{\C}(f) && \text{($f:U\otimes X\rightarrow U\otimes Y$\text{ and } $g:Y\rightarrow Y'$)} & \text{(Naturality in $Y$)}\\
       \trace{U}{X}{Y}{\C}((g\otimes id_Y)\circ f) &= \trace{U'}{X}{Y}{\C}(f\circ (g\otimes id_X)) && \text{($f:U\otimes X\rightarrow U'\otimes Y$\text{ and } $g:U'\rightarrow U$)} &\text{(Dinaturality in $U$)}&.
    \end{align*}
\end{definition}

\begin{definition}[traced symmetric strong monoidal functor~\cite{joyal1996}]
    \label{def:TSSMF}
    Let $(\C, \otimes^{\C}, \munit^{\C}, \sigma^{\C}, \tr^{\C})$ and $(\D, \otimes^{\D}, \munit^{\D}, \sigma^{\D}, \tr^{\D})$ be traced symmetric monoidal categories.
    A \emph{traced symmetric strong monoidal functor} 
\(F:(\C, \otimes^{\C}, \munit^{\C}, \sigma^{\C}, \tr^{\C})\rightarrow (\D, \otimes^{\D}, \munit^{\D}, \sigma^{\D}, \tr^{\D})\)
is a symmetric strong monoidal functor $\bigl(
F: \C \rightarrow \D,\ \allowbreak
(\phi_{X, Y}:F(X)\otimes^{\D} F(Y)\rightarrow F(X\otimes^{\C} Y))_{X,Y},\ 
\phi^{0}: \munit^{\D}\rightarrow F(\munit^{\C})\bigr)$
    that satisfies the following condition:
    \begin{align*}
        & \trace{F(Z)}{F(X)}{F(Y)}{\D}\bigl(\phi^{-1}_{F(Z), F(Y)}\circ F(f) \circ\phi_{F(Z), F(X)}\bigr) = F\bigl(\trace{Z}{X}{Y}{\C}(f)\bigr) && (f:Z\otimes^{\C}X \rightarrow Z\otimes^{\C}Y).
    \end{align*}
\end{definition}

%

A \emph{compact closed category}~\cite{day1977} is a symmetric monoidal category
where each object \(A\) has a dual object \(\dual{A}\);
one typical example is the category of finite dimensional vector spaces \(V\), which have the dual spaces \(V^{\ast}\).
For the notion of \emph{strict compact closed category}, see~\cite[\S{}9]{KELLY1980193} (or Rem.~\ref{remk:CCSjustification}).
\begin{definition}[compact closed category]
    \label{def:CptCC}
    A \emph{compact closed category} is a symmetric monoidal category 
$(\C, \otimes, \munit, \sigma)$ that has, for every object $A$,
 a chosen triple of a \emph{(left) dual object} $\dual{A}$, an unit $\dunit_{A}:\munit\rightarrow A\otimes \dual{A}$,
 and a counit $\dcounit_{A}:\dual{A}\otimes A\rightarrow \munit$,
 and satisfies the following conditions:
    \begin{align*}
         (\dcounit_{A}\otimes \id_{\dual{A}})\circ (\id_{\dual{A}}\otimes \dunit_{A}) &= \id_{\dual{A}},\\
         (\id_A\otimes \dcounit_{A})\circ (\dunit_{A}\otimes \id_A) &= \id_A.
    \end{align*}
\end{definition}

A \emph{compact closed functor} is defined to be just a symmetric strong monoidal functor.
We note that every symmetric strong monoidal functor between compact closed functors
preserves the dual objects up to canonical isomorphism.

Every compact closed category has the \emph{canonical trace operator}.
For example, the application of the trace operator in Fig.~\ref{subfig:traceTsmc}
is obtained by the unit and the counit as in Fig.~\ref{subfig:traceCopg};
see~\cite{joyal1996} for the formal definition.
Thus, there is an embedding 
from compact closed categories to traced symmetric monoidal categories.
Conversely, there is a free construction of a compact closed category $\Int(\C)$
from a traced symmetric monoidal category $\C$.
This free construction is called the \emph{Int construction}~\cite{joyal1996}.
\kzk{memo: in GoI situation, tensor is covariant, i.e, $(X_+, X_-)\otimes (Y_+, Y_-)\defeq (X_+\otimes Y_+, X_-\otimes Y_-)$.}
\begin{definition}[the Int construction of 0-cell~\cite{joyal1996}]
    Let $(\C, \otimes, \munit, \sigma, \tr)$ be a traced symmetric monoidal category.
    We define a compact closed category $(\Int(\C), \otimes^{\Int(\C)}, \munit^{\Int(\C)}, \sigma^{\Int(\C)}
)$ by the following:
    \begin{enumerate}
        \item An object of $\Int(\C)$ is a pair \((X_{+}, X_{-})\) of objects $X_{+}, X_{-}$ in $\C$.
        \item A morphism $f$ in $\Int(\C)((X_{+}, X_{-}), (Y_{+}, Y_{-}))$ is a morphism
  $f\in \C(X_{+}\otimes Y_{-}, Y_{+}\otimes X_{-} )$. 
        \item Let $f\in \Int(\C)((X_{+}, X_{-}), (Y_{+}, Y_{-}))$ and $g\in\Int(\C)((Y_{+}, Y_{-}), (Z_{+}, Z_{-}))$. 
        The composite of $f$ and $g$ is $\trace{Y_{-}}{X_{+}\otimes Z_{-}}{Z_{+}\otimes X_{-}}{\C}\bigl((\sigma_{Z_{+}, Y_{-}}\otimes \id_{X_{-}})\circ (g\otimes \id_{X_{-}})
        \circ (\id_{Y_{+}}\otimes \sigma_{X_{-}, Z_{-}})\circ (f\otimes \id_{Z_{-}})\circ (\sigma_{Y_{-}, X_{+}}\otimes \id_{Z_{-}})\bigr)$.
        \item The identity over $(X_+, X_-)$ is defined by $\id_{(X_+, X_-)}\defeq \id_{X_+\otimes X_-}$. 
        \item The tensor product $(X_{+}, X_{-})\otimes^{\Int(\C)} (Y_{+}, Y_{-})$ is $(X_{+}\otimes Y_{+}, Y_{-}\otimes X_{-})$,
         and $(f:(X_{+}, X_{-})\rightarrow (Y_{+}, Y_{-}))\otimes^{\Int(\C)}(g:(X'_{+}, X'_{-})\rightarrow (Y'_{+}, Y'_{-}))$ is 
        $(\id_{Y_{+}}\otimes g\otimes \id_{X_{-}})\circ (\sigma_{X'_{+}, Y_{+}}\otimes \sigma_{X_{-}, Y'_{-}})
        \circ (\id_{X'_{+}}\otimes f\otimes \id_{Y'_{-}})\circ (\sigma_{X_{+}, X'_{+}}\otimes \sigma_{Y'_{-}, Y_{-}})$.
        \item The monoidal unit $\munit^{\Int(\C)}$ is $(\munit, \munit)$.
        \item The swap $\sigma^{\Int(\C)}_{(X_+, X_-), (Y_+, Y_-)}$ is $\sigma_{(X_+, Y_+)}\otimes \sigma_{(Y_-, X_-)}$.
        \item $(X_+, X_-)$ has as dual object $(X_-, X_+)$, with $\id_{X_+\otimes X_-}$
as a unit $\dunit^{\Int(\C)}_{(X_+, X_-)}$,
        and $\id_{X_-\otimes X_+}$ as a counit $\dcounit^{\Int(\C)}_{(X_+, X_-)}$.
    \end{enumerate}
\end{definition}


\begin{definition}[the Int construction of 1-cell~\cite{joyal1996}]
    Let $\C$ and $\D$ be traced symmetric monoidal categories,
    and $F:\C\rightarrow \D$ be a traced symmetric strong monoidal functor. 
    We define a compact closed functor $\Int(F):\Int(\C)\rightarrow \Int(\D)$ by the following:
    \begin{align*}
        \Int(F)((X_+, X_-))&\defeq (F(X_+), F(X_-)),\\
        \Int(F)(f:(X_+, X_-)\rightarrow (Y_+, Y_-))&\defeq (\phi^{F}_{Y_+, X_-})^{-1}\circ F(f)\circ \phi^{F}_{X_+, Y_-},\\
        \phi^{\Int(F)}_{(X_+, X_-), (Y_+, Y_-)} &\defeq \phi^{F}_{X_+, Y_+} \otimes (\phi^{F}_{Y_-, X_-})^{-1},\\
        \phi^{\Int(F), 0}&\defeq \id_{F(\munit^{\C})}.
    \end{align*}
\end{definition}

\begin{definition}
    Let $\C$ be a traced symmertic monoidal category. 
We define a traced symmetric strong monoidal functor $\CanFunc{\C}:\C\rightarrow \Int(\C)$ by the following:
    \begin{align*}
        \CanFunc{\C}(X) &\defeq (X, \munit^{\C}),\\
        \CanFunc{\C}(f) &\defeq f,\\
        \phi^{\CanFunc{\C}}_{X, Y} &\defeq \id_{X\otimes Y},\\
        \phi^{\CanFunc{\C}, \munit^{\C}} &\defeq \id_{\munit^{\C}}.
    \end{align*}
\end{definition}

The following proposition states that the Int construction is a free construction.
\begin{theorem}[\cite{joyal1996} and \cite{hasegawa2010note}]
    \label{thm:freeofint}
    The embedding of the 2-category $\Cptcc$ in the 2-category $\TrSMCg$ has a left biadjoint \,$\Int$ 
    whose unit is given by the functors $\big(\CanFunc{\C}:\C\rightarrow \Int(\C)\big)_{\C \in \TrSMCg}$.
    \qed
\end{theorem}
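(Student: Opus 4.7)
The plan is to establish the biadjunction by constructing, for each traced symmetric monoidal category $\C$ and each compact closed category $\D$, a pseudo-natural equivalence of hom-categories
\[
\Cptcc(\Int(\C), \D) \;\simeq\; \TrSMCg(\C, \D),
\]
where precomposition with the unit $\CanFunc{\C}: \C \to \Int(\C)$ supplies one direction of the equivalence. As preliminaries, I would first verify the two facts that justify the data: that $(\Int(\C), \otimes^{\Int(\C)}, \munit^{\Int(\C)}, \sigma^{\Int(\C)}, \dunit^{\Int(\C)}, \dcounit^{\Int(\C)})$ really is a compact closed category (the triangle identities for $\dunit^{\Int(\C)}, \dcounit^{\Int(\C)}$ reduce, after unfolding the definition of composition in $\Int(\C)$ via traces, to the Yanking and Vanishing axioms of $\C$), and that $\CanFunc{\C}$ is a traced symmetric strong monoidal functor (strong monoidality is trivial since $\phi^{\CanFunc{\C}}$ is an identity, and trace-preservation is by a direct calculation using Vanishing~$1$).

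Next, I would construct the pseudo-inverse equivalence. Given a traced symmetric strong monoidal functor $F: \C \to \D$ into a compact closed category $\D$, define $\widetilde{F}: \Int(\C) \to \D$ on objects by $\widetilde{F}(X_+, X_-) \defeq F(X_+) \otimes F(X_-)^{\bot}$, and on a morphism $f \in \C(X_+ \otimes Y_-, Y_+ \otimes X_-)$ as the ``name transpose'' of $F(f)$ under the compact closed bijection $\D(F(X_+)\otimes F(Y_-), F(Y_+)\otimes F(X_-)) \cong \D(F(X_+)\otimes F(X_-)^{\bot}, F(Y_+) \otimes F(Y_-)^{\bot})$, obtained by precomposing with $\id \otimes \dunit_{F(Y_-)}$ and postcomposing with $\id \otimes \dcounit_{F(X_-)}$ (and the coherence iso $\phi^{F}$). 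Functoriality of $\widetilde{F}$ is the key calculation: the composition law in $\Int(\C)$ is defined via the trace, and $F$ preserves traces, so the trace in $\D$ equals its compact-closed incarnation (an elementary fact about the canonical trace in any CpCC). Strong monoidality of $\widetilde{F}$ and preservation of duals are then routine from $F$'s monoidal coherence.

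I would then establish the two triangle isomorphisms. First, $\widetilde{F} \circ \CanFunc{\C} \cong F$: by direct computation, $\widetilde{F}(\CanFunc{\C}(X)) = F(X) \otimes F(\munit^{\C})^{\bot}$, which is canonically isomorphic to $F(X)$ via the unitor and $\phi^{F,0}$; naturality of this iso in morphisms is a monoidal coherence verification. Second, for any $G \in \Cptcc(\Int(\C), \D)$, I claim $G \cong \widetilde{G \circ \CanFunc{\C}}$: on objects, $G(X_+, X_-) = G((X_+, \munit) \otimes^{\Int(\C)}(\munit, X_-)) \cong G\CanFunc{\C}(X_+) \otimes G(\munit, X_-) \cong G\CanFunc{\C}(X_+) \otimes (G\CanFunc{\C}(X_-))^{\bot}$, using that compact closed functors preserve duals up to canonical iso; on morphisms, one decomposes a morphism $f: (X_+,X_-) \to (Y_+,Y_-)$ of $\Int(\C)$ into the transpose of its underlying $\C$-morphism, and applies $G$. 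These isos are monoidal natural and compact-closed.

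The main obstacle will be the pseudo-naturality and coherence bookkeeping. Because the statement is a \emph{biadjunction} rather than a strict $2$-adjunction, the hom-equivalences only hold up to monoidal natural isomorphism, and each of the triangle isomorphisms above must be shown to satisfy the coherence pentagons for a biadjunction (specifically, the modification axioms relating the unit, counit, and the two 2-cells). Verifying these coherence equations amounts to chasing long strings of monoidal and compact closed coherence cells; the slickest route is to invoke the strictification of compact closed categories (so that one may assume $\D$ is strict) and to use a string-diagram calculus, where each claim becomes a topological isotopy of planar diagrams. I would appeal to~\cite{joyal1996} and the correction in~\cite{hasegawa2010note} for the detailed verification, and in this paper limit the exposition to the constructions and the key identities, citing the coherence theorems to cover the remaining diagrammatic work.
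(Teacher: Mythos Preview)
The paper does not give its own proof of this statement: the theorem is stated with a \qed and attributed entirely to~\cite{joyal1996} and~\cite{hasegawa2010note}. Your sketch is the standard argument one finds in those references (exhibiting the hom-equivalence by transposing along the compact closed structure of $\D$ and checking the biadjunction coherence), so there is no paper-internal proof to compare against; your proposal is correct and is essentially what the cited sources do.
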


\newcommand{\vext}[1]{#1^{\ast}}
\newcommand{\dvdc}{\nu} 
\newcommand{\elm}{\mathrm{el}}
\newcommand{\fgtm}[2][C]{#2^{\mathrm{m}}_{#1}}
\newcommand{\fgtc}[2][C]{#2^{\mathrm{c}}_{#1}}
\newcommand{\posn}[1]{\langle#1\rangle}
\section{Free Compact Closed Categories and Fullness of the Functor \(\cto\)}
\label{app:compactfree}

Here we explain the free compact closedness of \(\copg\)
with a 2-categorical and generalized statement (\S{}\ref{subsec:app:freecompact}),
the relationship between the free compact closed categories and the free traced symmetric monoidal categories
by Kissinger (\S{}\ref{subsec:app:freeTrace}),
and the fullness of \(\cto\)(\S{}\ref{subsec:app:fullness}).

Given categories \(\C\) and \(\D\), we write \([\C,\D]\) for the functor category,
and we write \([\C,\D]_{g}\) for the subcategory of functors and natural isomorphisms.

\subsection{Free Compact Closedness of \(\copg\)}
\label{subsec:app:freecompact}

Here we give a 2-categorical and generalized version of Thm.~\ref{thm:freeCptCC},
and suppose the familiarity to 2-category theory.
We first give a generalization of the notion of CCS to arbitrary color sets \(C\):

\begin{definition}
\label{def:coloredCCS}
We call an object of the category \(\Set^{(C+C)^{\ast} \times (C+C)^{\ast}}\)
a \emph{(C-colored) compact closed signature} (\emph{\(C\)-CCS}, for short).
\end{definition}
We define an involution operator \(\dual{(-)}: (C+C)^{\ast} \to (C+C)^{\ast}\) by
\(\dual{(\inj_i(c))} \defeq \inj_{3-i}(c)\) (for \(i=1,2\)),
\(\dual{\emptyword} \defeq \emptyword\),
and
\(\dual{(x_1\cdot \dots \cdot x_n)} \defeq \dual{x_n}\cdot \dots \cdot\dual{x_1}\).
Then \((C+C)^{\ast}\) with embedding \(\inj_1(-): C \to (C+C)^{\ast}\)
is the free involutive monoid generated by \(C\).
(An \emph{involutive monoid} is a monoid with unary operator \(\dual{(-)}\) and
the following axioms: \(\dual{(\dual{x})} = x\), \(\dual{(xy)} = \dual{y}\dual{x}\), and redundantly
 \(\dual{e}=e\).)
We write \(\inj_1(c) \in (C+C)^{\ast}\) simply as \(c\).
We identify \(\{\dr, \dl\}^{\ast}\) with \((\{\dr\}+\{\dr\})^{\ast}\);
thus the notion of CCS defined in \S{}\ref{subsec:FreeCCandFull}
agrees with the notion of \(\{\dr\}\)-colored CCS, i.e.,
\emph{single-colored} CCS.

\begin{remark}
\label{remk:CCSjustification}
Def.~\ref{def:coloredCCS} is inspired by the notion of
\emph{strict compact closed category}~\cite{KELLY1980193}.
A compact closed category \(\C\) is \emph{strict} if \(\C\) is strict as a monoidal category and moreover
the following canonical isormophisms are the identities:
\(\dual{(\dual{A})} \cong A\),
\(\dual{(A \otimes B)} \cong \dual{B} \otimes \dual{A}\), and
\(\dual{I} \cong I\).
Thus we are led to use the notion of involutive monoid
in the definitions of the compact closed version of the notions of prop and
symmetric monoidal signature, though we do not pursue the detail in this paper.
\end{remark}

Next we generalize
\csmt{\{\dr,\dl\}} \((\adduc{\ccsp}, \adduce)\) given in Def.~\ref{def:SMTforParity} as follows:
\begin{definition}[SMT \(\adduct{\Sigma}=(\adduc{\Sigma},\adducep{\Sigma})\) of CCS \(\Sigma\)]
Given a \(C\)-CCS \(\Sigma\), we define a \((C+C)\)-SMT
 \(\adduct{\Sigma}=(\adduc{\Sigma},\adducep{\Sigma},l,r)\) as follows:
\begin{align*}
\adduc{\Sigma}(\emptyword, c \cdot \dual{c}) &\defeq \Sigma(\emptyword, c \cdot \dual{c}) + 
    \{\dunit_{c}\} && (\text{for }c \in C)\text{,}
 \\
\adduc{\Sigma}(\dual{c} \cdot c, \emptyword) &\defeq \Sigma(\dual{c} \cdot c, \emptyword) + 
    \{\dcounit_{c}\} && (\text{for }c \in C)\text{,}
 \\
\adduc{\Sigma}(w, u) &\defeq \Sigma(w, u)
&&(\text{for other \(w, u\in (C+C)^{\ast}\)}),
\end{align*}
and 
\(\adducep{\Sigma}(c, c) \defeq
\{(\id_{c} \cplus \dcounit_{c})\circ (\dunit_{c} \cplus \id_{c}) = \id_{c}\}\), 
\(\adducep{\Sigma}(\dual{c}, \dual{c}) \defeq
\{(\dcounit_{c} \cplus \id_{\dual{c}})\circ (\id_{\dual{c}} \cplus \dunit_{c}) = \id_{\dual{c}}\}\), and
\(\adducep{\Sigma}(w, u) \defeq \emptyset\) otherwise,
where the monoidal category structures \(\id, \circ, \oplus\) are 
those of \(\Fsig^{C+C}(\adduc{\Sigma})\).
Then \(l\) and \(r\) are the left hand side and right hand side, respectively.
\end{definition}
\noindent
Note that the \csmt{\{\dr,\dl\}} \((\opgsig, \opgequ)\)
is nothing but \(\adduct{\ccsp} = (\adduc{\ccsp},\adducep{\ccsp})\), where \(\ccsp\) is given in Def.~\ref{def:TrSigforParity}.

Above, the unit \(\dunit_{c}\) and counit \(\dcounit_{c}\) are given just for a color \(c \in C\),
but we can extend them to all the elements of \((C+C)^{\ast}\) in an obvious way so that
the following coherence conditions for \emph{strict} compact closed category~\cite[Sections~6 and~9]{KELLY1980193} hold:
\begin{align*} 
\dunit_{\emptyword} &= id_{\emptyword},
&
\dunit_{w \cdot u} &= (\id_w \cplus \dunit_{u} \cplus id_{\dual{w}}) \circ \dunit_{w},
&
\dunit_{\dual{w}} &= \sym_{w,\dual{w}} \circ \dunit_{w},
\\
\dcounit_{\emptyword} &= id_{\emptyword},
&
\dcounit_{w\cdot u} &= \dcounit_{u}\circ (\id_{\dual{u}}\cplus \dcounit_{w}\cplus \id_{u}), \quad\text{and}
&
\dcounit_{\dual{w}} &= \dcounit_{w} \circ \sym_{w,\dual{w}}.
\end{align*}
where \(\sym_{w,u}: w \cdot u \to u \cdot w\) is the swap.
Then, we also have the triangular identities for \(\dunit_{w}\) and \(\dcounit_{w}\).
(To construct a prop isomorphic to \(\F(\adduct{\Sigma})\),
we can alternatively use the \csmt{(C+C)} that is obtained by adding to \(\adduct{\Sigma}\) all \(\dunit_{w}\),  \(\dcounit_{w}\),
and the triangular identities for them as well as
the above coherence equations, among which the former (or latter) three are sufficient.)\asd{memo:

For the canonical isomorphism \(u\), \(v\), \(w\) in~\cite{KELLY1980193} of a compact closed category,
suppose that the domain and codomain of \(f\) is the same for any \(f \in \{u,v,w\}\).
Then the compact closed category is strict
(i.e., \(u\), \(v\), \(w\) are the identities)
iff
the diagrams that are obtained, respectively,
from~(6.2),~(6.3),~(6.4) in~\cite{KELLY1980193}
by replacing \(u\), \(v\), \(w\) with the identities commutes
(and these three diagrams after ``iff'' are the former three of the above six conditions).
This is because, as written before (6.2),~(6.3),~(6.4),
the diagrams (6.2),~(6.3),~(6.4) \emph{determine} \(u\), \(v\), \(w\) uniquely
and hence if the identities (substituted for \(u\), \(v\), \(w\)) satisfies the diagrams,
then the identities are the same as \(u\), \(v\), \(w\).}

\asd{The next two definitions are written for chosen CpCC.
Probably we should do this for all.}

We generalize
Def.~\ref{def:valuationObjOnly} and~\ref{def:actionObjOnly} with arbitrary \(C\)-CCSs,
and also extend them by dealing with also morphisms.

\begin{definition}\dtb{valuation, $[\Sigma, \C]^{\cpt}_{\iso}$}
  \label{def:valuationCpt}
      For a \(C\)-CCS $\Sigma$ and a compact closed category $\C$,
   we define the category $[\Sigma, \C]^{\cpt}_{\iso}$ as follows.
      An object $V$, called a \emph{valuation}, is a pair $(V, (V_{w,u})_{w, u})$ such that
  \[
           V : C \to ob(\C)
  \qquad\qquad\qquad
           V_{w, u} :\Sigma(w, u)\rightarrow \C(\vext{V}(w), \vext{V}(u)) 
  \qquad (w, u\in (C+C)^{\ast})
  \]
      where \(\vext{V}(w)\) for \(w \in (C+C)^{\ast}\) is defined by the following:
  $\vext{V}(\emptyword) \defeq I$,
  and for $w \in (C+C)^{\ast}$,
   $\vext{V}(w\cdot c) \defeq \vext{V}(w)\otimes V(c)$ and
   $\vext{V}(w\cdot \dual{c}) \defeq \vext{V}(w)\otimes \dual{V(c)}$.
  A morphism $\theta: V \rightarrow W$, called a \(\emph{valuation morphism}\), 
  is a family \(\big(\theta_{c} : V(c)\rightarrow W(c)\big)_{c \in C}\)
  of isomorphisms in \(\C\)
      such that for all $f \in \Sigma(w,u)$,
      \begin{align*}
          & \vext{\theta}_{u} \circ V_{w,u}(f) 
          \ =\  W_{w,u}(f) \circ \vext{\theta}_{w}
      \end{align*}
  holds 
  where \(\vext{\theta}_w : \vext{V}(w) \to \vext{W}(w)\) for \(w \in (C+C)^{\ast}\) is defined 
   by the following:
   $\vext{\theta}_{\emptyword} \defeq \id_{I}$,
   $\vext{\theta}_{w \cdot c} \defeq \vext{\theta}_{w} \otimes \theta_{c}$, and
   $\vext{\theta}_{w \cdot \dual{c}} \defeq \vext{\theta}_{w} \otimes (\dual{\theta_{c}})^{-1}$.
\end{definition}

\begin{definition}\dt[action on valuations]
  For a \(C\)-CCS $\Sigma$ and compact closed categories $\C$ and \(\D\),
  the \emph{action} $\CompCpt[\C, \D, \Sigma]{-}{-}$, written also \(\Comp{-}{-}\) for short, 
  is a functor $\CompCpt[\C, \D, \Sigma]{-}{-} : 
  \Cptcc(\C, \D)\times [\Sigma, \C]^{\cpt}_{\iso} \to [\Sigma, \D]^{\cpt}_{\iso}$
  defined by the following:
  \[
  \Comp{F}{V}(c) \ \defeq\ F(V(c))
  \qquad\text{and}\qquad
  \CompCpt{F}{V}{}_{w, u}(f) \ \defeq\ (\phi^{F}_{u})^{-1}\circ F(V_{w, u}(f))\circ \phi^{F}_{w}
  \]
  where
  the isomorphisms \(\phi^{F}_{w}\) (and \(\phi^{F}_{u}\)) are defined as follows:
  For \(w = c_1^{i_1}\cdots c_n^{i_n} \in (C+C)^{\ast}\), where each \(c_j^{i_j}\) is either \(c_j\) or \(\dual{c_j}\),
  we let $\phi^{F}_{c_1^{i_1}\cdots c_n^{i_n}}: 
  F(V(c_1))^{i_1}\otimes\cdots\otimes F(V(c_n))^{i_n} \rightarrow 
  F(V(c_1)^{i_1} \otimes \cdots\otimes V(c_n)^{i_n})$
  be the morphism obtained by the fact that \(F\) is a compact closed functor.
   Given an $\alpha\in \Cptcc(\C, \D)(F, G)$ and $\theta \in [\Sigma, \C]^{\cpt}_{\iso}(V, W)$, 
  we define $\CompCpt{\alpha}{\theta}{}_{c} \defeq \alpha_{W(c)}\circ F(\theta_{c}) : F(V(c)) \to G(W(c))$.
\end{definition}

Now we give the theorem of free compact closed categories.
This statement is in a similar style to that in~\cite{joyal1991geometry}
for free (symmetric) monoidal categories
and to that in~\cite{kissinger2014abstract}
for free traced symmetric monoidal categories.

\begin{theorem}[free compact closed categories]
  \label{thm:preciseFreeComp}
      For a \(C\)-CCS $\Sigma$, \(\F(T_{\Sigma})\) is a strict compact closed category.
  Furthermore, \(\F(T_{\Sigma})\) is a \emph{free} compact closed category, i.e.,
  there exists a valuation 
      $\eta^{\cpt}_\Sigma\in ob([\Sigma, \F(T_{\Sigma})]^{\cpt}_{\iso})$ such that,
   for any compact closed category $\C$,
      \begin{align*}
          &\Comp{-}{\eta^{\cpt}_\Sigma}: \Cptcc(\F(T_{\Sigma}), \C) \rightarrow [\Sigma, \C]^{\cpt}_{\iso}
      \end{align*}
      yeilds an equivalence of categories.
      \qed
\end{theorem}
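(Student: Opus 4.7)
The plan is to first establish that $\F(T_\Sigma)$ is a strict compact closed category by exhibiting an explicit strict compact closed structure, and then to prove the universal property via the coequalizer in the definition of $\F(T)$ together with the freeness of $\Fsig^{C+C}$. For strictness, I would take the $(C+C)$-prop structure already given by Def.~\ref{def:freeByCoequalizer}, define the involution on objects $w \in (C+C)^\ast$ by $\dual{(-)}$, and define $\dunit_w, \dcounit_w$ for arbitrary $w$ by the stated coherence formulas, starting from the generators $\dunit_c, \dcounit_c$ in $\adduc{\Sigma}$ for $c \in C$. The two equations in $\adducep{\Sigma}$ are precisely the zigzag identities for $c$ and $\dual{c}$; a routine induction on word length (using the coherence formulas and the existing symmetric monoidal structure) promotes these to the zigzag identities for all $w \in (C+C)^\ast$, so $\F(T_\Sigma)$ is strict compact closed. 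The canonical valuation $\eta^{\cpt}_\Sigma$ is defined by $\eta^{\cpt}_\Sigma(c) \defeq c$ and by sending $f \in \Sigma(w,u)$ to its image in $\F(T_\Sigma)$; since $\F(T_\Sigma)$ is strict, $\vext{\eta^{\cpt}_\Sigma}(w) = w$ for every $w$.

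For essential surjectivity of $\Comp{-}{\eta^{\cpt}_\Sigma}$, given a valuation $V$ of $\Sigma$ into a compact closed category $\C$, I extend $V$ to a $(C+C)$-signature morphism $\tilde{V}\colon \adduc{\Sigma} \to \Usig^{C+C}(\C_V)$, where $\C_V$ is $\C$ regarded as a $(C+C)$-prop via $V(c)$ and $\dual{V(c)}$, by setting $\tilde{V}(\dunit_c) \defeq \dunit^\C_{V(c)}$ and $\tilde{V}(\dcounit_c) \defeq \dcounit^\C_{V(c)}$. By the adjunction $\Fsig^{C+C} \dashv \Usig^{C+C}$ (Thm.~\ref{thm:UsigHasLeftFsig}), this extends uniquely to a strict symmetric monoidal $(C+C)$-prop morphism $\tp{\tilde{V}}\colon \Fsig^{C+C}(\adduc{\Sigma}) \to \C_V$. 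The zigzag equations in $\adducep{\Sigma}$ are sent by $\tp{\tilde{V}}$ to the corresponding zigzag identities in $\C$, which hold since $\C$ is compact closed. Hence $\tp{\tilde{V}}$ coequalizes $\tp{l}$ and $\tp{r}$, yielding a unique symmetric monoidal functor $\hat{V}\colon \F(T_\Sigma) \to \C_V$; since $\hat{V}$ sends the generators $\dunit_c, \dcounit_c$ to the compact closed structure of $\C$ and the full compact closed structure of $\F(T_\Sigma)$ is generated from these, $\hat{V}$ is a compact closed functor, and by construction $\Comp{\hat{V}}{\eta^{\cpt}_\Sigma} \cong V$ (with the isomorphism realized by the coherence isos $\phi^{\hat{V}}_w$, which reduce to identities on the colors).

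For full faithfulness, let $F, G\colon \F(T_\Sigma) \to \C$ be compact closed functors and $\theta\colon \Comp{F}{\eta^{\cpt}_\Sigma} \to \Comp{G}{\eta^{\cpt}_\Sigma}$ a valuation morphism. I extend $\theta$ to a monoidal natural isomorphism $\alpha\colon F \Rightarrow G$ by $\alpha_c \defeq \theta_c$ and $\alpha_{\dual{c}} \defeq (\dual{\theta_c})^{-1}$ (mediated by $\phi^F, \phi^G$), then extend monoidally via $\alpha_{w \cdot u} \defeq \alpha_w \otimes \alpha_u$. Naturality on the generators $\nd[w,u]{r}{p}$ follows from the defining equation of a valuation morphism; naturality on $\dunit_c, \dcounit_c$ follows because $F$ and $G$ are compact closed functors (so $\alpha$ necessarily respects the dual structure up to the canonical iso by~\cite[Proposition 7.1]{JOYAL199320}). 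Naturality then extends to all of $\F(T_\Sigma)$ by compatibility with $\otimes$ and $\circ$. Conversely, any monoidal natural isomorphism $F \Rightarrow G$ restricts to a valuation morphism on colors, and the two assignments are mutually inverse by the uniqueness above.

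The main obstacle will be the careful bookkeeping around the non-strictness of the target $\C$: concretely, verifying that the coherence isomorphisms $\phi^F_w$ interact correctly with the action $\Comp{-}{-}$ so that $\Comp{\hat{V}}{\eta^{\cpt}_\Sigma} \cong V$ (rather than an equality) and so that naturality of the extended $\alpha$ transports cleanly across words of arbitrary length. This is essentially a coherence argument and can be controlled by appealing to the Kelly--Laplaza coherence theorem for compact closed categories~\cite{KELLY1980193}, which guarantees that any two ways of relating iterated tensor products and duals via the compact closed structure agree. Once this coherence is invoked, the remaining verifications are diagrammatic and routine.
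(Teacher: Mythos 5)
Your proposal follows essentially the same route as the paper's proof: interpret the adjoined generators $\dunit_c,\dcounit_c$ of $\adduc{\Sigma}$ by the actual unit and counit of $\C$ (this is the paper's functor $\widetilde{(-)}:[\Sigma,\C]^{\cpt}_{\iso}\to[\adduc{\Sigma},\C]^{\sm}$), use the freeness of the free symmetric monoidal category on a signature to obtain a functor out of $\Fsig^{C+C}(\adduc{\Sigma})$, observe that the two zigzag equations of $\adducep{\Sigma}$ become identities in $\C$ so that the functor factors uniquely through the coequalizer defining $\F(T_\Sigma)$, and take $\eta^{\cpt}_\Sigma$ to be the image of the generators under the quotient map $\pi$. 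The only substantive difference is organizational: you verify essential surjectivity and full faithfulness directly, whereas the paper packages the same content by constructing an explicit pseudo-inverse $R(V)\defeq G(\widetilde{V})'$ and exhibiting the two natural isomorphisms of the equivalence. One step of your argument does not typecheck as written, though: you invoke the adjunction $\Fsig^{C+C}\dashv\Usig^{C+C}$ with codomain ``$\C$ regarded as a $(C+C)$-prop,'' but a $C$-prop is by definition a \emph{small strict} symmetric monoidal category whose object monoid is free on the colors, and an arbitrary compact closed $\C$ is none of these, so there is no $(C+C)$-prop $\C_V$ to map into. The repair is exactly the lemma the paper uses (Thm.~\ref{thm:stringdiagram}, due to Joyal--Street): the free SMC on a signature satisfies a universal property phrased in terms of the valuation category $[\adduc{\Sigma},\C]^{\sm}$ for an \emph{arbitrary} symmetric monoidal $\C$, which absorbs both the non-strictness and the non-freeness of the objects; your appeal to Kelly--Laplaza coherence for the remaining bookkeeping with the $\phi^F_w$ is consistent with how the paper handles those isomorphisms. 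With that substitution your outline matches the paper's proof, and your direct verification of the strict compact closed structure on $\F(T_\Sigma)$ (extending $\dunit,\dcounit$ to all words by the coherence formulas and checking the zigzag identities by induction) is what the paper asserts without detail when it states the coherence equations for $\dunit_w,\dcounit_w$ following the definition of $\adduct{\Sigma}$.
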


For proving Thm. \ref{thm:preciseFreeComp}, we need some preparations.
We will use 2-coequalizer in $\SMC$, therefore we define functors $\widehat{\_}$ and $\widetilde{\_}$ between $[\Sigma, \C]^{\cpt}_{\iso}$ and $[\adduc{\Sigma}, \C]^{\sm}$, where the category $[\adduc{\Sigma}, \C]^{\sm}$ is defined in~\cite{joyal1991geometry}.

\begin{lemma}
  \label{lem:valuations}
  There are functors $\widehat{\_}:[\adduc{\Sigma}, \C]^{\sm}\rightarrow [\Sigma, \C]^{\cpt}_{\iso}$ and $\widetilde{\_}:[\Sigma, \C]^{\cpt}_{\iso}\rightarrow[\adduc{\Sigma}, \C]^{\sm}$:
  \begin{align*}
      \widehat{V}(c) &\defeq V(c) \\
      \widehat{V}_{w, u}(f) &\defeq V(f)\\
      \widetilde{V}(c) &\defeq V(c) \\
      \widetilde{V}_{w, u}(f) &\defeq \begin{cases}
          & d \text{ if } f = d\\
          & e \text{ if } f = e\\
          & V(f)\text{ otherwise }
      \end{cases}. 
  \end{align*}
  Then $\widehat{\_}\circ \widetilde{\_} = \id$.
\end{lemma}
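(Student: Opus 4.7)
The plan is to verify the claim in two halves: first that each of the two assignments $\widehat{\_}$ and $\widetilde{\_}$ yields a well-defined functor, and then that the composite $\widehat{\_} \circ \widetilde{\_}$ is the identity on $[\Sigma, \C]^{\cpt}_{\iso}$. Throughout, I will treat $\C$ as a strict compact closed category (as permitted by the coherence discussion in Rem.~\ref{remk:CCSjustification}), so that $\dual{(\dual{X})} = X$, $\dual{(X \otimes Y)} = \dual{Y} \otimes \dual{X}$, and $\dual{I} = I$ on the nose in $\C$; this is the setting in which the stated equalities $\widehat{V}(c) = V(c)$ and $\widehat{V}_{w,u}(f) = V(f)$ make sense without intervening coherence isomorphisms.

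For $\widehat{\_}$, I first check well-definedness on objects. Given $V \in [\adduc{\Sigma}, \C]^{\sm}$, the assignment $\widehat{V}(c) \defeq V(\inj_1(c))$ is a function $C \to \ob(\C)$, and for $f \in \Sigma(w,u) \subseteq \adduc{\Sigma}(w,u)$, $\widehat{V}_{w,u}(f) \defeq V_{w,u}(f)$ has domain $V^{\ast}(w)$ and codomain $V^{\ast}(u)$; the key point is that the monoidal extension $V^{\ast}$ on $(C+C)^{\ast}$ matches $\widehat{V}^{\vee}(w)$ (the compact closed extension from Def.~\ref{def:valuationCpt}) because of strictness and the fact that on duals $V$ is forced, via the triangular identities for $V(d_c), V(e_c)$, to coincide with the compact closed dual in $\C$. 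For morphisms, if $\theta : V \to W$ is a symmetric monoidal valuation morphism, I take $\widehat{\theta}_c \defeq \theta_{\inj_1(c)}$; functoriality $\widehat{\theta \circ \theta'} = \widehat{\theta} \circ \widehat{\theta'}$ and $\widehat{\id_V} = \id_{\widehat{V}}$ are pointwise. The subtle step here is showing $\widehat{\theta}_c$ is an isomorphism and that the induced family $(\widehat{\theta}^{\vee}_w)_w$ coincides with $(\theta_w)_w$ on the dual colors. This follows because the compatibility of $\theta$ with the generators $d_c$ and $e_c$ uniquely forces $\theta_{\inj_2(c)} = (\dual{\theta_{\inj_1(c)}})^{-1}$, which is automatically an iso (since compact closed duality is an equivalence on the dual side).

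For $\widetilde{\_}$, on objects, I set $\widetilde{V}(\inj_1(c)) \defeq V(c)$, $\widetilde{V}(\inj_2(c)) \defeq \dual{V(c)}$, and then send each generator $f \in \Sigma(w,u)$ to $V_{w,u}(f)$, while sending the added generators $d_c$ and $e_c$ of $\adduc{\Sigma}$ to the chosen unit $\dunit_{V(c)} : I \to V(c) \otimes \dual{V(c)}$ and counit $\dcounit_{V(c)} : \dual{V(c)} \otimes V(c) \to I$ of $\C$. Well-definedness as an SMT valuation requires that the equations in $\adducep{\Sigma}$, namely $(\id_c \cplus \dcounit_c)\circ(\dunit_c \cplus \id_c) = \id_c$ and $(\dcounit_c \cplus \id_{\dual{c}})\circ(\id_{\dual{c}} \cplus \dunit_c) = \id_{\dual{c}}$, are validated in $\C$; these are exactly the triangular identities of the compact closed structure of $\C$ (Def.~\ref{def:CptCC}), so they hold. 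On morphisms, I extend $\theta : V \to W$ by setting $\widetilde{\theta}_{\inj_1(c)} \defeq \theta_c$ and $\widetilde{\theta}_{\inj_2(c)} \defeq (\dual{\theta_c})^{-1}$; the compatibility square for the generators $d_c$ and $e_c$ is then precisely the naturality of the compact closed structure with respect to the iso $\theta_c$, which holds automatically. Functoriality is again pointwise, using that $(\dual{(\theta_c \circ \theta'_c)})^{-1} = (\dual{\theta'_c})^{-1} \circ (\dual{\theta_c})^{-1}$ by contravariance of $\dual{(-)}$.

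Finally, the identity $\widehat{\_} \circ \widetilde{\_} = \id$ is a direct unfolding: on objects, $(\widehat{\widetilde{V}})(c) = \widetilde{V}(\inj_1(c)) = V(c)$ and $(\widehat{\widetilde{V}})_{w,u}(f) = \widetilde{V}_{w,u}(f) = V_{w,u}(f)$ for every $f \in \Sigma(w,u)$; on morphisms, $(\widehat{\widetilde{\theta}})_c = \widetilde{\theta}_{\inj_1(c)} = \theta_c$. I expect the main obstacle to be bookkeeping-level rather than conceptual: keeping track of the $(C+C)$-indexing versus $C$-indexing, and aligning the iterated extensions $V^{\ast}$ (on $(C+C)^{\ast}$, purely monoidal) with $V^{\vee}$ (on $(C+C)^{\ast}$, using compact closedness) under strictness, together with the verification that the $\adducep{\Sigma}$-equations are respected by $\widetilde{V}$. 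None of these is deep, but care is required to ensure the isos tacitly absorbed in the strict presentation are genuinely the identities.
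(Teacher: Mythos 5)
Your write-up is far more detailed than the paper's own proof, which literally reads ``Trivial''; your treatment of $\widetilde{\_}$ (sending $\inj_2(c)$ to $\dual{V(c)}$ and $d_c,e_c$ to the chosen $\dunit_{V(c)},\dcounit_{V(c)}$, with $\widetilde{\theta}_{\inj_2(c)}=(\dual{\theta_c})^{-1}$ and the naturality squares for $d_c,e_c$ checked via the standard identity $(\theta_c\otimes\id)\circ\dunit_{V(c)}=(\id\otimes\dual{\theta_c})\circ\dunit_{W(c)}$) and of the composite $\widehat{\_}\circ\widetilde{\_}=\id$ is correct. The one place where your argument genuinely fails is the well-definedness of $\widehat{\_}$ on \emph{all} of $[\adduc{\Sigma},\C]^{\sm}$. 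You justify it by saying that $V$ is ``forced, via the triangular identities for $V(d_\dr),V(e_\dr)$, to coincide with the compact closed dual.'' But $[\adduc{\Sigma},\C]^{\sm}$ is the Joyal--Street category of valuations of the \emph{signature} $\adduc{\Sigma}$, not of the theory: an object $V$ assigns an arbitrary object to $\inj_2(c)$ and arbitrary morphisms of the correct type to $d_c$ and $e_c$, with no requirement that the equations in $\adducep{\Sigma}$ hold (those are only imposed when one passes to the coequalizer $\F(T_\Sigma)$). Consequently $V(\inj_2(c))$ need not be a dual of $V(\inj_1(c))$ at all, and even when the triangle identities happen to hold it is only \emph{isomorphic} to the chosen dual $\dual{V(\inj_1(c))}$, so the formula $\widehat{V}_{w,u}(f)\defeq V_{w,u}(f)$ does not typecheck as a morphism $\vext{\widehat{V}}(w)\to\vext{\widehat{V}}(u)$ without inserting canonical isomorphisms. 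The same problem recurs at the morphism level: your argument that $\widehat{\theta}_c$ is invertible is the standard ``a morphism of dualities is an isomorphism,'' which again presupposes that $V(d_c),V(e_c)$ and $W(d_c),W(e_c)$ are genuine dualities.

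To be fair, this is a defect of the lemma as stated rather than only of your proof, and it does not propagate: in every later use (the definition of $\eta^{\cpt}_\Sigma\defeq\widehat{H(\pi)}$ and Lemmas~\ref{lem:commuteaction} and~\ref{lem:commuteh}) the functor $\widehat{\_}$ is applied only to valuations arising from the projection $\pi:\Fsig(\adduc{\Sigma})\to\F(T_\Sigma)$ or from the image of $\widetilde{\_}$, all of which interpret $\inj_2(c)$ as the chosen strict dual and $d_c,e_c$ as the chosen unit and counit, so the formulas are literally well-typed there. A clean repair is therefore to define $\widehat{\_}$ only on the (non-full) subcategory of $[\adduc{\Sigma},\C]^{\sm}$ consisting of such ``strictly dualizing'' valuations and iso-morphisms between them, which is all that $\widehat{\_}\circ\widetilde{\_}=\id$ and the subsequent arguments require; alternatively one must build the canonical comparison isomorphisms into the definition of $\widehat{V}$. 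As written, your claim that the triangle identities are available for an arbitrary object of $[\adduc{\Sigma},\C]^{\sm}$ is the step that would fail.
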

\begin{proof}
  Trivial.
\end{proof}

Obviously, actions, and $\widehat{\_}$ and $\widetilde{\_}$ are commute. 
\begin{lemma}
    \label{lem:commuteaction}
    Let $F\in \Cptcc(\C, \D)$, $V\in [\Sigma, \C]^{\cpt}_{\iso}$, and $W\in [\Sigma, \C]^{\sm}$. Then, $\widehat{F\circ V} \simeq F\circ \widehat{V}$ and $\widetilde{F\circ W}\simeq F\circ \widetilde{W}$.

\end{lemma}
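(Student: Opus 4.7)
The plan is to exhibit each isomorphism of valuations via a family of identity components on colors and then reduce the verification to checking one equation per generator of the relevant signature. For the first isomorphism, both $\widehat{\Comp{F}{V}}(c)$ and $\Comp{F}{\widehat{V}}(c)$ unfold to $F(V(c))$ by the definitions of $\widehat{\_}$ and the two actions, so I would take $\theta_c = \id_{F(V(c))}$ as the witness. What then remains is, for every generator $f \in \Sigma(w, u) \subseteq \adduc{\Sigma}(w, u)$, the equality of the two actions evaluated at $f$. Both expand to $(\phi^{F}_{u})^{-1} \circ F(V_{w, u}(f)) \circ \phi^{F}_{w}$, where $\phi^{F}_{w}$ is read as the symmetric-monoidal coherence iso on one side and as the compact-closed one on the other; they coincide by the inductive clause of Def.~\ref{def:actionObjOnly}.

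For the second isomorphism $\widetilde{\Comp{F}{W}} \simeq \Comp{F}{\widetilde{W}}$, I would again take $\theta_c = \id_{F(W(c))}$ and argue generator by generator of $\adduc{\Sigma}$. On every $f \in \Sigma(w, u) \subseteq \adduc{\Sigma}(w, u)$ the check is identical to the one above. The genuinely new content arises at the two families $\dunit_c \in \adduc{\Sigma}(\emptyword, c \cdot \dual c)$ and $\dcounit_c \in \adduc{\Sigma}(\dual c \cdot c, \emptyword)$. At $\dunit_c$, the right-hand side is, by the definition of $\widetilde{\_}$, the chosen unit $\dunit_{F(W(c))}$ of $\D$, while the left-hand side $(\Comp{F}{\widetilde{W}})_{\emptyword, c \cdot \dual c}(\dunit_c)$ unfolds to $(\phi^{F}_{c \cdot \dual c})^{-1} \circ F(\dunit_{W(c)})$. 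These agree by the standard fact, recorded right after Def.~\ref{def:CptCC}, that every symmetric strong monoidal functor between compact closed categories preserves the unit and counit up to the coherence iso, i.e.\ $F(\dunit_{W(c)}) = \phi^{F}_{c \cdot \dual c} \circ \dunit_{F(W(c))}$, and dually for $\dcounit_c$.

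The main obstacle I anticipate is not conceptual but bureaucratic: keeping track of the iterated coherence iso $\phi^{F}_w$ for a word $w \in (C+C)^{\ast}$ of arbitrary length, and verifying that its inductive definition interacts correctly with dualization on the letters $\dual c$. A short induction on the length of $w$ should reduce everything to the single-letter cases, after which each remaining equation is a quick diagram chase in $\D$. No new categorical input is needed beyond the free compact closedness of the ambient syntax and the canonical compatibility of compact closed functors with duals, so the overall argument is essentially an unfolding of the definitions together with one appeal to the standard preservation lemma for units and counits.
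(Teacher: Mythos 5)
Your proposal is correct; the paper's own proof of this lemma is literally the single word ``Trivial'', so your generator-by-generator unfolding --- whose only real content is that a compact closed (i.e.\ symmetric strong monoidal) functor carries the chosen unit and counit to the chosen unit and counit up to the coherence isomorphism --- is exactly the verification the authors leave implicit. One small adjustment: at a dualized color $\dual{c}$ the two valuations being compared assign $\dual{F(V(c))}$ and $F(\dual{V(c)})$ respectively, so the witnessing component there must be the canonical dual-preservation isomorphism rather than an identity; this is precisely the coherence bookkeeping you already flag as the remaining work, and it is the reason the statement is an isomorphism $\simeq$ rather than an equality.
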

\begin{proof}
    Trivial.
\end{proof}

The freeness of \(\Fsig\) as a prop in Thm.~\ref{thm:UsigHasLeftFsig}
extends to some 2-dimensional freeness as a symmetric monoidal category.
First we recall the 2-dimensional freeness,
postponing its relation to \(\Fsig\):
\begin{theorem}[\cite{joyal1991geometry}]
    \label{thm:stringdiagram}
Let \(\Sigma\) be a \(C\)-signature.
Then there exist a prop \(\Fs(\Sigma)\) and a valuation 
$\eta^{\sm}_\Sigma\in ob\big([\Sigma, \Fs(\Sigma)]^{\sm}\big)$ 
with \(\eta^{\sm}_\Sigma(c) = c\) \((c \in C)\)
such that
for any symmetric monoidal category $\C$,
        \begin{align*}
            &\Comp{-}{\eta^{\sm}_\Sigma}: \SMC(\Fs(\Sigma), \C) \rightarrow [\Sigma, \C]^{\sm}
        \end{align*}
        yeilds an equivalence of categories.

Moreover,
if \(\C\) is a strict symmetric monoidal category,
then for any \(V \in [\Sigma, \C]^{\sm}\),
there exists unique (up to equality)
\emph{strict} symmetric monoidal functor \(F : \Fs(\Sigma) \to \C\)
such that \(F = \Comp{V}{\eta^{\sm}_\Sigma}\).\asd{fix this notation, removing paren.}
\end{theorem}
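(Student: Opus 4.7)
The plan is to construct $\Fs(\Sigma)$ concretely as the prop of formal terms generated by $\Sigma$, take $\eta^{\sm}_\Sigma$ to be the tautological valuation, and verify the equivalence using Mac Lane's coherence theorem for symmetric monoidal categories. Concretely, I would define $\Fs(\Sigma) \defeq \Fsig^C(\Sigma)$, the free $C$-prop already obtained in Thm.~\ref{thm:UsigHasLeftFsig}: its objects are words $w \in C^{\ast}$, and its morphisms are equivalence classes of formal terms built from generators of $\Sigma$, identities and symmetries, under sequential and parallel composition, quotiented by the axioms of a strict symmetric monoidal category. The valuation $\eta^{\sm}_\Sigma$ then sends $c \in C$ to $c \in C^{\ast}$ and each $f \in \Sigma(w,u)$ to its equivalence class in $\Fs(\Sigma)(w,u)$.

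For essential surjectivity of $\Comp{-}{\eta^{\sm}_\Sigma}$, I would start from a valuation $V \in [\Sigma,\C]^{\sm}$ and extend it to a symmetric strong monoidal functor $F : \Fs(\Sigma) \to \C$ by recursion on terms: on objects, $F(c_1 \cdots c_n) \defeq V(c_1) \otimes \cdots \otimes V(c_n)$ with a chosen bracketing (say, left-associated); on morphisms, using the recursive definition of sequential and parallel composition in $\C$, together with coherence isomorphisms $\phi^F_{w,u}$ that absorb the discrepancy between the strict tensor in $\Fs(\Sigma)$ and the pseudo-associative/unital tensor in $\C$. The well-definedness of $F$ with respect to the SMC axioms in $\Fs(\Sigma)$ reduces to the commutativity, in $\C$, of diagrams built entirely from coherence isomorphisms and symmetries, which is exactly what Mac Lane's coherence theorem provides. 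By construction, $\Comp{F}{\eta^{\sm}_\Sigma} = V$, so $\Comp{-}{\eta^{\sm}_\Sigma}$ is (essentially) surjective on objects.

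For full faithfulness, I would show that a monoidal natural isomorphism $\alpha : F \Rightarrow G$ between two symmetric monoidal functors $\Fs(\Sigma) \to \C$ is uniquely determined by its components $\alpha_c$ at the generating objects $c \in C$: indeed, the components at a word $w = c_1 \cdots c_n$ are forced by the monoidal-coherence condition on $\alpha$ to be $\alpha_{c_1} \otimes \cdots \otimes \alpha_{c_n}$ (up to the $\phi$'s), and naturality of $\alpha$ at a morphism of $\Fs(\Sigma)$ is then automatic by induction on its construction, since it reduces to the naturality condition at generators, which is precisely the data of a valuation morphism $\Comp{F}{\eta^{\sm}_\Sigma} \to \Comp{G}{\eta^{\sm}_\Sigma}$. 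The ``moreover'' clause for strict $\C$ is easier: all $\phi^F_{w,u}$ can be taken as identities, so the inductive extension leaves no freedom, yielding a genuinely strict symmetric monoidal $F$ that is determined on the nose, hence uniquely (up to equality) rather than up to isomorphism.

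The main obstacle I anticipate is the bookkeeping of coherence isomorphisms: one must verify that the $\phi^F$ defined by induction on words are compatible with all the hexagon/pentagon diagrams imposed on a symmetric strong monoidal functor, and that the recursive definition on morphisms respects the exchange (interchange) law and the dinaturality of the symmetries. Both reduce to coherence, but the argument is notoriously technical. A clean way to organize it is to first prove the theorem when $\C$ is strict (where everything is forced and coherence is trivial), and then extend to general $\C$ by Mac Lane's strictification, using that $\SMC(\Fs(\Sigma),\C) \simeq \SMC(\Fs(\Sigma),\C^{\mathrm{str}})$ and $[\Sigma,\C]^{\sm} \simeq [\Sigma,\C^{\mathrm{str}}]^{\sm}$ via the canonical equivalence $\C \simeq \C^{\mathrm{str}}$.
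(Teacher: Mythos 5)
Your proposal is mathematically sound, but it takes a genuinely different route from the paper: the paper does not prove this theorem at all --- its ``proof'' identifies the first statement with Theorem~2.3 of \cite{joyal1991geometry}, where \(\Fs(\Sigma)\) is constructed geometrically from string diagrams, and observes that the strict ``moreover'' clause is implicit in the proof of Theorem~1.2 there. You instead give a self-contained syntactic argument: define \(\Fs(\Sigma)\) as the term model \(\Fsig^C(\Sigma)\), extend a valuation to a strong monoidal functor by recursion on terms, and discharge well-definedness via Mac Lane coherence and strictification. Two points of comparison. First, the paper deliberately keeps \(\Fs(\Sigma)\) and \(\Fsig^C(\Sigma)\) distinct at this stage and only identifies them afterwards (Cor.~\ref{cor:equivalence}) by a Yoneda comparison of universal properties; your approach collapses that distinction by definition, which is legitimate but obliges you to establish the full 2-categorical universal property for the term model yourself --- the 1-categorical adjunction \(\Fsig^C \dashv \Usig^C\) of Thm.~\ref{thm:UsigHasLeftFsig} only classifies identity-on-objects \emph{strict} monoidal functors into props, not strong monoidal functors into arbitrary \(\C\), so it does not suffice on its own, and you correctly supply the missing coherence work rather than leaning on the adjunction. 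Second, your plan of first treating strict \(\C\) (where the extension is forced on the nose, giving the ``moreover'' clause for free) and then transferring along \(\C \simeq \C^{\mathrm{str}}\) is the cleanest way to organize the bookkeeping; the cost is that you re-prove a known result in full, whereas the citation sidesteps the notoriously technical coherence verification. Your full-faithfulness argument (components of a monoidal natural transformation forced at generating objects, naturality checked by induction on term structure and reducing to the valuation-morphism condition at generators) is the standard one and is correct.
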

\begin{proof}
The former statement is just Thm.~2.3 in~\cite{joyal1991geometry},
where the fact that \(\Fs(\Sigma)\) is a prop and \(\eta^{\sm}_\Sigma(c) = c\)
is shown in the proof.
The latter one is not explicitly found in~\cite{joyal1991geometry};
but its (not necessarily symmetric) monoidal version is written in the proof of Thm.~1.2 in~\cite{joyal1991geometry}
and the part of the proof is easily adapted\asd{to be checked} to the symmetric monoidal version,
i.e., the latter one in the current theorem.
\end{proof}

Then \(\Fsig\) (given in Thm.~\ref{thm:UsigHasLeftFsig})
is nothing but \(\Fs\):
\begin{corollary}
    \label{cor:equivalence}
Let \(\Sigma\) be a \(C\)-signature.
We have an isomorphism \(\Fsig(\Sigma) \cong \Fs(\Sigma)\) in \(\Cprop{C}\).
This induces an obvious isomophism 
\(\Csig{C}\big(\Sigma, (\Usig \circ \Fsig)(\Sigma)\big) \cong 
\set{V \in [\Sigma, \Fs(\Sigma)]^{\sm} \,\big|\,  V(c) = c \ (c \in C)}\),
which maps \(\eta^{\sig}_{\Sigma}\) to \(\eta^{\sm}_\Sigma\),
where \(\eta^{\sig}\) is the unit of \(\Fsig \dashv \Usig\).
\end{corollary}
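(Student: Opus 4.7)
The plan is to show that $\Fsig(\Sigma)$ and $\Fs(\Sigma)$ corepresent the same functor $\Cprop{C}\rightarrow \Set$, and then conclude by the Yoneda lemma.

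First, I would observe that for any $C$-prop $P$, the strictness clause of Thm.~\ref{thm:stringdiagram} yields a natural bijection between strict symmetric monoidal functors $\Fs(\Sigma)\to P$ and valuations $V\in [\Sigma, P]^{\sm}$. Since the object monoid of any $C$-prop is the free monoid $C^{\ast}$ on $C$, and $\vext{V}:C^{\ast}\to \ob(P) = C^{\ast}$ is the unique monoid extension of the colour-component $V|_C : C \to C^{\ast}$, a strict symmetric monoidal functor is the identity on objects (hence a $C$-prop morphism) precisely when the corresponding valuation satisfies $V(c) = c$ for all $c\in C$. This restricts the bijection to
\[
\Cprop{C}(\Fs(\Sigma), P) \;\cong\; \{V \in [\Sigma, P]^{\sm} \mid V(c) = c\ (c\in C)\}.
\]

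Second, for such a restricted valuation $\vext{V} = \id_{C^{\ast}}$, so its action on arrows is exactly a family $V_{w,u}:\Sigma(w,u)\to P(w,u) = \Usig(P)(w,u)$, i.e., a $C$-signature morphism $\Sigma\to \Usig(P)$. Combined with the adjunction $\Fsig\dashv \Usig$ of Thm.~\ref{thm:UsigHasLeftFsig}, this gives a natural-in-$P$ chain
\[
\Cprop{C}(\Fs(\Sigma), P) \;\cong\; \Csig{C}(\Sigma, \Usig(P)) \;\cong\; \Cprop{C}(\Fsig(\Sigma), P),
\]
whence the Yoneda lemma in $\Cprop{C}$ yields the desired isomorphism $\Fs(\Sigma)\cong \Fsig(\Sigma)$.

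For the claim on units, I would specialise this chain at $P = \Fs(\Sigma)$ and transport along the established iso $\Fsig(\Sigma)\cong \Fs(\Sigma)$ to rewrite $\Usig\circ \Fsig(\Sigma)$ as $\Usig\circ \Fs(\Sigma)$. Universal elements are identified along such Yoneda chains: the identity on $\Fsig(\Sigma)$ corresponds under $\Fsig\dashv \Usig$ to $\eta^{\sig}_{\Sigma}$, while the identity on $\Fs(\Sigma)$ corresponds under the first step to the universal valuation $\eta^{\sm}_{\Sigma}$ (by the defining property in Thm.~\ref{thm:stringdiagram}). Since the Yoneda iso identifies these two identity maps, $\eta^{\sig}_{\Sigma}\mapsto \eta^{\sm}_{\Sigma}$ as claimed. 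The only mild obstacle is the careful bookkeeping between strict symmetric monoidal functors, $C$-prop morphisms, and valuations with $V(c) = c$, all of which is purely formal once the freeness of $C^{\ast}$ on $C$ and the strictness clause of Thm.~\ref{thm:stringdiagram} are invoked.
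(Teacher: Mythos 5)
Your proposal is correct and follows essentially the same route as the paper: both establish the natural-in-$P$ chain of bijections $\Cprop{C}(\Fsig(\Sigma),P)\cong \Csig{C}(\Sigma,\Usig(P))\cong \set{V \in [\Sigma,P]^{\sm} \mid V(c)=c}\cong \Cprop{C}(\Fs(\Sigma),P)$ using the adjunction $\Fsig\dashv\Usig$ and the strictness clause of Thm.~\ref{thm:stringdiagram}, and then conclude by Yoneda. Your tracing of universal elements for the claim about the units is a welcome expansion of the paper's ``the remaining parts follow immediately,'' but it is not a different argument.
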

\begin{proof}
Let \(\C\) be an arbitrary \(C\)-prop.
We have an obvious isomorphism
\[
\Csig{C}(\Sigma,\Usig(\C)) \ \ \cong \ \
\set{V \in [\Sigma, \C]^{\sm} \,\big|\,  V(c) = c \ (c \in C)}.
\]
By the latter statement of Thm.~\ref{thm:stringdiagram},
we have a bijection
\(\Comp{-}{\eta^{\sm}_\Sigma} : \Cprop{C}(\Fs(\Sigma),\C)
\to
\set{V \in [\Sigma, \C]^{\sm} \,\big|\,  V(c) = c \ (c \in C)}\).
Then, we have
\begin{align*}
\Cprop{C}(\Fsig(\Sigma),\C)
\ \ &\cong\ \ 
\Csig{C}(\Sigma,\Usig(\C))
\\
\ \ &\cong\ \ 
\set{V \in [\Sigma, \C]^{\sm} \,\big|\,  V(c) = c \ (c \in C)}
\\
\ \ &\cong\ \ 
\Cprop{C}(\Fs(\Sigma),\C).
\end{align*}
By the Yoneda lemma,
we have \(\Fsig(\Sigma) \cong \Fs(\Sigma)\) in \(\Cprop{C}\).
The remaining parts follow immediately.
\end{proof}
By this corollary, we identify \(\Fsig\) with \(\Fs\) and \(\eta^{\sig}\) with \(\eta^{\sm}\).
We denote $H:\SMC(\Fsig(\Sigma), \C)\rightarrow [\Sigma, \C]^{\sm}$ as
$\Comp{-}{\eta^{\sm}_\Sigma}$ and $G:[\Sigma, \C]^{\sm}\rightarrow
\SMC(\Fsig(\Sigma), \C)$ is a corresponding inverse functor.\asd{I think ``we denote A by B'' is used
more often in Math. I'm not sure if the usage of ``denote ... as ...'' is correct.}

By using Thm.~\ref{thm:stringdiagram} and Cor.~\ref{cor:equivalence}, we can prove a converse statement of Lem.~\ref{lem:valuations} with an aditional condition.
\begin{lemma}
    \label{lem:commuteh}
    Let $\pi:\Fsig(\adduc{\Sigma})\rightarrow \F(T_{\Sigma})$ is a canonical epic functor of 2-coequalizer.
    Then $\widetilde{\widehat{\Comp{\pi}{\eta^{\sm}_{\adduc{\Sigma}}}}}  \simeq \Comp{\pi}{\eta^{\sm}_{\adduc{\Sigma}}}$.
\end{lemma}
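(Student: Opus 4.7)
The plan is to unfold the definitions of $\widehat{\_}$ and $\widetilde{\_}$ on the specific valuation $V \defeq \Comp{\pi}{\eta^{\sm}_{\adduc{\Sigma}}}$ and exhibit an explicit valuation isomorphism $\theta: V \to \widetilde{\widehat{V}}$, namely the identity-on-objects family $\theta_c = \id_{\pi(c)}$. Chasing definitions, $V$ sends each color $c \in C$ to $\pi(c)$ and each generator $f \in \adduc{\Sigma}(w,u)$ to $\pi(f)$. Applying $\widehat{\_}$ forgets the action on the additional generators $d_c, e_c$ and reinterprets the remainder as a valuation of the pure CCS $\Sigma$ into the compact closed category $\F(T_{\Sigma})$. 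Applying $\widetilde{\_}$ then freshly reassigns $d_c \mapsto \dunit^{\F(T_{\Sigma})}_c$ and $e_c \mapsto \dcounit^{\F(T_{\Sigma})}_c$, while leaving the images of the other generators unchanged.

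The crux of the proof is therefore to check the two equalities
\begin{equation*}
\pi(d_c) \;=\; \dunit^{\F(T_{\Sigma})}_c
\qquad \text{and} \qquad
\pi(e_c) \;=\; \dcounit^{\F(T_{\Sigma})}_c
\end{equation*}
in $\F(T_{\Sigma})$, for every $c \in C$. This is where the concrete definition of $\F(T_{\Sigma})$ as the coequalizer of $\tp{l}, \tp{r}: \Fsig^{C+C}(\adducep{\Sigma}) \to \Fsig^{C+C}(\adduc{\Sigma})$ plays the central role: the equations in $\adducep{\Sigma}$ are exactly the triangle identities on $d_c, e_c$, so after passing through $\pi$ the images $\pi(d_c), \pi(e_c)$ automatically satisfy the axioms of Def.~\ref{def:CptCC} with respect to the dual object $\pi(\dual{c}) = \dual{\pi(c)}$. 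Since the compact closed structure on $\F(T_{\Sigma})$ is the \emph{canonical} one obtained from the coequalizer construction, we may (and do) choose $\dunit^{\F(T_{\Sigma})}_c \defeq \pi(d_c)$ and $\dcounit^{\F(T_{\Sigma})}_c \defeq \pi(e_c)$; the extensions $\dunit_w, \dcounit_w$ to arbitrary $w \in (C+C)^{\ast}$ are then obtained via the coherence formulas listed after the definition of $\adduct{\Sigma}$.

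With these identifications in hand, $V$ and $\widetilde{\widehat{V}}$ literally agree on every generator of $\adduc{\Sigma}$, and also on every color. Hence the naturality square of Def.~\ref{def:valuationCpt} for $\theta_c = \id_{\pi(c)}$ collapses to $V_{w,u}(f) = \widetilde{\widehat{V}}_{w,u}(f)$ on generators, which extends to all of $\adduc{\Sigma}$ by induction on the structure of symmetric-monoidal terms. Thus $\theta$ is a valuation isomorphism and the claimed $\widetilde{\widehat{V}} \simeq V$ follows.

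The main obstacle I anticipate is bookkeeping rather than conceptual: one must verify that the compact closed structure on the coequalizer $\F(T_{\Sigma})$ is \emph{well-defined} with the choices $\dunit^{\F(T_{\Sigma})}_c = \pi(d_c)$, $\dcounit^{\F(T_{\Sigma})}_c = \pi(e_c)$, and that the induced $\dunit_w, \dcounit_w$ satisfy the strict compact closure coherence conditions recalled in the text above Rem.~\ref{remk:CCSjustification}. This amounts to checking that the equations in $\adducep{\Sigma}$, though only stated for single colors $c$, suffice to derive the triangle identities for the compound $\dunit_w, \dcounit_w$---a routine but careful induction on $w$ using the swap naturality in $\Fsig(\adduc{\Sigma})$, which passes to $\F(T_{\Sigma})$ via $\pi$.
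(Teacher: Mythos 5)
Your proof is correct, and it actually makes explicit the content that the paper's own proof leaves implicit. The paper disposes of this lemma in two lines, citing Cor.~\ref{cor:equivalence} twice to identify $\widetilde{\widehat{\Comp{\pi}{\eta^{\sm}_{\adduc{\Sigma}}}}}$ with $\Comp{\pi}{\eta^{\sig}_{\adduc{\Sigma}}}$ and then with $\Comp{\pi}{\eta^{\sm}_{\adduc{\Sigma}}}$; that corollary only records the identification of $\Fsig$ with $\Fs$ and of $\eta^{\sig}$ with $\eta^{\sm}$, so the substantive step --- why the round trip $\widetilde{\widehat{(-)}}$ does not change the valuation --- is not spelled out there. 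You identify it precisely: by Lemma~\ref{lem:valuations}, $\widetilde{\widehat{V}}$ can differ from $V$ only on the generators $\dunit_c$ and $\dcounit_c$, and for $V = \Comp{\pi}{\eta^{\sm}_{\adduc{\Sigma}}}$ one has $V(\dunit_c) = \pi(\dunit_c)$ and $V(\dcounit_c) = \pi(\dcounit_c)$, which coincide with the chosen unit and counit of the compact closed structure on $\F(T_{\Sigma})$ because the coequalizer imposes exactly the triangle identities and the compact closed structure asserted in Thm.~\ref{thm:preciseFreeComp} is the one so induced. Two small remarks. First, the coherence bookkeeping you flag for compound words $w$ is not actually needed for this lemma: the signature $\adduc{\Sigma}$ only contains $\dunit_c$ and $\dcounit_c$ for single colors $c$, and $\widetilde{(-)}$ only reassigns those, so the comparison of $V$ and $\widetilde{\widehat{V}}$ never touches $\dunit_w$ or $\dcounit_w$ for non-atomic $w$ (the coherence conditions matter for the theorem, not for this step). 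Second, your argument in fact yields equality of valuations via the identity family $\theta_c = \id_c$ (recall $\pi$ is identity on objects), which is stronger than the stated $\simeq$; that is fine.
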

\begin{proof}
    By Cor.~\ref{cor:equivalence}, we can identify $\widetilde{\widehat{\Comp{\pi}{\eta^{\sm}_{\adduc{\Sigma}}}}}$ with $\Comp{\pi}{\eta^{\sig}_{\adduc{\Sigma}}}$. 
    By Cor.~\ref{cor:equivalence}, $\Comp{\pi}{\eta^{\sig}_{\adduc{\Sigma}}}\simeq \Comp{\pi}{\eta^{\sm}_{\adduc{\Sigma}}}$.
\end{proof}

By the following lemma, we can use a universality of 2-coequalizer in $\SMC$.
\begin{lemma}
    \label{lem:coequalizer}
    Let $V\in [\Sigma, \C]^{\cpt}_{\iso}$. Then $G(\widetilde{V}) \circ \Fsig(l) = G(\widetilde{V}) \circ \Fsig(r)$.
\end{lemma}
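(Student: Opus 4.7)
The core observation is that the equations in $\adducep{\Sigma}$ are \emph{exactly} the triangular identities for compact closedness, and the valuation $\widetilde{V}$ interprets the extra generators $\dunit_c$ and $\dcounit_c$ using the chosen compact closed structure of $\C$ (since $V$ is a valuation of the CCS $\Sigma$ into the compact closed category $\C$). Thus, when $G(\widetilde{V})$ is applied to both sides of an equation in $\adducep{\Sigma}$, the result is an equation that already holds in $\C$.

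More concretely, the plan is the following. First, I would unfold $G(\widetilde{V})$ using Cor.~\ref{cor:equivalence}: it is the unique strict symmetric monoidal functor $\Fsig(\adduc{\Sigma}) \to \C$ extending the valuation $\widetilde{V}$, so its action on any term is obtained by dispatching the constants of $\adduc{\Sigma}$ via $\widetilde{V}$ and handling composition and tensor strictly. Second, by Def.~\ref{def:SMTforParity}'s generalization, the only non-trivial equations live in $\adducep{\Sigma}(c,c)$ and $\adducep{\Sigma}(\dual{c},\dual{c})$, namely
\[
(\id_c \cplus \dcounit_c) \circ (\dunit_c \cplus \id_c) = \id_c
\qquad\text{and}\qquad
(\dcounit_c \cplus \id_{\dual c}) \circ (\id_{\dual c} \cplus \dunit_c) = \id_{\dual c},
\]
so it suffices to check that $G(\widetilde{V})$ identifies the two sides for each such generator.

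For the first equation, the definitions of $\widetilde{V}$ (Lem.~\ref{lem:valuations}) and of the action (Def.~\ref{def:valuationCpt}) give
\[
G(\widetilde{V})\bigl(\Fsig(l)(e)\bigr) \;=\; (\id_{V(c)} \otimes \dcounit^{\C}_{V(c)}) \circ (\dunit^{\C}_{V(c)} \otimes \id_{V(c)})
\qquad\text{and}\qquad
G(\widetilde{V})\bigl(\Fsig(r)(e)\bigr) \;=\; \id_{V(c)},
\]
which are equal because $\C$ is compact closed (Def.~\ref{def:CptCC}). The second equation is symmetric and handled identically using the triangular identity on $\dual{V(c)}$. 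Finally, since every morphism in $\Fsig(\adducep{\Sigma})$ is built from such generators by composition, tensor, and symmetries, strictness of $G(\widetilde{V})$ as a symmetric monoidal functor propagates this equality to all of $\Fsig(\adducep{\Sigma})$, yielding $G(\widetilde{V}) \circ \Fsig(l) = G(\widetilde{V}) \circ \Fsig(r)$.

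The only mild obstacle is bookkeeping: one must make sure that the interpretation of $\dunit_c$ and $\dcounit_c$ under $\widetilde{V}$ really is the chosen unit/counit in $\C$ (not some canonical isomorphism-shifted version), which requires tracing through the strictness assumption on $\C$ as a compact closed category and the coherence isomorphisms $\phi^F$ used in the action. Once this is pinned down, the verification reduces to the triangular identities of $\C$.
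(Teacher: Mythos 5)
Your proof is correct and takes essentially the same route as the paper's: the paper's own proof is the single line ``this is proved by the naturality of $G$,'' which corresponds to your reduction-to-generators step, while the substantive check --- that $G(\widetilde{V})$ sends the two sides of each equation in $\adducep{\Sigma}$ to the two sides of a triangle identity that holds in $\C$ --- is left implicit there and spelled out by you. The only bookkeeping point (which you already flag) is that $G$ is merely a pseudo-inverse to $H$, so $G(\widetilde{V})$ extends $\widetilde{V}$ only up to a natural isomorphism; since the conjugating isomorphisms depend only on the source and target words, the required equality still holds on the nose.
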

\begin{proof}
    This is proved by the naturality of G.
\end{proof}

\begin{proof}(Proof of Thm.~\ref{thm:preciseFreeComp})
  First, we define a functor $R:[\Sigma, \C]^{\cpt}_{\iso}\rightarrow \Cptcc(\F(T_{\Sigma}), \C)$.
  Let $V\in [\Sigma, \C]^{\cpt}_{\iso}$. Then we can get $\widetilde{V}\in [\adduc{\Sigma}, \C]^{\sm}$ by Lem.~\ref{lem:valuations}.
  By Thm.~\ref{thm:stringdiagram}, $G(\widetilde{V})\in \SMC(\Fsig(\adduc{\Sigma}), \C)$. 
  By Lem.~\ref{lem:coequalizer}, there is a unique functor $G(\widetilde{V})'\in \SMC(\F(T_{\Sigma}), \C)$ such that 
  $G(\widetilde{V})'\circ \pi = G(\widetilde{V})$ where $\pi:\Fsig(\adduc{\Sigma})\rightarrow \F(T_{\Sigma})$ is a canonical epic functor of 2-coequalizer. Then, we define $R(V)\defeq G(\widetilde{V})'$.
  Let $\alpha\in [\Sigma, \C]^{\cpt}_{\iso}(V, W)$. Then we can get $\widetilde{\alpha}\in [\adduc{\Sigma}, \C]^{\sm}(\widetilde{V}, \widetilde{W})$
  , and $G(\widetilde{\alpha})\in \SMC(\Fsig(\adduc{\Sigma}), \C)(G(\widetilde{V}), G(\widetilde{W}))$.
  By 2-coequlizer, there is a unique $G(\widetilde{\alpha})' \in \SMC(\F(T_{\Sigma}), \C)(G(\widetilde{V})', G(\widetilde{W})')$.
  Then we define $R(\alpha)\defeq G(\widetilde{\alpha})'$.

  Next we define $\eta^{\cpt}_{\Sigma}$ by the following:
  \begin{align*}
      & \eta^{\cpt}_{\Sigma}\defeq \widehat{H(\pi)}
  \end{align*}
  Next we define a natural isormorhism $\alpha:\Id \Rightarrow R(\_)\circ \eta^{\cpt}_{\Sigma}$ by the following:
  \begin{align*}
      &\   R(V)\circ \eta^{\cpt}_{\Sigma}
      \\
      =&\ G(\widetilde{V})'\circ \widehat{H(\pi)}
      \tag{by definition}
      \\
      \simeq&\ \widehat{G(\widetilde{V})'\circ H(\pi)}
      \tag{Lem.~\ref{lem:commuteaction}}
      \\
      =&\ \widehat{H(G(\widetilde{V}))}
      \tag{by naturality of $H$ and definitnion of $\pi$}
      \\
      \simeq&\ V
      \tag{Lem.~\ref{lem:valuations} and Thm.~\ref{thm:stringdiagram}}
  \end{align*}

  Next we define a natural isomorphism $\beta: R(\CompCpt{-}{\eta^{\cpt}_\Sigma})\Rightarrow \id$ and we finish our proof:
  \begin{align*}
      &\   R(F\circ \eta^{\cpt}_{\Sigma})
      \\
      =&\  G(\widetilde{F\circ \widehat{H(\pi)}})'
      \tag{by definition}
      \\
      \simeq&\ G(F\circ \widetilde{\widehat{H(\pi)}})'
      \tag{Lem.~\ref{lem:commuteaction}}
      \\
      \simeq&\ G(F\circ H(\pi))'
      \tag{Lem.~\ref{lem:commuteh}}
      \\
      \simeq&\ (F\circ \pi)'
      \tag{by naturality of $H$ and Thm.~\ref{thm:stringdiagram}}
      \\
      =&\ F
      \tag{by definition of $\pi$}
  \end{align*}
\end{proof} 

\subsection{Relation to the Free Traced Symmetric Monoidal Category}
\label{subsec:app:freeTrace}

Here we show that we have an equivalence \(\F(\adduct{\Sigma}) \simeq \Int(\Ft(\Sigma'))\)
if \(\Sigma\) and \(\Sigma'\) have a certain relationship.

First we recall the result by Kissinger~\cite{kissinger2014abstract}.
Differently from the original definition,
in the next definition we restrict morphisms to isomorphisms, 
to adjust it to the compact closed setting (as in~\cite{hasegawa2010note}).

\begin{definition}
  \dtb{valuation into traced symmetric monoidal category $[\Sigma, \C]^{\tr}_{\iso}$}
  \label{def:valuationTrace}
  For a \(C\)-signature $\Sigma$ and a traced symmetric monoidal category $\C$,
  we define the category $[\Sigma, \C]^{\tr}_{\iso}$ as follows.
  An object $V$ is a pair $(V, (V_{w,u})_{w, u})$ such that
  \[
           V : C \to ob(\C)
  \qquad\qquad\qquad
           V_{w, u} :\Sigma(w, u)\rightarrow \C(\vext{V}(w), \vext{V}(u)) 
  \qquad (w, u\in C^{\ast})
  \]
      where 
  \(\vext{V}(c_1 \cdot\dots\cdot c_n) \defeq V(c_1)\otimes \dots \otimes V(c_n)\).
  %
  A morphism $\theta: V \rightarrow W$
  is a family \(\big(\theta_{c} : V(c)\rightarrow W(c)\big)_{c \in C}\)
  of isomorphisms in \(\C\)
      such that for all $f \in \Sigma(w,u)$,
      \begin{align*}
          & \vext{\theta}_{u} \circ V_{w,u}(f) 
          \ =\  W_{w,u}(f) \circ \vext{\theta}_{w}
      \end{align*}
  holds 
  where \(\vext{\theta}_{c_1 \cdot\dots\cdot c_n} \defeq \theta_{c_1} \otimes\dots\otimes \theta_{c_n}
  : \vext{V}(c_1 \cdot\dots\cdot c_n) \to \vext{W}(c_1 \cdot\dots\cdot c_n)\).
  \end{definition}
  
  We omit the definition of the traced version of the action
  \(\CompCpt[\C, \D, \Sigma]{-}{-}\) (which is defined similarly to the compact closed case),
  because it is not important in the proof of \(\F(\adduct{\Sigma}) \simeq \Int(\Ft(\Sigma'))\) given below.
  The next theorem is changed from the original one in~\cite{kissinger2014abstract}:
  morphisms of \(\Cptcc(\F(T_{\Sigma}), \C)\) and \([\Sigma, \C]^{\cpt}_{\iso}\) are restricted
  to isomorphisms.
  \begin{theorem}[free traced symmetric monoidal categories~\cite{kissinger2014abstract}]
  \label{thm:freeTrace}
      For a \(C\)-signature $\Sigma$, \(\Ft(\Sigma)\) is a strict traced symmetric monoidal category.
  Furthermore, \(\Ft(\Sigma)\) is a \emph{free} traced symmetric monoidal category, i.e.,
  there exists a valuation 
      $\eta^{\tr}_\Sigma\in ob([\Sigma, \F(T_{\Sigma})]^{\tr}_{\iso})$ such that,
   for any traced symmetric monoidal category $\C$,
      \begin{align*}
          &\CompCpt{-}{\eta^{\tr}_\Sigma}: \TrSMCg(\Ft(\Sigma), \C) \rightarrow [\Sigma, \C]^{\tr}_{\iso}
      \end{align*}
      yeilds an equivalence of categories.
      \qed
  \end{theorem}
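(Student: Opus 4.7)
The plan is to adapt Kissinger's original proof of freeness for traced symmetric monoidal categories~\cite{kissinger2014abstract} to the restricted setting where both sides consist only of isomorphisms. Kissinger constructs $\Ft(\Sigma)$ as equivalence classes of string diagrams (with trace loops) built from identities, swaps, and generators $f \in \Sigma(w,u)$, under sequential composition, monoidal product, and trace, quotiented by the TSMC axioms (\dref{def:TSMC}); this realizes $\Ft(\Sigma)$ as a strict TSMC immediately. The valuation $\eta^{\tr}_\Sigma$ sends each color $c$ to itself (as a length-one word viewed as an object) and each generator $f$ to the equivalence class of its one-node diagram.

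For essential surjectivity of $\Comp{-}{\eta^{\tr}_\Sigma}$, given $V \in [\Sigma, \C]^{\tr}_{\iso}$, I would define $F_V : \Ft(\Sigma) \to \C$ on objects by $F_V(c_1 \cdots c_n) \defeq \vext{V}(c_1 \cdots c_n)$, and on morphisms by structural recursion on diagrams, sending each generator $f$ to $V_{w,u}(f)$ and each diagram-level TSMC operation to its counterpart in $\C$. Well-definedness on the quotient by the TSMC axioms is immediate since $\C$ satisfies the same axioms, and one checks that $F_V$ is a strict traced symmetric strong monoidal functor with $\Comp{F_V}{\eta^{\tr}_\Sigma} = V$ on the nose. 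For faithfulness and fullness at the level of isomorphisms, I would observe that any monoidal natural isomorphism between traced symmetric strong monoidal functors out of $\Ft(\Sigma)$ is determined on objects by its components at the colors $c \in C$: given an iso $(\theta_c : F(c) \to G(c))_{c \in C}$ in $[\Sigma, \C]^{\tr}_{\iso}$, extend it by tensor to $\tilde{\theta}_{c_1 \cdots c_n} \defeq \theta_{c_1} \otimes \cdots \otimes \theta_{c_n}$, which is automatically an isomorphism, and verify naturality at identities, swaps, generators, and sequential/monoidal compositions by straightforward structural induction.

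The main obstacle is verifying naturality of $\tilde{\theta}$ at traced morphisms. For $\tr^{\C}_U(f) : X \to Y$ with $f : U \otimes X \to U \otimes Y$, the required equation $\tilde{\theta}_Y \circ F(\tr^{\C}_U(f)) = G(\tr^{\C}_U(f)) \circ \tilde{\theta}_X$ unfolds, via the traced-monoidal coherence of $F$ and $G$, into an equation between traces of the images of $f$; the induction hypothesis supplies naturality of $\tilde{\theta}_{U \otimes X} = \tilde{\theta}_U \otimes \tilde{\theta}_X$ at $f$, and then the dinaturality axiom of the trace in $\C$, combined with invertibility of $\tilde{\theta}_U$, is used to slide $\tilde{\theta}_U$ and $\tilde{\theta}_U^{-1}$ through the trace interface so that the two sides match. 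This ``interface swap'' is precisely where the restriction to isomorphisms becomes essential, since the dinaturality trick at the trace loop requires the component $\tilde{\theta}_U$ on the traced wire to be invertible. Once this naturality step is established, uniqueness of $\tilde{\theta}$ (and hence faithfulness) is immediate from the fact that $\Ft(\Sigma)$ is generated by the colors and generators.
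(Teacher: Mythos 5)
The paper gives no proof of this statement: it is imported from Kissinger~\cite{kissinger2014abstract}, with the remark immediately preceding the theorem noting only that the 2-cells on both sides have been restricted to isomorphisms relative to the original. Your reconstruction is sound and supplies exactly the argument that this citation-plus-remark leaves implicit. Two points of comparison are worth recording. First, Kissinger builds $\Ft(\Sigma)$ combinatorially (as equivalence classes of certain labelled graphs) rather than as syntactic terms modulo the axioms of \dref{def:TSMC}; your term-model route yields the same free object and makes well-definedness of $F_V$ on the quotient immediate, at the cost of having to argue separately that the quotient is non-degenerate if one ever needs faithfulness of the diagrammatic calculus (which this theorem does not). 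Second, and more importantly, you correctly isolate the one step where the restriction to isomorphisms is not cosmetic: extending a family $(\theta_c)_{c\in C}$ to a natural transformation requires checking naturality at $\tr^{\C}_U(f)$, and the only way to pass the induction hypothesis for $f$ through the trace is to insert $\tilde{\theta}_U^{-1}$ on the feedback wire and cancel it against $\tilde{\theta}_U$ via the dinaturality axiom; without invertibility of $\tilde{\theta}_U$ the square need not close. This is precisely the reason the paper works with $\TrSMCg$ (natural \emph{isomorphisms}) and $[\Sigma,\C]^{\tr}_{\iso}$ rather than arbitrary monoidal natural transformations, so your proof not only fills the gap but also explains the design of the statement.
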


  In order to show \(\F(\adduct{\Sigma}) \simeq \Int(\Ft(\Sigma'))\),
  we next give the relationship between the different notions of \(C\)-CCS \(\Sigma\) 
  and \(C\)-signature \(\Sigma'\).
  Specifically,
  given a \(C\)-CCS \(\Sigma \in \Set^{(C+C)^{\ast} \times (C+C)^{\ast}}\),
  we shall define
  the desired \(C\)-signature \(\intsig{\Sigma} \in \Set^{C^{\ast} \times C^{\ast}}\)
  to be the left Kan extension of \(\Sigma\) along 
  a function \(\ints_{C} : (C+C)^{\ast} \times (C+C)^{\ast} \to C^{\ast} \times C^{\ast}\).
  The function \(\ints_{C}\)---inspired by \(\Int\)---is defined as follows.

  First let \([f,g] : C+C \to C^{\ast}\) be given by \(f(c)=c\) and \(g(c)=\emptyword\),
  and then we have \([f,g]^\dagger : (C+C)^{\ast} \to C^{\ast}\) as the monoid homomorphism
  induced by \([f,g]\),
  and similarly, we have \([g,f]^\dagger : (C+C)^{\ast} \to C^{\ast}\).
  Then let \(\dvdc \defeq \langle [f,g]^\dagger , [g,f]^\dagger \rangle : 
  (C+C)^{\ast} \to C^{\ast} \times C^{\ast}\),
  which, intuitively, ``divides'' colors in \(w \in (C+C)^{\ast}\) into ones from the left \(C\)
  of \(C+C\)
  and ones from the right \(C\) of \(C+C\).
  Then we define \(\ints_{C} : (C+C)^{\ast} \times (C+C)^{\ast} \to C^{\ast} \times C^{\ast}\)
  by
  \begin{align*}
  &\ints_{C}(w,u) \defeq (w_{+} \cdot u_{-},\ u_{+} \cdot w_{-})
  \\
  &\text{where }
  \begin{aligned}[t]
  (w_{+},\, w_{-})&=\dvdc(w)\\
  (u_{+},\, u_{-})&=\dvdc(u).
  \end{aligned}
  \end{align*}
  
  Let \(L_{\ints_{C}}\) be the left adjoint to 
  \(\Set^{\ints_{C}} : \Set^{C^{\ast} \times C^{\ast}} \to \Set^{(C+C)^{\ast} \times (C+C)^{\ast}}\)
  (where the left adjoint can be obtained by the left Kan extension).
  Then we define \(\intsigp{\Sigma}{C} \defeq L_{\ints_{C}}(\Sigma)\).
  We may write \(\intsig{\Sigma}\) for \(\intsigp{\Sigma}{C}\).
  Concretely, $\intsig{\Sigma}(w, u) \defeq \rotatebox[origin=c]{180}{$\prod$}_{(w', u')\in U_{(w,u)}} \Sigma(w', u')$ where $U_{(w, u)} \defeq \{ (w', u')\in (C+C)^{\ast} \times (C+C)^{\ast}  : \ints_{C}(w',u') = (w, u)\}$.

  With the left adjointness of \(\Int\) (i.e.\ Thm.~\ref{thm:freeofint}) and of \(L_{\ints_{C}}\),
  we can show the following
  as a corollary to Thm.~\ref{thm:preciseFreeComp} (and Thm.~\ref{thm:freeTrace}).
  \begin{corollary}
      \label{cor:relationToKissinger}
  For a \(C\)-CCS \(\Sigma\),
  We have an equivalence \(\F(T_{\Sigma}) \simeq \Int(\Ft(\intsig{\Sigma}))\) in \(\Cptcc\).
      Hence the full subcategory of \(\F(T_{\Sigma})\) whose objects are those in 
  \((\inj_1(C))^{\ast}\ (\subseteq (C+C)^{\ast})\)
      is isomorphic to \(\Ft(\intsig{\Sigma})\) as traced symmetric monoidal categories.

Especially, specializing to the \(\{\dr\}\)-CCS \(\ccsp\) given in Def.~\ref{def:TrSigforParity},
we have an equivalence \(\copg \simeq \Int(\tcopg)\) in \(\Cptcc\),
and the full subcategory of \(\copg\) whose objects are those in \(\{\dr\}^{\ast}\ (\subseteq \{\dr,\dl\}^{\ast})\)
is isomorphic to \(\tcopg\) as traced symmetric monoidal categories.
\end{corollary}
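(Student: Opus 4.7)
The plan is to prove the equivalence by a 2-Yoneda argument: I will show that both $\F(T_{\Sigma})$ and $\Int(\Ft(\intsig{\Sigma}))$ represent the same 2-functor $\Cptcc \to \Cat$ up to equivalence, from which the equivalence in $\Cptcc$ follows. The concrete target is to prove that for every compact closed category $\C$, there is an equivalence of categories
\[
\Cptcc\bigl(\F(T_{\Sigma}),\,\C\bigr) \;\simeq\; \Cptcc\bigl(\Int(\Ft(\intsig{\Sigma})),\,\C\bigr),
\]
natural (in the pseudo-sense) in $\C$.

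For the left-hand side, Thm.~\ref{thm:preciseFreeComp} gives directly $\Cptcc(\F(T_\Sigma),\C) \simeq [\Sigma, \C]^{\cpt}_{\iso}$ via the action on the universal valuation $\eta^{\cpt}_\Sigma$. For the right-hand side, I chain two equivalences: first, the biadjunction $\Int \dashv (\TrSMCg \hookrightarrow \Cptcc)$ of Thm.~\ref{thm:freeofint} yields $\Cptcc(\Int(\Ft(\intsig{\Sigma})),\C) \simeq \TrSMCg(\Ft(\intsig{\Sigma}),\C)$; second, Thm.~\ref{thm:freeTrace} gives $\TrSMCg(\Ft(\intsig{\Sigma}),\C) \simeq [\intsig{\Sigma}, \C]^{\tr}_{\iso}$. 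So it suffices to construct a natural equivalence of categories
\[
[\Sigma, \C]^{\cpt}_{\iso} \;\simeq\; [\intsig{\Sigma}, \C]^{\tr}_{\iso}
\qquad (\C \in \Cptcc).
\]

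The key technical step is the construction of this equivalence. On the object data, both kinds of valuations have an object-assignment $V : C \to \ob(\C)$, so no conversion is needed there. On the morphism data, a compact closed valuation assigns, for each $f \in \Sigma(w', u')$ with $w', u' \in (C+C)^{\ast}$, a morphism $\vext{V}(w') \to \vext{V}(u')$, while a traced valuation of $\intsig{\Sigma}$ assigns to each $g \in \intsig{\Sigma}(w, u) = \coprod_{\ints_{C}(w',u')=(w,u)} \Sigma(w',u')$ a morphism $V(w) \to V(u)$ in $\C$ (viewing $V$ Kleene-extended to $C^{\ast}$). The bridge is the canonical hom-set isomorphism
\[
\C\bigl(\vext{V}(w'),\,\vext{V}(u')\bigr) \;\cong\; \C\bigl(V(w),\,V(u)\bigr)
\qquad \text{where } (w,u) = \ints_{C}(w', u'),
\]
available in any compact closed category by bending the legs of $\vext{V}(w')$ and $\vext{V}(u')$ that carry duals using units and counits (this is precisely the Int-construction correspondence between $\C$-morphisms and $\Int(\C)$-morphisms). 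Coupled with the universal property of the left Kan extension $\intsig{\Sigma} = L_{\ints_{C}}(\Sigma)$, assigning morphisms on each $\intsig{\Sigma}(w,u)$ is the same as assigning morphisms on each $\Sigma(w',u')$ with $\ints_{C}(w',u') = (w,u)$. Verifying that this correspondence is functorial on valuation morphisms and natural in $\C$ with respect to action by compact closed functors is routine by the naturality of the bending operations.

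The main obstacle will be carefully checking that this bijection intertwines the categorical structure: namely that a family $\bigl(V_{w',u'}\bigr)_{w',u'}$ arising from a compact closed valuation yields data compatible with the symmetric monoidal (not compact closed) structure of $\intsig{\Sigma}$-valuations, and conversely that a traced valuation canonically extends. This amounts to a coherence argument checking that the Int-style bending is natural; once it is done, applying 2-Yoneda in $\Cptcc$ gives the equivalence $\F(T_{\Sigma}) \simeq \Int(\Ft(\intsig{\Sigma}))$. For the second statement, I observe that the unit $\CanFunc{\Ft(\intsig{\Sigma})}$ of the biadjunction embeds $\Ft(\intsig{\Sigma})$ fully and faithfully as a traced symmetric monoidal category into $\Int(\Ft(\intsig{\Sigma}))$ on objects of the form $(X, I)$; transferring along the established equivalence, these correspond precisely to objects of $\F(T_{\Sigma})$ living in $(\inj_{1}(C))^{\ast} \subseteq (C+C)^{\ast}$, giving the claimed iso of traced symmetric monoidal categories. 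The specialization to $\copg \simeq \Int(\tcopg)$ is then immediate by applying the general result to $\Sigma = \ccsp$.
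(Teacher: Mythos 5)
Your proposal is correct and follows essentially the same route as the paper: reduce both sides via the freeness results (Thm.~\ref{thm:preciseFreeComp}, Thm.~\ref{thm:freeofint}, Thm.~\ref{thm:freeTrace}) to a comparison of valuation categories, use the adjunction $L_{\ints_{C}} \dashv \Set^{\ints_{C}}$ to translate signatures, establish the hom-set isomorphism $\C(\vext{V}(w'),\vext{V}(u')) \cong \C(\vext{V}(w),\vext{V}(u))$ by the internal adjointness of duals (your ``bending''), and conclude by a 2-Yoneda argument. The only cosmetic difference is that the paper packages the naturality bookkeeping through a category-of-elements reformulation of valuations (Lem.~\ref{lem:valuationReform} and Lem.~\ref{lem:catElemIso}) rather than verifying it directly as you propose.
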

Thus, \(\Int(\Ft(\intsig{\Sigma}))\) induces an alternative (up-to isomorphic) construction of \(\F(T_{\Sigma})\),
by using the strictification of compact closed categories~\cite{KELLY1980193}
(and then by adjusting objects to ones in \((C+C)^{\ast}\)).
Conversely, as in the corollary,
the full subcategory of \(\F(T_{\Sigma})\) gives an alternative (up-to isomorphic) construction of
\(\Ft(\intsig{\Sigma})\).

To show the above corollary easily, we prepare some definitions.
In the next definition, when \(V \in [C, \C]\) is given,
\(\fgtm{\C}(V,V)\) gives the underlying \(C\)-signature of 
a monoidal category \(\C\).
Below, we use the notations \(\vext{V}\) and \(\vext{\theta}\) given in Def.~\ref{def:valuationTrace}
\begin{definition} 
Given a monoidal category \(\C\), we define
\(\fgtm{\C} : [C, \C]_{g}^{\op} \times [C, \C]_{g} \to \Set^{C^{\ast} \times C^{\ast}}\) as follows:
for \((V',V) \in [C, \C]_{g}^{\op} \times [C, \C]_{g}\) and \((w,u) \in C^{\ast} \times C^{\ast}\),
\[
\fgtm{\C}(V',V)(w,u) \defeq
\C(\vext{V'}(w), \vext{V}(u)).
\]
For \(\theta' : W' \to V'\) and \(\theta : V \to W\) in \([C, \C]_{g}\)
and \((w,u) \in C^{\ast} \times C^{\ast}\), we define
\[
\fgtm{\C}(\theta',\theta)(w,u) \defeq \C(\vext{\theta'}_{w}, \vext{\theta}_{u})
: \C(\vext{V'}(w), \vext{V}(u)) \to \C(\vext{W'}(w), \vext{W}(u)).
\]
\end{definition}
We also have the compact closed version of the above notion;
below we use the notations \(\vext{V}\) and \(\vext{\theta}\) given in Def.~\ref{def:valuationCpt}.
\begin{definition} 
Given a compact closed category \(\C\), we define
\(\fgtc{\C} : [C, \C]_{g}^{\op} \times [C, \C]_{g} \to \Set^{(C+C)^{\ast} \times (C+C)^{\ast}}\) as follows:
for \((V',V) \in [C, \C]_{g}^{\op} \times [C, \C]_{g}\) and \((w,u) \in (C+C)^{\ast} \times (C+C)^{\ast}\),
\[
\fgtc{\C}(V',V)(w,u) \defeq
\C(\vext{V'}(w), \vext{V}(u)).
\]
For \(\theta' : W' \to V'\) and \(\theta : V \to W\) in \([C, \C]_{g}\)
and \((w,u) \in (C+C)^{\ast} \times (C+C)^{\ast}\), we define
\[
\fgtc{\C}(\theta',\theta)(w,u) \defeq \C(\vext{\theta'}_{w}, \vext{\theta}_{u})
: \C(\vext{V'}(w), \vext{V}(u)) \to \C(\vext{W'}(w), \vext{W}(u)).
\]
\end{definition}

The following is a variant of the notion of the category of elements:
\begin{definition}[category of elements of mixed-variant functor]
Given a category \(\C\) and a functor \(F: \C^{\op} \times \C \to \Set\),
we define the category \(\elm^{C \in \C}F(C,C)\) as follows:
An object is a pair of \(X \in \C\) and \(x \in F(C,C)\).
A morphism \(f: (X,x) \to (Y,y)\) is a morphism \(f: X \to Y\) in \(\C\)
such that \(F(X,f)(x) = F(f,Y)(y)\).
The identity and the composition is given by those of \(\C\).
\end{definition}
The construction \(F \mapsto \elm^{C \in \C}F(C,C)\) extends to a functor
from \([\C^{\op} \times \C, \Set]\) to \(\Cat\). Therefore:
\begin{lemma}
\label{lem:catElemIso}
If \(F\) and \(G\) are isomorphic, then so are 
\(\elm^{C \in \C}F(C,C)\) and \(\elm^{C \in \C}G(C,C)\).
\qed
\end{lemma}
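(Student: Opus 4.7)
The plan is to show that the construction $F \mapsto \elm^{C\in \C}F(C,C)$ extends to a functor $\elm : [\C^{\op}\times\C, \Set] \to \Cat$, so that every natural isomorphism $\alpha : F \Rightarrow G$ is sent to an isomorphism of categories; this immediately yields the claim. The sentence preceding the lemma already asserts this functoriality, so the lemma is essentially a restatement, but let me spell out how the functor acts.

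Given $\alpha : F \Rightarrow G$ in $[\C^{\op}\times\C, \Set]$, I would define $\elm(\alpha) : \elm^{C\in\C}F(C,C) \to \elm^{C\in\C}G(C,C)$ on objects by $(X, x) \mapsto (X, \alpha_{(X,X)}(x))$, and on morphisms $f : (X, x) \to (Y, y)$ by the same underlying morphism $f : X \to Y$ of $\C$. The only nontrivial verification is that $f$ is still a morphism in the target category of elements, i.e.\ that $G(X,f)(\alpha_{(X,X)}(x)) = G(f,Y)(\alpha_{(Y,Y)}(y))$. This follows from two applications of naturality of $\alpha$ with respect to the morphisms $(\id_X,f) : (X,X) \to (X,Y)$ and $(f,\id_Y) : (Y,Y) \to (X,Y)$ in $\C^{\op}\times\C$: we have $G(X,f)\circ\alpha_{(X,X)} = \alpha_{(X,Y)}\circ F(X,f)$ and $G(f,Y)\circ\alpha_{(Y,Y)} = \alpha_{(X,Y)}\circ F(f,Y)$, so applying $\alpha_{(X,Y)}$ to the assumed equality $F(X,f)(x) = F(f,Y)(y)$ gives exactly what is needed.

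Functoriality of $\elm$ in $\alpha$ is then immediate: the identity natural transformation is sent to the identity functor, and the composite of two natural transformations is sent to the composite of the induced functors, because everything on underlying morphisms is just identity-on-morphisms and the object components compose via function composition of the $\alpha$'s. Finally, if $\alpha : F \Rightarrow G$ is a natural isomorphism with inverse $\alpha^{-1}$, then $\elm(\alpha)$ and $\elm(\alpha^{-1})$ are mutually inverse: on objects this uses that $\alpha^{-1}_{(X,X)}\circ\alpha_{(X,X)} = \id_{F(X,X)}$ and vice versa, while on morphisms both composites act as the identity. Hence $\elm^{C\in\C}F(C,C) \cong \elm^{C\in\C}G(C,C)$ in $\Cat$, as claimed.

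There is no real obstacle here; the only point requiring attention is getting the variances right when invoking naturality of $\alpha$ at the two mixed-variant morphisms $(\id_X,f)$ and $(f,\id_Y)$, which is the source of the defining equation for morphisms in $\elm^{C\in\C}F(C,C)$ in the first place.
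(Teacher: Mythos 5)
Your proposal is correct and follows exactly the route the paper takes: the paper simply asserts that \(F \mapsto \elm^{C \in \C}F(C,C)\) extends to a functor \([\C^{\op}\times\C,\Set] \to \Cat\) and concludes, while you supply the details of that functoriality (in particular the two naturality squares at \((\id_X,f)\) and \((f,\id_Y)\), which are indeed the only nontrivial check). Nothing is missing.
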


Next we reformulate the notion of valuation.
\begin{lemma}
\label{lem:valuationReform}
For a \(C\)-signature $\Sigma$ and a traced symmetric monoidal category $\C$,
we have an isomorphism 
\[
[\Sigma, \C]^{\tr}_{\iso}
\quad\cong\quad
\elm^{V \in [C,\C]_{g}} \Set^{C^{\ast} \times C^{\ast}}(\Sigma, \fgtm{\C}(V,V)).
\]
Also, for a \(C\)-CCS $\Sigma$ and a compact closed category $\C$,
we have an isomorphism 
\[
[\Sigma, \C]^{\cpt}_{\iso}
\quad\cong\quad
\elm^{V \in [C,\C]_{g}} \Set^{(C+C)^{\ast} \times (C+C)^{\ast}}(\Sigma, \fgtc{\C}(V,V)).
\]
\qed
\end{lemma}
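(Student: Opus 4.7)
The proof is essentially a bookkeeping argument that both sides encode the same data, so my plan is to construct the isomorphism explicitly and then verify that everything matches on both objects and morphisms. I will present only the traced (first) case in detail, since the compact closed case follows by the same pattern with $C^{\ast}$ replaced by $(C+C)^{\ast}$ and with $\vext{V}$, $\vext{\theta}$ interpreted via the compact closed definitions in Def.~\ref{def:valuationCpt}.

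The plan on objects is as follows. Viewing $C$ as a discrete category, an object of $[C,\C]_g$ is just a function $V\colon C\to\ob(\C)$; and since $C^{\ast}\times C^{\ast}$ is also discrete, a natural transformation $\phi\colon\Sigma\Rightarrow\fgtm{\C}(V,V)$ in $\Set^{C^{\ast}\times C^{\ast}}$ is merely a family of functions $\phi_{w,u}\colon\Sigma(w,u)\to\C(\vext{V}(w),\vext{V}(u))$ with no further naturality condition. This matches the data of a valuation $(V,(V_{w,u})_{w,u})\in[\Sigma,\C]^{\tr}_{\iso}$ via the assignment $V_{w,u}\mapsto\phi_{w,u}$. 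The inverse assignment is the obvious one. This establishes a bijection on objects.

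The plan on morphisms is to unfold the elements-category condition. A morphism $\theta\colon(V,\phi)\to(W,\psi)$ in the elements category is a morphism $\theta\colon V\to W$ in $[C,\C]_g$, i.e.\ a family of isomorphisms $(\theta_c\colon V(c)\to W(c))_{c\in C}$, subject to the condition $\fgtm{\C}(V,\theta)\circ\phi=\fgtm{\C}(\theta,W)\circ\psi$ as natural transformations $\Sigma\Rightarrow\fgtm{\C}(V,W)$. Evaluating this at a point $(w,u)\in C^{\ast}\times C^{\ast}$ and an element $f\in\Sigma(w,u)$, the left side becomes $\vext{\theta}_u\circ\phi_{w,u}(f)$ and the right side becomes $\psi_{w,u}(f)\circ\vext{\theta}_w$. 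This is exactly the defining equation
\begin{equation*}
\vext{\theta}_u\circ V_{w,u}(f)\;=\;W_{w,u}(f)\circ\vext{\theta}_w
\end{equation*}
of a morphism in $[\Sigma,\C]^{\tr}_{\iso}$, under the identification of objects made above. So morphisms correspond bijectively.

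Finally, I will observe that identities and composition are preserved: the identity on $(V,\phi)$ is the pointwise identity $(\id_{V(c)})_c$, which matches the identity valuation morphism; and composition in the elements category is composition of the underlying morphisms in $[C,\C]_g$, which matches pointwise composition of valuation morphisms. This gives the claimed isomorphism of categories. The compact closed version is obtained by replacing $\vext{V}$ and $\vext{\theta}$ throughout with their compact closed counterparts (Def.~\ref{def:valuationCpt}) and indexing by $(C+C)^{\ast}\times(C+C)^{\ast}$; the verification is verbatim. There is no real obstacle—the main care is simply to keep track of the variance of $\fgtm{\C}$ and $\fgtc{\C}$ in their two arguments so that the condition in the elements category matches the valuation-morphism equation on the nose.
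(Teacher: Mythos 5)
Your proof is correct and is exactly the definition-unfolding argument the paper intends (the paper omits the proof entirely, marking the lemma with \qed as immediate from the definitions). In particular, your identification of the elements-category condition $\fgtm{\C}(V,\theta)\circ\phi=\fgtm{\C}(\theta,W)\circ\psi$ with the valuation-morphism equation $\vext{\theta}_u\circ V_{w,u}(f)=W_{w,u}(f)\circ\vext{\theta}_w$, using discreteness of $C$ and of $C^{\ast}\times C^{\ast}$, is precisely the check that makes the statement hold on the nose.
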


Now we show Cor.~\ref{cor:relationToKissinger}:
\begin{proof}
We only show the equivalence \(\F(T_{\Sigma}) \simeq \Int(\Ft(\intsig{\Sigma}))\) in \(\Cptcc\),
from which the remaining parts follow easily.

We have
\begin{align*}
&\  \Cptcc(\Int(\Ft(\intsig{\Sigma})), \C)
\\
\simeq&\ \TrSMCg(\Ft(\intsig{\Sigma}), \C)
\tag{Thm.~\ref{thm:freeofint}: \(\Int\) is a left biadjoint}
\\
\simeq&\ [\intsig{\Sigma}, \C]^{\tr}_{\iso}
\tag{Thm.~\ref{thm:freeTrace}: \(\Ft\) is the free TSMC construction}
\\
\simeq&\ \elm^{V \in [C,\C]_{g}} \Set^{C^{\ast} \times C^{\ast}}(\intsig{\Sigma}, \fgtm{\C}(V,V)).
\tag{Lem.~\ref{lem:valuationReform}: reformulation of valuations}
\\
\simeq&\ \elm^{V \in [C,\C]_{g}} \Set^{(C+C)^{\ast} \times (C+C)^{\ast}}(\Sigma, \Set^{\ints_{C}}(\fgtm{\C}(V,V))).
\tag{\(\intsig{\Sigma} = L_{\ints_{C}}(\Sigma)\) and
\(L_{\ints_{C}} \dashv \Set^{\ints_{C}}\)}
\end{align*}
and also we have
\begin{align*}
&\ \Cptcc(\F(T_{\Sigma}), \C)
\\
\simeq&\ [\Sigma, \C]^{\cpt}_{\iso}
\tag{Thm.~\ref{thm:preciseFreeComp}: \(\F(T_{(-)})\) is the free CpCC construction}
\\
\simeq&\ \elm^{V \in [C,\C]_{g}} \Set^{(C+C)^{\ast} \times (C+C)^{\ast}}(\Sigma, \fgtc{\C}(V,V)).
\tag{Lem.~\ref{lem:valuationReform}: reformulation of valuations}
\end{align*}
Thus, by Lem.~\ref{lem:catElemIso}, we only need to show the following isomorphism
\begin{align*}
\Set^{\ints_{C}}(\fgtm{\C}(V,V))
\ \ &\cong\ \ 
\fgtc{\C}(V,V),
&&\text{i.e.,} \\
\C(\vext{V}(w_{+}\cdot u_{-}), \vext{V}(u_{+}\cdot w_{-}))
\ \ &\cong\ \ 
\C(\vext{V}(w), \vext{V}(u))
&&\text{where }(w_{+}\cdot u_{-}, u_{+}\cdot w_{-}) = \ints_{C}(w,u).
\end{align*}
This can be shown by using the internal adjointness of duals \(\dual{X}\) in \(\C\)
(recall that the dual \(\dual{X}\) can be seen as an internal left adjoint to \(X\) in \(\C\),
where a strict monoidal category \(\C\) can be seen as the single-object 2-category~\cite{KELLY1980193}).
\end{proof}

We also have 
\(\inls \defeq (\inj_{1})^{\ast} \times (\inj_{1})^{\ast} :
C^{\ast} \times C^{\ast} \to (C+C)^{\ast} \times (C+C)^{\ast}\) in the converse direction of \(\ints_{C}\),
and have the left adjoint \(L_{\inls}\) to 
\(\Set^{\inls} : \Set^{(C+C)^{\ast} \times (C+C)^{\ast}} \to \Set^{C^{\ast} \times C^{\ast}}\).
This gives a similar result to Cor.~\ref{cor:relationToKissinger} for a given 
\(C\)-signature \(\Sigma\) and \(L_{\inls}(\Sigma)\) 
(rather than for a given \(C\)-CCS \(\Sigma\) and \(L_{\ints_{C}}(\Sigma)\)).
Especially, given a \(C\)-signature \(\Sigma\),
the full subcategory of \(\F(T_{L_{\inls}(\Sigma)})\)
whose objects are those in 
\((\inj_1(C))^{\ast}\ (\subseteq (C+C)^{\ast})\)
gives an alternative (up-to isomorphic) construction of
\(\Ft(\Sigma)\).

\subsection{Fullness of the Functor \(\cto\)}
\label{subsec:app:fullness}

We prove Thm.~\ref{thm:fullness}, which says that
the functor $\cto:\copg\rightarrow\opg$ is full.
For this, it is sufficient to show that $\tcto:\tcopg\rightarrow\topg$ is full,
because, if a traced symmetric strong monoidal functor \(F\) is full, so is \(\Int(F)\).

Let \(\g{A}\) be a rightward open parity game from \(m\) to \(n\).
Let \(k\) be the size of \(E\).
We define a graphical game \(f \in \tcopg(\dr^{k+m},\dr^{k+n})\) so that
\(\tcto(\trcopg{k}{m}{n}(f)) \sim \g{A}\).
We first take enumerations of positions \(\nset{m}+\nset{n}+Q=\{s_1,s_2,\dots\}\)
and of edges \(E = \{e_1,e_2,\dots\}\).
Then for each position \(s_i\), we prepare a fresh node \(\posn{s_i}\) (of the form \(\nd[w,u]{r}{p}\)) in the graphical language
if \(s_i\) is an internal position,
and prepare an identity game \(\posn{s_i}\) on \(\dr\) if \(s_i\) is an open end.
Then we vertically compose all the games \(\posn{s_i}\).
For each internal position \(s_i\),
the role \(r\) and priority \(p\) of \(\posn{s_i} = \nd[w,u]{r}{p}\) are 
obviously inherited from those of \(s_i\) in \(\g{A}\),
and the domain \(w\) and codomain \(u\) will be determined below.

For each \(\ell = 1,\dots,k\),
if \(e_\ell = (s_{i_\ell},s_{j_\ell})\), then we add an edge from a fresh exit position of \(\posn{s_{i_\ell}}\)
to the exit position \(k\) of \(f\), 
and similarly, we add an edge to a fresh entry position of \(\posn{s_{j_\ell}}\) from the entry position \(\ell\) of \(f\).
Also, for each \(a \in \nset{m}\), we add an edge from the entry position \(k+a\) of \(f\) 
to the unique entry position of the identity game \(\posn{a}\) prepared above,
and similarly,
for each \(b \in \nset{n}\), we add an edge to the exit position \(k+b\) of \(f\) from the unique exit
position of the identity game \(\posn{b}\).
Above, connecting edge can be done by using suitable swap games.
Then we can show that \(\tcto(\trcopg{k}{m}{n}(f)) \sim \g{A}\),
where an isomorphism for the equialence relation \(\sim\) is given obviously from the above construction.

\asd{I replaced the following proof with the above (indirect) proof, because
the above one proves also the traced version.

\asd{Make another proof command so that (of ...) works correctly (on the position of period).}
\asd{Below, we use the notions of domain/codomain positions,
but we currently do not define this; define them or unfold the definitions.}
\begin{proof}\dt[of Thm.~\ref{thm:fullness}]
Let $\mathcal{A}$ be an open parity game.
We will construct a morpshim of $\copg$ from $id_{s}$ where $s\in \{\dr, \dl\}^{\ast}$ such that $\cto(s) =\ol{\vect{x}{m}}^{\mathcal{A}}$. 
First, we collect every internal position which is connected to some domain positions. 
Then, we connect these positions into $id_{s}$. If a position in these positions is connected to itself, then we connect it to a counit.
If a position in these positions is connected to another internal position which is already used, then we just connect each other.
If a position in these positions is connected to another internal position which is not used yet, then the edge remains open.
If we finish this operation, then we conduct this operation with every internal positions again. 
After that, we use symmetries for connecting internal positions to codomain positions properly.
The morphisms which can be built by this algorithm depends on the order of choice of internal positions and 
the way of composition, but every morphism $f$ which can be constructed by this algorighm satisfies $\cto(f) = [\mathcal{A}]$ with the equivalecen relation $\sim$.
\end{proof}
}

\section{Fixpoint Operator on \(\fscottl_!\)}
\label{sec:FixGM}

Here we give the definition of the fixpoint operator
introduced in~\cite{grellois2016semantics,grellois2015finitary},
with our notation.

\begin{definition}\dt[semantic run-tree for \(\fix^{\GM}\)\cite{grellois2016semantics,grellois2015finitary}]
Let $R\in\fscottl_!(A+B, B)$ and $b\in |B|$;
then especially, \(R \subseteq P(\Natm{M} \times (|A|+|B|)) \times |B|\),
and, intuitively, think of \(b\) as an element of the codomain of \(R\), rather than the domain.
A \emph{semantic run-tree} \(\psi\) (for fixpoint operator) is
a possibly infinite ($\Natm{M}\times (|A|+|B|)$)-labeled tree \(\psi\) that satisfies the following conditions:
  \begin{enumerate}
   \item The label of the root of \(\psi\) is $(0, b)$.
   \item Any non-leaf node of \(\psi\) has the label in $\Natm{M}\times |B|$.
   \item For any non-leaf node of \(\psi\) with label $(p, b')\in \Natm{M}\times |B|$,
	 let
   $X 
	 \subseteq \Natm{M}\times (|A|+|B|)$
be the set of the labels of all the children of the node. Then $(X, b')\in R$.
   \item For any leaf node of \(\psi\) such that its label belongs to $\Natm{M}\times |B|$
(rather than \(\Natm{M}\times |A|\)) and is \((p, b')\), we have $(\emptyset, b')\in R$.
  \end{enumerate}
We write $\comp{A}{B}{R}{b}$ for the set of semantic run-trees for fixpoint operator with respect to \(A,B,R\) and \(b\).
For a semantic run-tree $\psi\in \comp{A}{B}{R}{b}$, we define
$\leavesf{\psi} \in |\SPfin(\colmm(A))| = P(\Natm{M} \times |A|)$ as the set of elements 
  $(p, a) \in \Natm{M} \times |A|$ such that
there exists a leaf \(\ell\) of \(\psi\) such that:
(i) the label of leaf \(\ell\) is $(p', a)$ for some \(p' \in \Natm{M}\) and 
(ii) $p$ is the maximal priority encountered on the path from the leaf \(\ell\) to the root of $\psi$.
\end{definition}

\begin{definition}\dt[fixpoint operator \(\fix^{\GM}\)~\cite{grellois2016semantics,grellois2015finitary}]
For every \(A,B \in \fscottl_!\), we define a fixpoint operator 
$\fix^{\GM}_{A, B}:\fscottl_!(A+B, B)\rightarrow\fscottl_!(A,B)$ as follows:
\[
    \fix^{\GM}_{A, B}(R) \defeq 
\{(\leavesf{\psi}, b)\mid  \psi \in\comp{A}{B}{R}{b} \text{ and every infinite path on }\psi\text{ meets the
   parity condition}\}
\]
where an infinite sequence $((p_i, b_i))_{i\in \Nat}$ of labels meets the \emph{parity condition}
if the maximal priority that inifinitely occurs in the sequence is even
    (i.e., \(\max \{q \mid \#\{i \mid p_i = q\} = \infty\}\) is even).
\end{definition}

\section{A Detailed Example}
\label{sec:fullextExample}
Here we demonstrate the use of our categorical theory by exhibiting
the compositional solution of an open parity game.
Such compositionality is enabled by the fact that the functors $\cto$,
$\cts{-}$, and $\ots$ preserve suitable structures.

Consider the open parity game $\mathcal{A}$ shown in
Fig.~\ref{fig:fullExOpenParityGame}, where the open ends are
labelled using prop-style ordering.
In reality, the numbering differs in $\sopg$ (because the Int
construction numbers ends differently from props), but we make sure to
refer to ends by their labels in
Fig.~\ref{fig:fullextendedExample} throughout the explanation.

\begin{figure}
\newcommand{\lsp}{10pt}
    \begin{center}
      \begin{minipage}[b]{0.22\hsize}
        \centering
        \begin{tikzpicture}[
            innode/.style={draw, rectangle, minimum size=0.5cm},
            interface/.style={inner sep=0}
            ]
            \node[interface] (rdo1) at (-1.5cm, -2cm) {$1$};
            \node[innode] (in3) at (-0.5cm, -1cm) {$a$};
            \node[anchor=south] (in3lab) at (in3.north) {$\eve, 1$};
            \node[innode] (in4) at (1cm, -1cm) {$b$};
            \node[anchor=south] (in4lab) at (in4.north) {$\adam, 1$};
            \node[innode] (in5) at (1cm, -2cm) {$c$};
            \node[anchor=south] (in4lab) at (in5.north) {$\eve, 2$};
            \draw[->] (rdo1) to (in3);
            \draw[->] (in3) to (in4);
            \draw[->] (in5) to (in3);
            \draw[<-] (1.3cm, -2cm) arc [radius=0.5, start angle = 270, end angle=450];
            \draw[->] (-0.2cm, -0.8cm) to (0.2cm, -0.8cm);
            \draw[->] (0.1cm, -0.8cm) arc [radius=0.3, start angle = 270, end angle=450];
            \draw[<-] (-1.2cm, -0.2cm) to (0.2cm, -0.2cm);
            \draw[->] (-1.2cm, -0.2cm) arc [radius=0.3, start angle = 90, end angle=270];
            \draw[->] (-1.2cm, -0.8cm) to (-0.8cm, -0.8cm);
        \end{tikzpicture}
        \subcaption{Open parity game $\mathcal{A}$.}
        \label{fig:fullExOpenParityGame}
      \end{minipage}
\hspace{\lsp}
      \begin{minipage}[b]{0.22\hsize}
        \centering
      \begin{tikzpicture}[
        innode/.style={draw, rectangle, minimum size=0.5cm},
        interface/.style={inner sep=0}
        ]
        \node[interface] (rdo1) at (-1.5cm, -2cm) {$1$};
        \node[interface] (lcdo1) at (-0.5cm, -0.2cm) {$1'$};
        \node[interface] (rcdo2) at (-0.5cm, -0.8cm) {$2'$};
        \node[interface] (rcdo3) at (-0.5cm, -2cm) {$3'$};
        \draw[->] (rdo1) to (rcdo3);
        \draw[->] (-0.8cm, -0.2cm) arc [radius=0.3, start angle = 90, end angle=270];
    \end{tikzpicture}
    \subcaption{Open parity game $\subopg{\mathcal{A}}_{1}$.}
    \label{fig:fullExSubOpenParityGame1}
  \end{minipage}
\hspace{\lsp}
  \begin{minipage}[b]{0.22\hsize}
    \centering
      \begin{tikzpicture}[
        innode/.style={draw, rectangle, minimum size=0.5cm},
        interface/.style={inner sep=0}
        ]
            \node[interface] (ldo1) at (-1.5cm, -0.2cm) {$1$};
            \node[interface] (rdo2) at (-1.5cm, -0.8cm) {$2$};
            \node[interface] (rdo3) at (-1.5cm, -2cm) {$3$};
            \node[innode] (in3) at (-0.5cm, -1cm) {$a$};
            \node[anchor=south] (in3lab) at (in3.north) {$\eve, 1$};
            \node[interface] (rcdo1) at (1cm, -1cm) {$1'$};
            \node[interface] (lcdo2) at (1cm, -2cm) {$2'$};
            \draw[->] (rdo3) to (in3);
            \draw[->] (in3) to (rcdo1);
            \draw[->] (lcdo2) to (in3);
            \draw[->] (-0.2cm, -0.8cm) to (0.2cm, -0.8cm);
            \draw[->] (0.1cm, -0.8cm) arc [radius=0.3, start angle = 270, end angle=450];
            \draw[<-] (ldo1) to (0.2cm, -0.2cm);
            \draw[->] (rdo2) to (-0.8cm, -0.8cm);
    \end{tikzpicture}
    \subcaption{Open parity game $\subopg{\mathcal{A}}_{2}$.}
    \label{fig:fullExSubOpenParityGame2}
  \end{minipage}
\hspace{\lsp}
  \begin{minipage}[b]{0.22\hsize}
    \centering
      \begin{tikzpicture}[
        innode/.style={draw, rectangle, minimum size=0.5cm},
        interface/.style={inner sep=0}
        ]
        \node[interface] (rdo1) at (-1.5cm, 0cm) {$1$};
        \node[interface] (ldo2) at (-1.5cm, -1.5cm) {$2$};
        \node[innode] (in3) at (-0.5cm, 0cm) {$b$};
        \node[anchor=south] (in3lab) at (in3.north) {$\adam, 1$};
        \node[innode] (in4) at (-0.5cm, -1.5cm) {$c$};
        \node[anchor=south] (in4lab) at (in4.north) {$\eve, 2$};
        \draw[->] (rdo1) to (in3);
        \draw[->] (in4) to (ldo2);
        \draw[->] (in3) to (0.3cm, 0cm);
        \draw[->] (0.3cm, -1.5cm) to (in4);
        \draw[<-] (0.3cm, -1.5cm) arc [radius=0.75, start angle = 270, end angle=450];
    \end{tikzpicture}
    \subcaption{Open parity game $\subopg{\mathcal{A}}_{3}$.}
    \label{fig:fullExSubOpenParityGame3}
  \end{minipage}
    \end{center}
    \caption{A detailed example.}
    \label{fig:fullextendedExample}
  \end{figure}

Our goal is to check whether the entry position $1$ is a winning position or not in $\mathcal{A}$. 
We do so compositionally, i.e.\ by solving $\mathcal{A}$'s subgames and propagating those solutions. 
Concretely, $\mathcal{A}$ is divided into $\subopg{\mathcal{A}}_{1}$, $\subopg{\mathcal{A}}_{2}$, and $\subopg{\mathcal{A}}_{3}$ shown in Fig.~\ref{fig:fullExSubOpenParityGame1}, Fig.~\ref{fig:fullExSubOpenParityGame2}, and Fig.~\ref{fig:fullExSubOpenParityGame3}, respectively.

Let us start with $\subopg{\g{A}}_1$, which can be decomposed as $\dunit_\dl
\parallel \id_\dr$, whence $\cts{\subopg{\g{A}}_1} = \dunit_{\cts{\dl}}
\parallel \id_{\cts{\dr}}$ by compact closedness of $\cts{-}$.
When translated back to the level of $\tsopg$, the unit $\dunit_\dl$
in $\sopg$ is a morphism from $\set{\rightpos{1}}$ to
$\set{\rightpos{2}}$, defined as the identity (up to isomorphism here,
since we changed the names to reflect those in
Fig.~\ref{fig:fullextendedExample}).
By definition of Kleisli categories, this identity is
$\epsilon^{!_M}_{\nset{1}} = \setcomp{(T,\rightpos{1})}{(0,
\rightpos{2}) \in T}$.
Similarly $\id_{\cts{\dr}} = \setcomp{(T,\leftpos{1})}{(0,
\rightpos{3}) \in T}$.
Therefore, the interpretation $\cts{\subopg{\mathcal{A}}_{1}}$ is the following:
\begin{align*}
   \cts{\subopg{\g{A}}_1} =
   \setcomp{(T,\leftpos{1})}{(0, \rightpos{3}) \in T} \cup
   \setcomp{(T,\rightpos{1})}{(0, \rightpos{2}) \in T}\rlap{.}
\end{align*}

While $\subopg{\g{A}}_2$ is a slightly more complex game than $\subopg{\g{A}}_1$
(involving a generator, a parallel composition, and a sequential
composition), because the counit is basically an identity in the Int
construction, the parallal and sequential compositions amount to
bureaucratic index tracking, and we get that the interpretation is
that of the generator:
\begin{align*}
  \cts{\subopg{\g{A}}_2} =
  \setcomp{(T,i)}{i \in \set{\leftpos{2},\leftpos{3},\rightpos{2}},
  \exists j \in \set{\leftpos{1},\rightpos{1}}, (1, j) \in T}\rlap{.}
\end{align*}

The last subgame, $\subopg{\g{A}}_3$, is defined as the composition of two
generators (up to a unit).
Here, the computation involves a non-trivial composition in $\tsopg$,
for which we need a direct and explicit definition of $\delta^{!_M}$, which is a bit
more involved and can be found in~\cite{grellois2016semantics}, so we
skip here.
Intuitively, $\delta^{!_M}$ takes care of both the non-determinism and
of registering the highest priority seen along a path.
In our case, we get $\cts{\subopg{\g{A}}_3} = \setcomp{(T,\rightpos{1})}{(2,
\leftpos{2}) \in T} \circ \setcomp{(T,\leftpos{1})}{(1,\rightpos{1})
\in T}$ (for some fictious position $\rightpos{1}$ between the two
generators), so:
\begin{align*}
  \cts{\subopg{\g{A}}_3} = \setcomp{(T,1)}{(2,2) \in T}\rlap{.}
\end{align*}

Next, we turn our attention to the first sequential composition that
is not trivial in $\sopg$.
Indeed, up to this point, none of the computations required to compute
a trace (because the trace was over the tensor unit and therefore
trivial), but the interpretation of $\subopg{\g{A}}_2 \circ
\subopg{\g{A}}_1$ will.
By definition, if $f \colon (X_+,X_-) \to (Y_+,Y_-)$ and $g \colon
(Y_+,Y_-) \to (Z_+,Z_-)$ in $\sopg$, then
$g \circ f$ is computed (in $\tsopg$) as the trace over $Y_-$ of:
\begin{align*}
  ((\sigma_{Z_+,Y_-} \circ g) \parallel \id_{X_-}) \circ
  (\id_{Y_+} \parallel \sigma_{X_-,Z_-}) \circ ((f \circ \sigma_{Y_-,
  X_+}) \parallel \id_{Z_-})\rlap{.}
\end{align*}
Here, however, $X_- = I$, so $\sigma_{X_-,Z_-} = \id_{Z_-}$,
$\cts{\subopg{\g{A}}_1} \circ \sigma_{Y_-,X_+} = \id_{Y_-,X_+}$
($\subopg{\g{A}}_1$ is a swap), and $\sigma_{Z_+,Y_-} \circ
\cts{\subopg{\g{A}}_2} = \cts{\subopg{\g{A}}_2}$ ($\subopg{\g{A}}_2$
is ``symmetric'' in $\leftpos{1}$ and $\rightpos{1}$), we get that we
need to compute the trace of $\cts{\subopg{\g{A}}_2}$.

To this end, we compute the set of semantic run-trees $\psi$ that
correspond to it.
Here, $D = \set{\leftpos{2}} = \set{\leftpos{1}}$ (we need to resort
to such notation because we kept the names from
Fig.~\ref{fig:fullextendedExample}, but this does give the right
intuition that $2$ and $1$ get ``connected'' by the trace in
$\subopg{\g{A}}_2$), $A = \set{\leftpos{3}, \rightpos{2}}$, $B =
\set{\rightpos{1}}$, and $R = \cts{\subopg{\g{A}}_2}$.

The root of $\psi$ must be $(0,\leftpos{3})$ or $(0,\rightpos{2})$ by
Def.~\ref{def:srt}(\ref{def:srt:root}).
By Def.~\ref{def:srt}(\ref{def:srt:leaf}) and the value of
$\cts{\subopg{\g{A}}_2}$, only elements of the form $(p,\rightpos{1})$
can be leaves, and in particular, the root cannot be a leaf.
By Def.~\ref{def:srt}(\ref{def:srt:children}), all non-leaves must
have either $(1,\leftpos{1})$ or $(1,\rightpos{1})$ as one of its
children.
From this, we can deduce that, either $\psi$ has $(1,\rightpos{1})$ as
one of its leaves, or it has an infinite path with only parity $1$,
and therefore does not meet the parity condition.
This corresponds to the fact that, if $\eve$ wants to win, then they
cannot choose to go through the cycle infinitely often and have to go
to $\rightpos{1}$ at some point.
Moreover, any $\psi$ without leaves fails to meet the parity
condition, so (after renaming $\leftpos{3}$ to $\leftpos{1}$ because
of the composition)
\begin{align*}
  \cts{\subopg{\g{A}}_2 \circ \subopg{\g{A}}_1} = \setcomp{(T,i)}{i
  \in \set{\leftpos{1},\rightpos{2}}, (1,\rightpos{1}) \in T}\rlap{.}
\end{align*}

The computation follows the same step for the final composition.
We need to take the trace of some composite of $\cts{\subopg{\g{A}}_2
\circ \subopg{\g{A}}_1}$, $\cts{\subopg{\g{A}}_2}$, some swaps and
identities, namely $R = \setcomp{(T,i)}{i \in \set{\leftpos{1},
\leftpos{2}}, (2,2) \in T}$.
Here, $D = \set{\leftpos{2}} = \set{\rightpos{2}}$, $A =
\set{\leftpos{1}}$, and $B = \emptyset$.
Contrary to the example above, however, there is a semantic run-tree
with no leaves that meets the parity condition: for example the tree
whose root is $(0,\leftpos{1})$, and whose nodes all have $(2,
\leftpos{2})$ as their unique child (note that this does verify
condition (\ref{def:srt:children}) of Def.~\ref{def:srt} because
$\leftpos{2} = \rightpos{2}$).
This corresponds to the fact that this new cycle is winning for
$\eve$.
Finally, we get
\begin{align*}
  \cts{\g{A}} = \setcomp{(T,1)}{\text{true}} = \{(\emptyset,1)\}\rlap{,}
\end{align*}
which means that $\leftpos{1}$ is winning in $\g{A}$.

\section{Proof of Thm.~\ref{thm:winning-position-functor}}

First, we define a sequential composition of $\eve$-strategies.
Although we do not use the sequential composition of $\eve$-strategies in the proof of Thm.~\ref{thm:winning-position-functor} directly,
 we use several extended ideas in the proof of Thm.~\ref{thm:winning-position-functor}.
 Therefore, we show the most simple case of compositions of  $\eve$-strategies.
\begin{definition}[sequential composition of $\eve$-strategies]
  \label{def:seqCompOfStrategies}
  Let $\mathcal{A}:[l]\rightarrow [m]$ and $\mathcal{B}:[m]\rightarrow [n]$ be rightward open parity games,
   and $\tau^{\mathcal{A}}$ and $\tau^{\mathcal{B}}$ be $\eve$-strategies in $\mathcal{A}$ and $\mathcal{B}$, respectively.
   Then, we define an $\eve$-strategy $\tau^{\mathcal{B}}\circ \tau^{\mathcal{A}}$ in $\mathcal{B}\circ \mathcal{A}$ by the following:
   \begin{align*}
    & \tau^{\mathcal{B}}\circ \tau^{\mathcal{A}}((s_i)_{i\in [m']} ) = \begin{cases}
      \tau^{\mathcal{A}}((s_i)_{i\in [m']}) &\text{ if }s_i\in Q^{\mathcal{A}}, \text{ for all }i\in I,\\
      \tau^{\mathcal{B}}((s_{i+j-1})_{i\in [m'-j+1]}) &\text{ if there is a }j\in \nset{m'}\text{ and }s_i\in Q^{\mathcal{A}}, \text{ for all }i < j \text{, and } s_j\in Q^{\mathcal{B}}.\\
    \end{cases}
   \end{align*}
\end{definition}
The intuition of Def.~\ref{def:seqCompOfStrategies} is that for all plays $(s_i)_{i\in I}$ in $\mathcal{B}\circ \mathcal{A}$,
 we can divide $(s_i)_{i\in I}$ into the play in $\mathcal{A}$ and the one in $\mathcal{B}$, and we use the $\eve$-strategy in $\mathcal{A}$ 
 and $\eve$-strategy in $\mathcal{B}$ separately.
 
 For proving Thm.~\ref{thm:winning-position-functor}, Def.~\ref{def:seqCompOfStrategies} is too simple.
 We need to define a compositon between an $\eve$-strategy in $\mathcal{A}$ and a set of $\eve$-strategy in $\mathcal{B}$.
 For defining this, we need to define a subset $\Pe^{(k, j, k', p)}\subseteq \Pe$.
 
\begin{definition}[$\Pe^{(k, j, k', p)}$]
  Let $\mathcal{A}$ be a rightward open parity game from $\nset{m}$ to $\nset{n}$, $\mathcal{B}$ be a rightward open parity game from $\nset{n}$ to $\nset{l}$, $k\in \nset{m}$, $j\in \Nat$, $k'\in \nset{n}$, and $p\in \Nat_M$. We define a $\Pe^{(k, j, k', p)}(\subseteq \Pe)$ on $\mathcal{B}\circ \mathcal{A}$ by the following: 
  $(s_i)_{i\in [m']}\in \Pe^{(k, j, k', p)}$ iff 
  (i) $(k, s_1)\in E^{\mathcal{A}}$ (ii) $s_i\in Q^{\mathcal{A}}$ for all $i < j$ (iii) $(k', s_j)\in E^{\mathcal{B}}$
  (iv) the maximal priority on $(s_i)_{i\in [j-1]}$ is $p$.
\end{definition}

\begin{definition}[essentially same entry positions]
  Let $\mathcal{A}$ be a rightward open parity game from $\nset{m}$ to $\nset{n}$.
  Entry positions $i$ and $j\in\nset{m}$ are \emph{essentially same} iff there is a position $q$ such that $(i, q)\in E^{\mathcal{A}}$, 
  then $(j, q)\in E^{\mathcal{A}}$.
\end{definition}

\begin{definition}[sequential composition between an $\eve$-strategy and a set of $\eve$-strategies]
  \label{def:compOfStrategies}
  Let $\mathcal{A}:[l]\rightarrow [m]$ and $\mathcal{B}:[m]\rightarrow [n]$ be rightward open parity games,
   $\tau^{\mathcal{A}}$ be an $\eve$-strategy in $\mathcal{A}$, $t\in [m]$, and $T \defeq \{\tau^{\mathcal{B}}_{(t, p, s)}\mid (p, s)\in\destr{(t, \tau^{\mathcal{A}})} \}$
   be a set of $\eve$-strategies in $\mathcal{B}$. 
   In addition, we suppose that $T$ satisfies the following condition:
   \begin{align*}
     & \text{for any }(p, s) \text{ and } (p, s')\text{ such that }s \text{ and }s'\text{ are essentially same, then }\tau^{\mathcal{B}}_{(t, p, s)} = \tau^{\mathcal{B}}_{(t, p, s')}.
   \end{align*}
   Then, we define an $\eve$-strategy $\tau^{\mathcal{B}\circ \mathcal{A}}$ in $\mathcal{B}\circ \mathcal{A}$ by combining these strategies by the following:
   \begin{align*}
    & \tau^{\mathcal{B}\circ \mathcal{A}}((s_i)_{i\in [m']} ) = \begin{cases}
      \tau^{\mathcal{A}}((s_i)_{i\in [m']}) &\text{ if }s_i\in Q^{\mathcal{A}}, \text{ for all }i\in I,\\
      \tau^{\mathcal{B}}_{(t, p, k')}((s_{i+j-1})_{i\in [m'-j+1]}) &\text{ if there is a }k'\in \nset{n} \text{ such that }(s_i)_{i\in I}\in \Pe^{(t, j, k', p)}\\
      \mathrm{undefined} & \text{ otherwise }
    \end{cases}
   \end{align*}
\end{definition}

Well-definedness of Def.~\ref{def:compOfStrategies} is satisfied by the condition about essentially same entry positions.

\begin{definition}
  Let $\mathcal{A}:[l]\rightarrow [m]$ be rightward open parity games, $i, i'\in \nset{m}$, and $\tau^{\mathcal{A}},\ \tau'^{\mathcal{A}}$ are $\eve$-strategies.
  Then, $\tau^{\mathcal{A}}$ is \emph{stronger} than $\tau'^{\mathcal{A}}$ with respect to $i$ and $i'$ if $\destr{(i, \tau^{\mathcal{A}})} \subseteq \destr{(i', \tau'^{\mathcal{A}})}$.
\end{definition}
\begin{proof}(Proof of Thm.~\ref{thm:winning-position-functor})
    $\tots(\id_m) = \id_m$ is trivial by definition.

    Let $\mathcal{A}:[l]\rightarrow [m]$ and $\mathcal{B}:[m]\rightarrow [n]$ be rightward open parity games.
    For any $(T, i)\in \tots(\mathcal{B})\circ \tots(\mathcal{A})$, there are $(T', i)\in \tots(\mathcal{A})$ and $\{(S_{(p, j)}, j)\mid (p, j)\in T'\}\in \Pfin(\tots(\mathcal{B}))$
     such that $\{(S_{(p, j)}, j)\mid (p, j)\in T'\}\circ (T', i) = (T, i)$ by definition where $\circ$ is defined in the sequential composition of $\tsopg$. Then, there are an $\eve$-strategy $\tau^{\mathcal{A}}$ 
     in $\mathcal{A}$ and sets of $\eve$-strategies $X\defeq \{\tau^{\mathcal{B}}_{(p,j)}\mid (p, j)\in T' \}$ in $\mathcal{B}$ 
     such that $\destr{(i, \tau^{\mathcal{A}})} \subseteq T'$ and $\destr{(j, \tau^{\mathcal{B}}_{(p,j)})} \subseteq S_{(p,j)}$ for each j, respectively.
     If there are two different $\tau^{\mathcal{B}}_{(p,j)}$ and $\tau^{\mathcal{B}}_{(p,j')}$, and $j$ and $j'$ are essentially same,
     we remains the stronger strategy with respect to $j$ and $j'$ and replace the other strategy with the stronger one.
     If there are two different $\tau^{\mathcal{B}}_{(p,j)}$ and $\tau^{\mathcal{B}}_{(p,j')}$, and $j$ and $j'$ are essentially same,
     and $\tau^{\mathcal{B}}_{(p,j)}$ is not stronger than $\tau^{\mathcal{B}}_{(p,j')}$ and vice versa, then we replace $\tau^{\mathcal{B}}_{(p,j')}$ with $\tau^{\mathcal{B}}_{(p,j)}$.
     Then, by combining these strategies, there is an $\eve$-strategy $\tau^{\mathcal{B}\circ \mathcal{A}}$ in $\mathcal{B}\circ \mathcal{A}$
      such that $\destr{(i, \tau^{\mathcal{B}\circ \mathcal{A}})} \subseteq T$ by Def.~\ref{def:compOfStrategies}.
      This means that $\tots(\mathcal{B})\circ \tots(\mathcal{A})\subseteq \tots(\mathcal{B}\circ\mathcal{A})$.

    For the converse direction, for any $(T, i)\in  \tots(\mathcal{B}\circ\mathcal{A})$,
     there is an $\eve$-strategy $\tau^{\mathcal{B}\circ\mathcal{A}}$ such that $\destr{(i, \tau^{\mathcal{B}\circ\mathcal{A}})} \subseteq T$.
    First, there is an $\eve$-strategy $\tau^{\mathcal{A}}$ in $\mathcal{A}$ that is exactly same as $\tau^{\mathcal{B}\circ\mathcal{A}}$ in $\mathcal{A}$.
    Let $\destr{(i, \tau^{\mathcal{A}})} = S$. 
    Then, there is a set of $\eve$-strategies $\{\tau^{\mathcal{B}}_s\mid s\in S\}$ in $\mathcal{B}$ such that 
    we can get an $\eve$-strategy $\tau'^{\mathcal{B}\circ \mathcal{A}}$ in $\mathcal{B}\circ\mathcal{A}$ by combining these strategies, and $\destr{(i, \tau'^{\mathcal{B}\circ \mathcal{A}})} \subseteq \destr{(i, \tau^{\mathcal{B}\circ \mathcal{A}})}$.
    In general, $\destr{(i, \tau'^{\mathcal{B}\circ \mathcal{A}})} = \destr{(i, \tau^{\mathcal{B}\circ \mathcal{A}})}$ does not hold 
    because the set $S$ does not have information about the internal positions of $\mathcal{A}$, therefore, we can not choose $\eve$-strategies for each internal moves, but the result of plays.
    But, $\destr{(i, \tau'^{\mathcal{B}\circ \mathcal{A}})} \subseteq \destr{(i, \tau^{\mathcal{B}\circ \mathcal{A}})}$ is enouth to prove $\tots(\mathcal{B}\circ\mathcal{A})\subseteq \tots(\mathcal{B})\circ \tots(\mathcal{A})$ by upward-closed property.

    Preserving the trace operator can be proved in the similar way of the proof of preserving the sequential composition.
    More precisely, let $\mathcal{A}:[l+m]\rightarrow [l+n]$ be rightward open parity games.
    The goal is proving $\trsopg{l}{m}{n}(\tots(\mathcal{A})) = \tots(\tropg{l}{m}{n}(\mathcal{A}))$.
    First, we prove $\trsopg{l}{m}{n}(\tots(\mathcal{A}))\subseteq\tots(\tropg{l}{m}{n}(\mathcal{A}))$.
    For each semantic run-tree $\psi\in \tcomp{A}{B}{D}{R}{a}$ of $\tots(\mathcal{A})$ and the node of it with the label $(p, x)$,
    if $x\in |D| + |A|$ and $X$ is the set of all the labels of the children of $x$, we can get an $\eve$-strategy $\tau^{\eve}_{\mathcal{A}}$ such that 
    $\destr{(x, \tau^{\eve}_{\mathcal{A}})} \subseteq X$ by definition.
    Therefore, we can get an $\eve$-strategy $\tau^{\eve}_{\tropg{l}{m}{n}(\mathcal{A})}$ in $\tropg{l}{m}{n}(\mathcal{A})$ 
    such that $\destr{(a, \tau^{\eve}_{\tropg{l}{m}{n}(\mathcal{A})})} \subseteq \leaves{\psi}$ by combining these strategies
    in the similar way of the proof of preserving the sequential composition if there are edges from an exit position in $\nset{l}$ to the corresponding entry position of $\mathcal{A}$ in $\tropg{l}{m}{n}(\mathcal{A})$.
    However, in Def.\ref{def:traceintopg}, some open ends are erased. Therefore, we need to check whether the Def.\ref{def:traceintopg}
    ensures connecting nodes if the erased exit positions are accesible from an entry position which is not erased.
    In fact,  the uniqueness condition in Def.~\ref{def:openParityGame} (iii) ensures that such undesirable situation never occurs.
    Therefore, we can conclude that $\trsopg{l}{m}{n}(\tots(\mathcal{A}))\subseteq\tots(\tropg{l}{m}{n}(\mathcal{A}))$.
    Conversely, for each $\eve$-strategy $\tau'^{\eve}_{\tropg{l}{m}{n}(\mathcal{A})}$, we can get a semantic run-tree $\psi'\in \tcomp{A}{B}{D}{R}{a}$ 
    such that $\destr{(a, \tau'^{\eve}_{\tropg{l}{m}{n}(\mathcal{A})})} = \leaves{\psi'}$ in the same way by the uniqueness condition  in Def.~\ref{def:openParityGame} (iii).
\end{proof}

\ifdraft
\else 
\fi
\else 
\fi
\nocite{*}
\end{document}